\newcommand{\sL}{\mathcal L}
\newcommand{\sM}{\mathcal M}
\newcommand{\E}{\mathbb E}
\newcommand{\sE}{\mathcal E}
\newcommand{\dummyq}{q}
\numberwithin{equation}{section}
\numberwithin{figure}{section}
\theoremstyle{plain}
\newtheorem{thm}{\protect\theoremname}
  \theoremstyle{remark}
  \newtheorem*{rem*}{\protect\remarkname}
  \theoremstyle{plain}
  \newtheorem{lem}[thm]{\protect\lemmaname}
  \theoremstyle{plain}
  \newtheorem{prop}[thm]{\protect\propositionname}
  \theoremstyle{plain}
   \theoremstyle{plain}
  \newtheorem*{thm*}{\protect\theoremname}
  \newtheorem{res}{\protect\resultname}
  \theoremstyle{plain}
    \theoremstyle{remark}
  \newtheorem{rem}[thm]{\protect\remarkname}
\theoremstyle{plain}
\newtheorem{defn}[thm]{\protect\definitionname}
  \providecommand{\corollaryname}{Corollary}
  \providecommand{\lemmaname}{Lemma}
    \providecommand{\propositionname}{Proposition}
  \providecommand{\remarkname}{Remark}
  \providecommand{\theoremname}{Theorem}
\providecommand{\theoremname}{Theorem}
\providecommand{\resultname}{Result}
\providecommand{\definitionname}{Definition}
\author{David Hobson \and Yeqi Zhu}
\thanks{Department of Statistics, University of Warwick, Coventry, CV4 7AL, UK. D.Hobson@warwick.ac.uk, Yeqi.Zhu@warwick.ac.uk}
\begin{document}

\title{\textbf{Optimal consumption and sale strategies for a risk averse agent}}
\date{\today}

\begin{abstract}
In this article we consider a special case of an optimal consumption/optimal portfolio problem first
studied by Constantinides and Magill and by Davis and Norman,
in which an agent
with constant relative risk aversion seeks to maximise expected discounted utility
of consumption over the infinite horizon, in a model comprising a risk-free asset and a risky asset
with proportional transaction costs. The special case that we consider is that the cost of purchases
of the risky asset is infinite, or equivalently the risky asset can only be sold and not bought.

In this special setting new solution techniques are available,
and we can make considerable progress towards an analytical solution.
This means we are able to consider the comparative statics of the problem.
There are some surprising conclusions, such as consumption rates are not monotone
increasing in the return of the asset, nor are the certainty equivalent values
of the risky positions monotone in the risk aversion.

\smallskip
\noindent \textbf{Key words:} Optimal consumption/investment problem, transaction costs, sale strategy, reflecting diffusion, local time.

\smallskip
\noindent \textbf{AMS subject classifications:} 91G10, 93E20

\end{abstract}

\maketitle

\section{Introduction}

This article is concerned with the optimal behaviour of an agent whose goal is
to maximise the expected discounted utility of consumption, and who finances consumption
from a combination of initial wealth and from the sale of an initial
endowment of an infinitely divisible security. Her actions are
to choose an optimal consumption strategy and an optimal holding or
portfolio of a risky security, under the restriction that the risky
asset can only be sold, and purchases are not permitted. As such this
problem is a modification of the Merton~\cite{Merton} optimal
consumption/optimal portfolio problem.

Merton~\cite{Merton} considered portfolio optimisation and consumption in a
continuous-time stochastic model, with an investment opportunity set comprising a
risk-free bond and a risky asset with constant return and volatility. Merton
chose to study these issues by first understanding the behaviour of a
single agent acting as a price-taker. Under an assumption of constant relative risk aversion (CRRA) he obtained a closed form solution to the
problem and the optimal strategy in his model consists of trading
continuously in order to keep the fraction of wealth invested in the
risky security equal to a constant.

Merton's model was subsequently generalised
to an incomplete financial market setting where perfect hedging is no
longer possible. Constantinides and Magill~\cite{ConMagill} (see also Constantinides~\cite{Constantinides}) introduced proportional
transaction costs to the model and considered an investor whose aim is
to maximise the expected utility of consumption over an infinite horizon
under power utility. They conjectured the existence of a `no-transaction' region, and that it is optimal to keep the proportion
of wealth invested in the risky asset within some interval.
Subsequently Davis and Norman~\cite{Davis} gave a precise formulation. The Davis and Norman~\cite{Davis} analysis of
the problem is a landmark in the study of transaction
cost problems.

In this article we consider a special case of the Constantanides-Magill-Davis-Norman model in which the transaction costs associated with purchases
of the risky asset are infinite. Effectively purchases are disallowed,
and we may think of an agent who is endowed with a quantity of an asset which she may sell, but which she may not trade dynamically. There are at least two main reasons for considering this special case.
Firstly, there are often situations whereby agents are endowed with units of assets which they may sell but may not repurchase,
whether for legal reasons or because of other trading restrictions. In this sense the problem is interesting in its own right, and as we show,
the model has some counter-intuitive features. Secondly, relative to the Constantanides-Magill-Davis-Norman model new solution techniques
become available which we are able to exploit to give a more complete solution to the problem. With this more complete solution we can investigate the comparative statics of the problem.

\section{Related literature and main conclusions}

\subsection{Related literature}
Davis and Norman~\cite{Davis} were the first to study the Merton model with proportional transaction costs in a mathematically precise formulation.
They showed that under optimal behaviour the no transaction region
is a wedge containing the Merton line and that the optimal buying and
selling
strategies are local times at boundaries chosen to keep the process
inside the
wedge. In the transaction region, transactions take place at infinite
speed and except for the initial transaction, all transactions take
place at the boundaries.
They obtained their results by writing down the (non-linear, second order) Hamilton-Jacobi-Bellman (HJB) equation with free boundary conditions and then by a series of transformations reducing the
problem to one of solving a system of first order ordinary differential
equations. Motivated by Davis and Norman's work, Shreve and
Soner~\cite{SS} studied the same problem but with an approach via
viscosity solutions. They recover the results from Davis and Norman~\cite{Davis}
without imposing all of the conditions of \cite{Davis}.

In related work, Duffie and Sun~\cite{duffie}, Liu~\cite{liu1} and
Korn~\cite{korn} study the problem when there are fixed (as opposed to proportional) transaction
costs. Liu used the HJB approach, deriving an ordinary differential
equation to characterise the value function and solving it numerically.
He found that if there is only a fixed transaction cost, the optimal
trading strategy is to trade to a certain target amount as soon as the
fraction of wealth in stock goes outside a certain range.  Korn~\cite{korn} solved a
similar problem by an impulse control and optimal stopping approach. He
proved the Bellman principle and solved for the reward function by an
iteration procedure under the assumption that the value function is
finite.

Whilst financial assets can be actively traded, in other
contexts dynamic trading is not possible. Svensson and
Werner~\cite{svensson} were the first to consider the problem of
pricing non-traded assets in Merton's model. In the situations we
model, an agent endowed with units of an asset can sell the asset,
but may not make purchases. In the simplest case the agent is
endowed with a single unit of an indivisible asset which cannot
be traded and the problem reduces to an optimal sale problem for
an asset. Evans et al~\cite{evans}, see also Henderson and
Hobson~\cite{HHaap,HHms}, consider an agent with power utility
function who owns an indivisible, non-traded asset and wishes to
choose the optimal time to sell the asset in order to maximise
the expected utility of terminal wealth in an incomplete market.
Their results show that the optimal criterion for the sale of the
asset is to sell the first time the value of the non-traded asset
exceeds a certain proportion of the agent's trading wealth and
this critical threshold is governed by a transcendental equation.
Henderson and Hobson~\cite{david real} also study the problem in
the context of real options, where the investor has a claim on units
of non-traded assets correlated with the risky asset.

\subsection{Informal statement of the main conclusions}
This paper considers an individual who is endowed with cash and units of
an infinitely divisible asset, which can be sold but not dynamically
traded, and who aims to maximise the expected discounted utility of
consumption over an infinite horizon. (The case of an indivisible asset
is considered by Henderson and Hobson~\cite{HHforthcoming}.) The problem
facing the individual is to choose the optimal strategy for the
liquidation of the endowed asset portfolio, and an optimal consumption
process chosen to keep cash wealth non-negative. The price process of
the endowed asset is assumed to follow an exponential Brownian motion
and the agent is assumed to have constant relative risk aversion.

The constraint that the asset can be sold but not bought is equivalent
to an assumption of no transaction costs on sales, and an infinite
transaction cost on purchases. (The assumption of no transaction cost
on sales can easily be relaxed to a proportional transaction cost on sales by working with a process representing
the post-transaction-cost price rather than the pre-cost price.) In
this sense the problem we consider can be interpreted as a special
case of the Davis-Norman problem for Merton's model with transaction
costs in which the transaction cost associated with buying the endowed
asset is infinite.

Our main results are of three types. Firstly we are able to
completely classify the
different types of optimal strategies and the parameter ranges over which they apply. Secondly, we can simplify the
problem of solving for the value function, especially when compared
with direct approaches for solving the HJB equation via smooth fit.
Thirdly, we can perform
comparative statics on quantities of interest, and uncover some
surprising implications of the model.

Some of our main results are as follows.

\begin{res}
If the endowed asset is depreciating over time then the investor should
sell immediately. Conversely, if the mean return is too strong and the
coefficient of relative risk aversion is less than unity, then the
problem is ill-posed, and provided the initial holding of the endowed
asset is positive the value function is infinite.

Otherwise, there are two cases. For small and positive mean return there
exists a finite critical ratio and the optimal sale strategy for
the endowed asset is to sell just enough to keep the ratio of
wealth held in the endowed asset to cash wealth below this critical
ratio. For larger returns it is optimal to first consume all cash
wealth, and once this cash wealth is exhausted to finance
consumption through sales of the endowed asset.
\end{res}

\begin{res}
In the case where the critical ratio is finite then it is given via the
solution of a first crossing problem for a first-order initial-value
ordinary differential equation (ODE). Other quantities of interest can be expressed in terms of the solution of this ODE. In the case where the critical ratio is infinite, the value function can again be expressed in terms of the solution of a first-order ODE.
\end{res}

\begin{res}
We give three sample conclusions from the comparative statics:

\begin{enumerate}
\item The optimal consumption process is not monotone in the drift of the
endowed asset. Although we might expect that the higher the drift, the
more the agent would consume, sometimes the agent's consumption is a
decreasing function of the drift.

\item
The certainty equivalent value of the
holdings of the risky asset is not monotone in risk aversion. For small
quantities of endowed asset, the certainty equivalent value is
increasing in
risk aversion, while for larger quantities, it is decreasing.

\item
The cost of illiquidity (see Definition~\ref{def:cl} below),
representing the loss in welfare of the agent when compared with an
otherwise identical agent who can buy and sell the risky asset with
zero transaction costs, is a U-shaped function of the size of the
endowment in the risky asset.
\end{enumerate}
\end{res}

We work with bond as num\'{e}raire (so that interest rate effects can be ignored) and
then the relevant parameters are the discount parameter and the
relative risk aversion of the agent, and the drift and volatility of
the price process of the risky asset. In the non-degenerate parameter
cases the agent faces a conflict between the incentive to keep a large
holding in the risky asset (since it has a positive return) and the
incentive to sell in order to minimise risk exposure. From the
homothetic property we expect decisions to depend on the ratio between
the value of the holdings of risky asset and cash wealth.

The HJB equation for our problem is second
order, non-linear and subject to value matching and smooth fit of the
first and second derivatives at an unknown free-boundary. One of our key
contributions is to show that the problem can be reduced to a crossing
problem for the solution of a first order ODE. This big simplification
is useful both when considering analytical properties of the solution,
and when trying to construct a solution numerically. We classify the
parameter combinations which lead to different types of solutions and
provide a
thorough analysis of the existence and finiteness of the critical ratio,
and the corresponding optimal strategies.
In the case of a finite and positive critical ratio we show how the
solution to the problem can be characterised by an autonomous
one-dimensional diffusion
process with reflection and its local time.

The structure of the paper is as follows. Firstly, we give a precise
description of the model and then a statement of the main results. The
HJB equation for the problem is second order and non-linear, but
a change of variable makes the equation homogeneous and then a change
of dependent variable reduces the order. Hence the form of the
solution is governed by the solution of a first crossing problem of an
initial value problem for a first order ODE. Even though closed-form
solutions of this ODE are not available we can provide a complete
characterisation of when the first crossing problem has a solution,
and
given
a solution of the first crossing problem we show how to construct the (candidate)
value function. There are two types of degenerate solution (in one
case it is always optimal to liquidate all units of the risky asset
immediately, and in the other the value function is infinite and the
problem is ill-posed). In addition there are two different types of
non-degenerate behaviour (in one case the agent sells units of asset in order to keep the proportion of wealth held in the risky asset below a certain level, and in the other the agent exhausts all her cash reserves before selling any units of the risky asset.) We give proofs of all the main results,
although technical details of the verification arguments are sometimes
relegated to the appendices.

Once the analysis of the problem is complete we are in a position to
consider the comparative statics of the problem.  We consider the
comparative statics of the critical ratio, the value function, the
optimal consumption, the certainty equivalent value of the portfolio
and the cost of illiquidity.

\section{The model and main results}
We work on a filtered probability space
$\left(\Omega,\mathcal{F},\mathbb{P},
\left(\mathcal{F}_{t}\right)_{t\geq0}\right)$
such that the filtration satisfies the usual conditions and is generated
by a standard Brownian motion $B=\left(B_{t}\right)_{t\geq0}$. The
price process $Y=\left(Y_{t}\right)_{t\geq0}$ of the endowed asset
is assumed to be given by
\begin{equation}
Y_{t}=y_{0}\exp\left[\left(\alpha-\frac{\eta^{2}}{2}\right)t
+\eta B_{t}\right],\label{eq:96}
\end{equation}
where $\alpha$ and $\eta>0$ are the constant mean return and volatility
of the non-traded asset, and $y_{0}$ is the initial price.

Let $C=\left(C_{t}\right)_{t\geq0}$ denote the consumption rate of
the individual and let $\Theta=\left(\Theta_{t}\right)_{t \geq 0}$
denote the number of units of the endowed asset held by the
investor. The consumption rate is required to be progressively measurable and
non-negative, and the portfolio process $\Theta$ is progressively measurable,
right-continuous with left limits (RCLL) and non-increasing to
reflect the fact that the non-traded asset is only allowed for
sale. We assume the initial number of shares held by the investor
is $\theta_{0}$. Since we allow for an initial
transaction at time 0 we may have $\Theta_0 < \theta_0$. We write
$\Theta_{0-} = \theta_0$. This is consistent with our convention
that $\Theta$ is right-continuous.

We denote by $X=\left(X_{t}\right)_{t\geq0}$ the wealth process of
the individual, and suppose that the initial wealth is $x_{0}$ where $x_0 \geq 0$. Provided
the only changes to wealth occur from either consumption or from the
sale of the endowed asset, $X$ evolves according to
\begin{equation}
dX_{t}=-C_{t}dt-Y_{t}d\Theta_{t}, \label{eq:95}
\end{equation}
subject to $X_{0-} = x_0$, and $X_0 = x_0 + y_0(\theta_0 -
\Theta_0)$.
We say a consumption/sale strategy pair is admissible if the
components satisfy the requirements listed above and if the
resulting cash wealth process $X$ is non-negative for all time. Let
$\mathcal{A}\left(x_{0},y_{0},\theta_{0}\right)$ denote the set
of
admissible strategies for initial setup
$\left(X_{0-}=x_{0},Y_{0}=y_{0},\Theta_{0-}=\theta_{0}\right)$.

The objective of the agent is to maximise over admissible strategies
the discounted expected utility of consumption over the infinite horizon,
where the discount factor is $\beta$ and the utility function of the agent is assumed
to be CRRA with relative risk aversion $R \in (0,\infty) \setminus {1}$.
In particular, the goal is to find
\begin{equation}
\underset{\left(C,\Theta\right)\in\mathcal{A}(x_{0},y_{0},\theta_{0})}
{\sup}\mathbb{E}
\left[\int_{0}^{\infty}e^{-\beta t}\frac{C_{t}^{1-R}}{1-R}dt\right].
\label{eq:100}
\end{equation}

Since the set-up has a Markovian structure, we expect
the value function, optimal consumption and optimal sale strategy to be
functions of the current wealth and endowment of the agent and of the
price of the risky asset.
Let $V=V(x,y,\theta,t)$ be the forward starting value function
for the problem
so that
\begin{equation} V(x,y,\theta,t) =
\underset{\left(C,\Theta\right)\in\mathcal{A}(x,y,\theta,t)}
{\sup}\mathbb{E}\left[ \left. \int_{t}^{\infty}e^{-\beta
s}\frac{C_{s}^{1-R}}{1-R}ds \right| X_{t-} = x, Y_t=y, \Theta_{t-} =
\theta\right].
\label{eq:100b}
\end{equation}
Here the space of forward starting, admissible strategies
$\mathcal{A}(x,y,\theta,t)$ is such that $C = (C_s)_{s \geq t}$ is
a non-negative progressively measurable process, $\Theta = (\Theta_s)_{s \geq t}$ is
a right-continuous, decreasing and progressively measurable process and satisfies
$\Theta_t - (\Delta \Theta)_t = \theta$, and $X$ given by
$X_s = x - \int_t^s C_u du - \int_{[t,s]} Y_u d \Theta_u$ is
non-negative.

Define the certainty equivalent value (see, for example,~\cite{uip})
$p= p(x,y,\theta,t)$ of the holdings
of
the risky asset to be the solution to
\begin{equation}
V(x+p, y, 0, t) = V(x,y,\theta,t).
\end{equation}
In fact, by the scalings of the problem it will turn out that $p$ is
independent of time (and henceforth we write $p=p(x,y,\theta)$), and depends on the price $y$ of the risky asset and
the quantity $\theta$ of the holdings in the risky asset, only through
the product $y \theta$.

Our goal is to characterise the value function, the optimal consumption
and sale strategies, and the certainty equivalent price $p$.

The key to the form of the solution to the problem is contained in the
following proposition, which concerns the solution of an ODE on $[0,1)$
and which is proved in
Appendix~\ref{app:propertiesofna}.
There is a one-to-one correspondence between the four cases in the
proposition and the four types of solution to the optimal sale problem.

Let $\epsilon={\alpha}/{\beta}$
and $\delta^{2}={\eta^{2}}/{\beta}$.

\begin{prop}
\label{prop:crossings}

For $\dummyq \in [0,1]$ define $m(\dummyq) =
1-\epsilon\left(1-R\right)\dummyq+\frac{\delta^{2}}{2}R\left(1-R\right)\dummyq^{2}$
and $\ell(\dummyq) = 1 + \left( \frac{\delta^2}{2}-\epsilon \right) (1-R)\dummyq -
\frac{\delta^2}{2}(1-R)^2 \dummyq^2 =
m\left(\dummyq\right)+\dummyq\left(1-\dummyq\right) \frac{\delta^{2}}{2}\left(1-R\right)$. Let
$n = n(\dummyq)$ solve
\begin{equation}
\label{eqn:node}
\frac{n' \left(\dummyq\right)}{n\left(\dummyq\right)}
=\frac{1-R}{R\left(1-\dummyq\right)}
-\frac{\delta^{2}}{2}\frac{\left(1-R\right)^{2}}{R}
\frac{\dummyq}{\ell \left(\dummyq\right)-n\left(\dummyq\right)}
\end{equation}
subject to $n(0)=1$ and $\frac{n'(0)}{1 - R}< \frac{\ell'(0)}{1 - R} = \frac{\delta^2}{2} - \epsilon$.
Suppose that if $n$ hits zero, then $0$ is absorbing
for $n$. See Figure~\ref{fig:n}.

For $R<1$, let $\dummyq^* = \inf \{ \dummyq> 0 : n(\dummyq) \leq m(\dummyq) \}$.
For $R>1$, let $\dummyq^* = \inf \{ \dummyq> 0 : n(\dummyq) \geq m(\dummyq) \}$.
For $j \in \{\ell,m,n \}$
let $\dummyq_j = \inf \{ \dummyq> 0 : j(\dummyq) = 0 \} \wedge 1$.

\begin{enumerate}
\item Suppose $\epsilon \leq 0$. Then $\dummyq^* = 0$.

\item Suppose $0 < \epsilon < \delta^2 R$ and if $R<1$, suppose in addition that $\epsilon < \frac{\delta^2}{2} R + \frac{1}{1-R}$. Then $0< \dummyq^* < 1$.

\item Suppose $\epsilon \geq \delta^2 R$ and if $R<1$, $\epsilon < \frac{\delta^2}{2} R + \frac{1}{1-R}$. Then $\dummyq^* = 1 = \dummyq_\ell = \dummyq_n = \dummyq_m$.

\item Suppose $R<1$ and $\epsilon > \frac{\delta^2}{2} R + \frac{1}{1-R}$. Then
$\dummyq_m< \dummyq_n=\dummyq_\ell<1$.
If $R<1$, $\epsilon = \frac{\delta^2}{2} R +
\frac{1}{1-R}$ and $\epsilon < \delta^2 R$ then $\dummyq_m< \dummyq_n=\dummyq_\ell=1$.
If $R<1$, $\epsilon = \frac{\delta^2}{2} R +
\frac{1}{1-R}$ and $\epsilon \geq \delta^2 R$ then $\dummyq^* = 1 = \dummyq_\ell = \dummyq_n = \dummyq_m$.
\end{enumerate}
\end{prop}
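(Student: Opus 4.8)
The plan is to analyse the ODE \eqref{eqn:node} as a planar phase-portrait problem in the $(\dummyq, n)$-plane, tracking $n$ relative to the two reference parabolas $m$ and $\ell$, and to read off the four cases from the relative ordering of the crossing points $\dummyq^*$, $\dummyq_m$, $\dummyq_n$, $\dummyq_\ell$. First I would record the elementary algebra of the quadratics: $m(0)=\ell(0)=1$, the sign and location of their roots $\dummyq_m$, $\dummyq_\ell$ in terms of $\epsilon$, $\delta^2$, $R$ (noting $\ell-m = \dummyq(1-\dummyq)\frac{\delta^2}{2}(1-R)$, so $\ell\geq m$ on $[0,1]$ when $R<1$ and $\ell\leq m$ when $R>1$, with equality at the endpoints), and in particular identify the threshold $\epsilon = \frac{\delta^2}{2}R + \frac{1}{1-R}$ as exactly the condition under which $\dummyq_m$ falls to $1$ from above — i.e. $m(1)=0$. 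I would also note $\ell'(0)/(1-R) = \frac{\delta^2}{2}-\epsilon$, $m'(0)/(1-R) = -\epsilon$, and $n'(0)/(1-R) < \frac{\delta^2}{2}-\epsilon$ from the initial condition, so that near $0$ the curve $n$ starts strictly below $\ell$; combined with $\ell-m = O(\dummyq)$ near $0$, the sign of $\epsilon$ controls whether $n$ starts above or below $m$ at first order, which already disposes of case (1): if $\epsilon\leq 0$ then (for $R<1$) $m'(0)\geq \ell'(0) > n'(0)$ forces $n<m$ immediately, so $\dummyq^*=0$, and the $R>1$ case is the mirror image.

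For the non-degenerate case (2), the key is to show the vector field pushes $n$ toward $m$ and that the crossing happens strictly before $\dummyq=1$. I would argue that on the region between $n=0$ and $n=\ell$ the drift term $-\frac{\delta^2}{2}\frac{(1-R)^2}{R}\frac{\dummyq}{\ell-n}$ is well-defined and of a fixed sign, so $n$ cannot cross $\ell$ (it would need $\ell-n\to 0^+$, but then $n'\to\mp\infty$ repels it — one checks the sign against $\ell'$), hence $n$ stays trapped below $\ell$; then I compare $n$ with $m$ by examining $(n-m)'$ at a putative first contact point $\dummyq$ where $n(\dummyq)=m(\dummyq)$, using $\ell-n = \ell-m = \dummyq(1-\dummyq)\frac{\delta^2}{2}(1-R)$ there, which collapses \eqref{eqn:node} to an explicit expression; a direct computation of $n'(\dummyq)-m'(\dummyq)$ at such a point should show it has the sign that makes the inequality $n\leq m$ (resp.\ $\geq$) persist, i.e.\ $\dummyq^*$ is a genuine (and the only) crossing, and the hypothesis $\epsilon<\delta^2 R$ is what guarantees $m(\dummyq^*)>0$, keeping $\dummyq^*<1$ and away from $\dummyq_m$. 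The extra hypothesis $\epsilon < \frac{\delta^2}{2}R + \frac{1}{1-R}$ when $R<1$ is exactly what keeps $\dummyq_m>1$ (equivalently $m>0$ on $[0,1]$), so that $n$ does not get absorbed at $0$ before it can meet $m$.

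Case (3) is the boundary regime where $m$, $\ell$, $n$ all remain positive on $[0,1)$ but the crossing is pushed out to $\dummyq=1$: here $\epsilon\geq\delta^2 R$ makes $m$ decreasing fast enough that, combined with $\ell\geq m$ and the trapping argument, $n$ is squeezed between $m$ and $\ell$ with all three meeting only at $\dummyq=1$, where $\ell(1)=m(1)$; I would verify $\ell(1)=m(1) = 1-\epsilon(1-R)+\frac{\delta^2}{2}R(1-R) \geq 0$ precisely under the stated hypotheses (for $R<1$ this is $\epsilon \leq \frac{\delta^2}{2}R + \frac{1}{1-R}$; for $R>1$ the inequality flips but $m(1)\geq 0$ still holds since $1-R<0$ and $\epsilon\geq\delta^2 R$ forces the bracket $\geq 0$), so $\dummyq_\ell=\dummyq_m=1$ and then $\dummyq_n=1$ by the squeezing, and $\dummyq^*=1$ since $n<m$ (for $R<1$, resp.\ $>$) is never attained strictly inside. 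Finally, case (4) is the ill-posed regime $\epsilon>\frac{\delta^2}{2}R+\frac{1}{1-R}$ (only possible for $R<1$): now $m(1)<0$, so $\dummyq_m<1$; I would show $n$ stays pinned to $\ell$ from below all the way until $\ell$ itself hits zero — the drift in \eqref{eqn:node} becomes singular and repulsive only at $n=\ell$, and once $m$ has gone negative there is no longer a barrier $m$ below $\ell$ to catch $n$, so $n$ follows $\ell$ down to $\dummyq_\ell$ and is absorbed there, giving $\dummyq_n=\dummyq_\ell<\dummyq^{**}$... more precisely $\dummyq_m<\dummyq_n=\dummyq_\ell$, with $\dummyq_\ell<1$ unless $\epsilon=\frac{\delta^2}{2}R+\frac{1}{1-R}$ exactly, in which case $\ell(1)=m(1)=0$ so $\dummyq_\ell=1$; the sub-case $\epsilon = \frac{\delta^2}{2}R+\frac{1}{1-R}$ with $\epsilon\geq\delta^2R$ reverts to the case-(3) conclusion and should be checked to be consistent (it forces $\frac{1}{1-R}\leq\frac{\delta^2}{2}R$, a genuine parameter region).

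The main obstacle I expect is the comparison-at-contact argument: showing rigorously that $n$ cannot cross $\ell$ (the ODE is singular exactly on that curve, so one needs a careful one-sided blow-up / barrier analysis rather than a clean Gronwall estimate), and then showing that the crossing of $n$ with $m$ is transversal and unique in case (2). A secondary subtlety is handling the absorbing behaviour at $n=0$ and verifying that in cases (2) and (3) absorption does not occur prematurely — this reduces to checking $\dummyq_n \geq \dummyq^*$, which I expect to follow from the same trapping between $m$ and $\ell$ once $m>0$ is established on the relevant interval.
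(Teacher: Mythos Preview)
Your overall strategy --- phase-portrait analysis, trapping $n$ between $m$ and $\ell$, checking transversality of $n$ and $m$ at a contact point --- matches the paper's. But two steps in your outline are genuinely broken and would not go through as written.

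\textbf{Case (1): the inequality chain is false.} You write that for $\epsilon\le 0$, $R<1$, one has $m'(0)\ge \ell'(0)>n'(0)$. The first inequality is wrong: $m'(0)-\ell'(0) = -\tfrac{\delta^2}{2}(1-R)<0$ always, regardless of $\epsilon$. The initial condition $n'(0)/(1-R)<\ell'(0)/(1-R)$ only places $n'(0)$ below $\ell'(0)$, which tells you nothing about its position relative to $m'(0)$, since $m'(0)<\ell'(0)$ too. What the paper does (and what you are missing) is to observe that the ODE \eqref{eqn:node} at $q=0$ is a $0/0$ form, so by l'H\^opital $n'(0)$ must satisfy a quadratic $\Phi(\chi)=0$; the selection hypothesis $n'(0)/(1-R)<\ell'(0)/(1-R)$ then picks out \emph{which} root. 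With $n'(0)$ identified, one computes $\Phi(m'(0))=(1-R)^2\delta^2\epsilon/2$, so $m'(0)$ lies between the roots of $\Phi$ exactly when $\epsilon<0$, giving $n'(0)<m'(0)$. The borderline $\epsilon=0$ gives $n'(0)=m'(0)$ and needs a separate global argument (the paper argues by contradiction that $n$ can never exceed $m$); your first-order comparison says nothing there.

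\textbf{Case (2): existence of the crossing is not established.} Your contact-point computation (at $n=m$ one has $\ell-n=\ell-m$, hence $n'=0$, and then one checks the sign of $m'$) is exactly the paper's argument for \emph{uniqueness and transversality} of the crossing. But it does not produce a crossing: you have $n>m$ near $0$, $n$ trapped below $\ell$, and nothing yet forces $n$ down to $m$ before $q=1$. The paper closes this gap with a quantitative argument: write $n=m+\tfrac{\delta^2}{2}(1-R)q\,b(q)$, suppose $b>0$ on $(0,1)$, and show that $b(q)\ge \phi(1-q)$ for some $\phi>0$ on a right neighbourhood of $\epsilon/\delta^2 R$ (this is where $\epsilon<\delta^2R$ enters). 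Feeding this back into \eqref{eqn:node} gives $n'/n \le -\tfrac{(1-R)\phi}{R(1-\phi)}\cdot\tfrac{1}{1-q}$, which forces $n(q)\to 0$ as $q\uparrow 1$, contradicting $n\ge m\ge m(1)>0$. You should expect to need some such forcing argument; the trapping and contact analysis alone will not do it.

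Cases (3) and (4) in your sketch are essentially the paper's arguments (for (3), the key is that $(1-R)m$ is monotone so at any contact $n'=0$ has the wrong sign relative to $m'$, ruling out a crossing; for (4), $m$ hits zero before $\ell$ does and $n$ cannot cross $m$ while $m'<0$), and your identification of the main obstacle --- the one-sided barrier at $n=\ell$ --- is apt.
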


\begin{figure}
\begin{centering}
\includegraphics[scale=0.295]{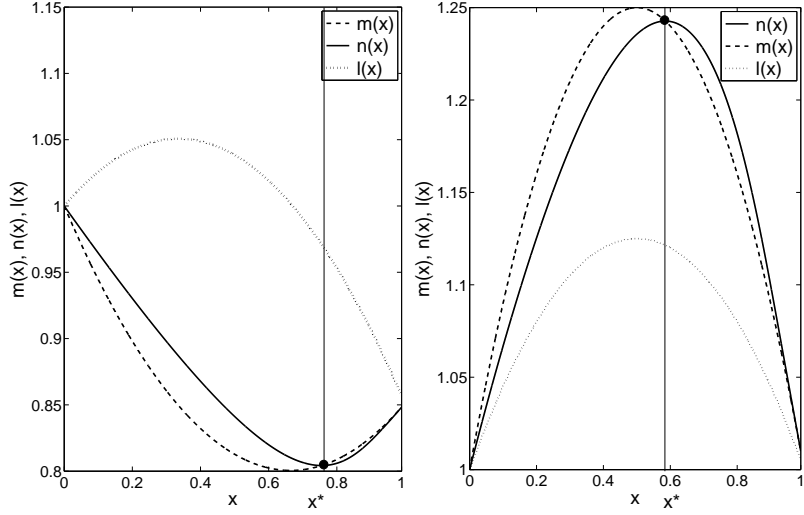}
\par\end{centering}
\caption{Stylised plot of $m(\dummyq)$, $n(\dummyq)$, $\ell(\dummyq)$
and $\dummyq^{*}$. Parameters are chosen to satisfy the conditions in
the second case of Proposition 1 so that $\dummyq^{*} \in (0,1)$. The
left figure is in the case $R<1$ and the right figure $R>1$.}
\label{fig:n}
\end{figure}

\begin{rem}
Note that the condition $\epsilon<\delta^2 R$ is equivalent to $(1-R)m'(1) > 0$. Further, if $R<1$,
then the condition $\epsilon<\frac{\delta^2}{2}R + \frac{1}{1 - R}$ is equivalent to $m(1)>0$.
Also, $n$ has a turning point at $\dummyq^*<1$ if and only if $n(\dummyq^*) = m(\dummyq^*)$. See
Figure~\ref{fig:n}.
In particular, if $m$ is monotone (and $\epsilon>0$) then $\dummyq^* = 1$. Then, if $R<1$, $0<\epsilon<\delta^2 R$ and
$\epsilon<\frac{\delta^2}{2} R + \frac{1}{1-R}$, we have $\dummyq_{\ell} = \dummyq_{n} = 1$.

\end{rem}

\begin{rem}
It is easy to see that $(1-R)n$ is decreasing in $\epsilon$. In fact it can also be shown that over parameter ranges where $0<q^*<1$ then
$q^*$ is increasing in $\epsilon$.
\end{rem}

Theorem~\ref{thm:4cases} divides the parameter space into
the four distinct regions. In particular, it distinguishes the degenerate
cases, and it gives necessary and sufficient conditions for the two
different regimes in the non-degenerate case.

\begin{thm}
\label{thm:4cases}

\begin{enumerate}

\item Suppose $\epsilon \leq 0$. Then it is always optimal to sell the
entire holding of the endowed asset immediately, so that
$\Theta_{t}=0$ for $t\geq0$. The value function for the problem is
$V\left(x,y,\theta,t\right) = (R/\beta)^R e^{-\beta t} (x + y\theta)^{1-R}/1-R$; and
the certainty equivalent value of the holdings of the asset is
$p(x,y,\theta)=y \theta$.

\item Suppose $0 < \epsilon < \delta^2 R$ and $\epsilon <
\frac{\delta^2}{2} R + \frac{1}{1-R}$ if $R<1$. Then there exists a
positive and finite critical ratio $z^*$ and the optimal behaviour is
to sell the smallest possible quantity of the risky asset which is
sufficient to keep the ratio of wealth in the risky asset to cash
wealth at or below the critical ratio. If $\theta>0$ then $p(x,y,\theta) > y
\theta$.

\item
Suppose $\epsilon \geq \delta^2 R$ and $\epsilon < \frac{\delta^2}{2} R + \frac{1}{1-R}$
if $R<1$. Then the optimal consumption and sale
strategy is first to consume liquid (cash) wealth, and then when this
liquid
wealth is exhausted, to finance further consumption from sales of the
illiquid asset. If $\theta>0$ then
$p(x,y,\theta) > y \theta$.

\item Suppose $R<1$ and $\epsilon \geq \frac{\delta^2}{2} R + \frac{1}{1-R}$.
Then the problem is degenerate, and provided $\theta$ is positive, the
value function $V=V(x,y,\theta,t)$ is infinite. There is no unique
optimal strategy, and the certainty equivalent value $p$ is not defined.
\end{enumerate}
\end{thm}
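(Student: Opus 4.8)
The proof is organised around the one-to-one correspondence between the four cases of Proposition~\ref{prop:crossings} and the four economic regimes of the theorem: for each parameter range we construct, from the function $n$ of Proposition~\ref{prop:crossings}, a candidate value function and a candidate optimal strategy, and then run a verification argument. First I would exploit the scaling structure. Discounting gives $V(x,y,\theta,t)=e^{-\beta t}\tilde V(x,y,\theta)$, and the homothetic property of CRRA utility, together with the fact that the risky holdings enter the dynamics only through their value $y\theta$, reduces $\tilde V$ to $\tilde V(x,y,\theta)=\frac{x^{1-R}}{1-R}\phi(z)$ with $z=y\theta/x\in[0,\infty]$ (equivalently $\phi$ may be regarded as a function of $q=y\theta/(x+y\theta)\in[0,1]$, the fraction of total wealth in the asset). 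The Hamilton--Jacobi--Bellman equation then takes the variational-inequality form
\[
\max\Bigl\{\mathcal L V-\beta V+\tfrac{R}{1-R}(V_x)^{(R-1)/R},\ yV_x-V_\theta\Bigr\}=0,
\]
where $\mathcal L$ is the generator of $Y$; the no-transaction (NT) region is where the first bracket vanishes and the second is $\le0$, and in the transaction region $yV_x=V_\theta$ and the agent sells instantaneously. Inserting the scaling form and then applying the order-reducing change of dependent variable announced in the introduction turns the NT equation into precisely the first-order ODE~\eqref{eqn:node} for $n$, with $n(0)=1$ and the slope condition at $0$ encoding value matching and smooth fit of $\phi$; the free boundary $q^*$ is where $n$ meets $m$, and it maps to the critical ratio $z^*=q^*/(1-q^*)$.

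Next, the candidate in each case. In Case~(1), $q^*=0$, the NT region collapses and the candidate is the explicit Merton function $(R/\beta)^R e^{-\beta t}(x+y\theta)^{1-R}/(1-R)$; here $yV_x\equiv V_\theta$, and a short computation reduces the first bracket to a quantity proportional to $y\theta\bigl(\alpha(x+y\theta)-\tfrac{\eta^2}{2}Ry\theta\bigr)$, which is $\le0$ on the whole state space exactly when $\epsilon\le0$. In Case~(2), $q^*\in(0,1)$: build $\phi$ on $[0,z^*]$ from $n$ on $[0,q^*]$, and extend to $z>z^*$ by the rule ``sell down to $z^*$'', i.e.\ $V(x,y,\theta,\cdot)=V(\hat x,y,\hat\theta,\cdot)$ with $\hat x+y\hat\theta=x+y\theta$ and $y\hat\theta/\hat x=z^*$; the identity $n(q^*)=m(q^*)$ that defines $q^*$ is exactly what makes the two pieces of $\phi$ meet with value matching and smooth fit of the first two derivatives at $z^*$, while matching at $z=0$ (where $\theta=0$) recovers the pure-consumption Merton value. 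In Case~(3), $q^*=1$, so $z^*=+\infty$, the NT region is all of $\{x>0\}$, there is no interior free boundary, and $\phi$ is built from $n$ on $[0,1)$; the only subtle point is the boundary $x=0$ ($q\to1$), where cash is exhausted and consumption is thereafter financed entirely by sales, the boundary value $V(0,y,\theta,\cdot)$ being identified as a limit of the scaling form, which is finite precisely because $m,\ell,n$ remain strictly positive up to $q=1$ in this regime.

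For Cases~(1)--(3) the verification has the usual two halves. Applying It\^o's formula to the candidate $V$ along an arbitrary admissible pair $(C,\Theta)$, using the HJB inequality (the singular part of $dV$ is $\le0$ because $yV_x-V_\theta\le0$ and $\Theta$ is non-increasing) and taking expectations gives
\[
\E\int_0^T e^{-\beta s}\frac{C_s^{1-R}}{1-R}\,ds\ \le\ V(x_0,y_0,\theta_0,0)-\E\bigl[V(X_T,Y_T,\Theta_T,T)\bigr];
\]
letting $T\to\infty$ together with a transversality estimate $\E[V(X_T,Y_T,\Theta_T,T)]\to0$ yields the upper bound, and the candidate strategy --- consume at the feedback rate $(V_x)^{-1/R}$, and in Cases~(2)--(3) keep $z$ at or below $z^*$ by reflection (respectively, never sell until $x$ hits $0$, then sell just enough to keep $x=0$), the cumulative sales being the local time of the associated one-dimensional reflected diffusion --- attains it. I expect the two main obstacles to be: (a) verifying that the HJB inequality holds globally, in particular throughout the transaction region in Case~(2), which is where one must use the ordering of $n$, $m$ and $\ell$ and the sign information they carry, as supplied by Proposition~\ref{prop:crossings}; and (b) the transversality/integrability estimate (delicate especially for $R>1$, where $V$ is negative and $(X_T+Y_T\Theta_T)^{1-R}$ is unbounded near $0$), which is precisely what breaks down in Case~(4).

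Finally, Case~(4) and the certainty-equivalent statements. When $R<1$ and $\epsilon\ge\tfrac{\delta^2}{2}R+\tfrac1{1-R}$ I would show $V=+\infty$ for $\theta>0$ via an explicit family of admissible strategies: leave the initial cash untouched, and at each integer time $k\ge1$ sell $\lambda_k\Theta_k$ units of the asset and consume the proceeds at a constant rate over $[k,k+1)$, taking $\lambda_k=k^{-(1+\delta)}$ with $\delta>0$ small enough that $(1+\delta)(1-R)<1$. Then $\Theta_k$ is deterministic and bounded below by $\Theta_\infty=\theta_0\prod_j(1-\lambda_j)>0$ (since $\sum_j\lambda_j<\infty$), while the contribution of the $k$-th interval to the objective is a fixed multiple of $e^{-\beta k}\E[Y_k^{1-R}]\,\lambda_k^{1-R}\Theta_k^{1-R}$; since the parameter restriction is equivalent to $-\beta+(1-R)\alpha-\tfrac{\eta^2}{2}R(1-R)\ge0$, i.e.\ $e^{-\beta k}\E[Y_k^{1-R}]\ge y_0^{1-R}$ for every $k$, this contribution is at least a constant times $\lambda_k^{1-R}$, and $\sum_k\lambda_k^{1-R}=\infty$. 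Hence $V\equiv+\infty$ when $\theta>0$, there is no optimal strategy, and $p$ is not defined. For the certainty equivalent: in Case~(1) immediate liquidation is optimal, so $V(x+y\theta,y,0,\cdot)=V(x,y,\theta,\cdot)$ and $p=y\theta$; in Cases~(2) and~(3) the NT region is non-degenerate, so whenever $\theta>0$ immediate liquidation is strictly suboptimal, hence $V(x,y,\theta,\cdot)>V(x+y\theta,y,0,\cdot)$ and therefore $p>y\theta$.
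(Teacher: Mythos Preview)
Your proposal is correct and follows essentially the same architecture as the paper: reduce via scaling to a function of $z=y\theta/x$, link the HJB equation to the first-order ODE \eqref{eqn:node} for $n$, build the candidate value function and strategy in each of the four parameter regimes of Proposition~\ref{prop:crossings}, and run a verification argument. Your identification of the two main obstacles (the global HJB inequality in the transaction region, and the transversality estimate, especially for $R>1$) matches exactly what the paper spends most of its effort on; the paper handles the $R>1$ transversality by the Davis--Norman device of replacing $G(x,\ldots)$ by $G(x+\varepsilon,\ldots)$ and letting $\varepsilon\downarrow0$, which you may want to make explicit.

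The one place where your argument genuinely differs is Case~(4). The paper uses a \emph{continuous} sale strategy $\tilde\Theta_t=\theta_0 e^{-\phi t}$ with $\tilde C_t=\phi Y_t\tilde\Theta_t$, and computes the expected discounted utility as a single explicit integral that diverges when $\epsilon\ge\tfrac{\delta^2}{2}R+\tfrac{1}{1-R}$ (taking $\phi\downarrow0$ in the boundary case). Your discrete-time construction---sell a fraction $\lambda_k=k^{-(1+\delta)}$ at each integer $k$ and consume the proceeds over the next unit interval---is equally valid: admissibility is clear since cash never drops below $x_0$, the residual holding $\Theta_\infty=\theta_0\prod_k(1-\lambda_k)>0$ because $\sum_k\lambda_k<\infty$, and the parameter condition is exactly $e^{-\beta k}\E[Y_k^{1-R}]\ge y_0^{1-R}$, so the objective dominates a constant times $\sum_k\lambda_k^{1-R}=+\infty$ once $(1+\delta)(1-R)<1$. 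The paper's strategy is slightly cleaner computationally (one integral instead of a series, and it keeps $X_t\equiv x_0$ exactly), but yours has the virtue of making transparent that only an arbitrarily small amount of the asset need ever be sold to drive the value to infinity.
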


\begin{rem}

In light of Proposition~\ref{prop:crossings} there is one fewer case
for $R>1$. The fourth case in the theorem does
not happen for $R>1$ since the value function is always finite, as in
Merton's problem.

Similarly, when $R<1$, if $\delta^2 \geq 2/(R(1-R))$ then the third case above does not happen.
In that case, as $\epsilon$ increases we move directly from $\epsilon < \frac{\delta^2}{2} R + \frac{1}{1-R}$
and a finite value function and $z^*$ to $\epsilon \geq \frac{\delta^2}{2} R + \frac{1}{1-R}$ and an infinite value function.

\end{rem}

The second and third cases above are non-degenerate and they are further
characterised in Theorem~\ref{thm:maincase} and Theorem~\ref{thm:2ndmaincase}.
In Theorem~\ref{thm:maincase} the
solution is expressed in terms of a one-dimensional autonomous
reflecting stochastic process $J$ and its local time at zero $L$, see
(\ref{eq:Jdef}).

For $0 \leq \dummyq \leq q^*$ define $N(\dummyq) = n(\dummyq)^{-R} (1-\dummyq)^{R-1}$ where $n$ is the solution to (\ref{eqn:node}).
Assuming that $N$ is monotonic, let $W$ be inverse to $N$. Let $h^* = N(\dummyq^*)$. Then
$W(h^*)=\dummyq^*$, and $h^{*} (1- \dummyq^*)^{1-R}  = m(\dummyq^*)^{-R}$.

\begin{thm}
\label{thm:maincase}
\begin{enumerate}[i)]
\item
Suppose $R<1$. Suppose $0 < \epsilon < \delta^2 R$ and $\epsilon <
\frac{\delta^2}{2} R + \frac{1}{1-R}$
so that $0 < \dummyq^* < 1$.
Then $N$ as defined above is increasing, and $W$ is well defined.

Let $z^{*}$ be given by
\begin{equation}
z^{*}=(1-\dummyq^*)^{-1}-1 = \frac{\dummyq^*}{1-\dummyq^*} \in
(0,\infty). \label{eq:886}
\end{equation}
On $\left[1,h^{*}\right]$ let $h$ be the solution of
\begin{equation}
u^{*}-u=\int_{h}^{h^{*}}\frac{1}{(1-R)fW\left(f\right)}df,\label{eq:885}
\end{equation}
where $u^{*}= \ln {z^{*}}$. Let $g$ be given by
\begin{equation}
g\left(z\right)=\begin{cases}
\begin{array}{l}
\left(\frac{R}{\beta}\right)^{R} m(\dummyq^*)^{-R} \left(1+z\right)^{1-R}\\
\left(\frac{R}{\beta}\right)^{R}h\left(\ln z\right)
\end{array} & \begin{array}{l}
\qquad z \in [z^{*}, \infty);\\
\qquad z \in (0, z^{*}].
\end{array}\end{cases}\label{eq:884}
\end{equation}
Then, the value function $V$ is given by
\begin{equation}
V\left(x,y,\theta,t\right)
 =  e^{-\beta t}\frac{x^{1-R}}{1-R}g\left(\frac{y\theta}{x}\right), \hspace{10mm} x>0, \theta>0
\label{eq:883}
\end{equation}
and we can extend this to $x =0$ and $\theta = 0$ by continuity to give
\begin{eqnarray}
\label{eqn:Vdeftheta=0}
V(x,y,0,t) & = & e^{-\beta t}\frac{x^{1-R}}{1-R} \left( \frac{R}{\beta} \right)^R \\
\label{eqn:Vdefx=0}
V(0,y,\theta,t) & = & e^{-\beta t}\frac{y^{1-R}\theta^{1-R}}{1-R} \left( \frac{R}{\beta} \right)^R m(\dummyq^*)^{-R}
\end{eqnarray}

Fix $z_0= y_0 \theta_0 /x_0$. Let $(J,L) = (J_t, L_t)_{t \geq 0}$ be the unique pair such that
\begin{enumerate}
\item $J$ is positive,

\item $L$ is increasing, continuous, $L_0 = 0$, and $dL_t$ is
carried by the set $ \left\{ t:J_{t}= 0 \right\}$,

\item $J$ solves
\begin{equation}
\label{eq:Jdef}
 J_t = (z^*-z_0)^+ - \int_0^t \tilde{\Lambda}(J_s) ds  -
\int_0^t
\tilde{\Gamma}(J_s) dB_s + L_t,
\end{equation}
where $\Lambda(z) = \alpha z + z \left( g(z) - \frac{1}{1-R} z
g'(z)\right)^{-1/R}$, $\Gamma(z)=\eta z$, $\tilde{\Lambda}(j) = \Lambda(z^*-j)$ and
$\tilde{\Gamma}(j) = \Gamma(z^*-j)$.
\end{enumerate}

For such a pair $0 \leq J_t \leq z^*$.

 If $z_0 \leq z^*$ then set $\Theta^*_0 =
\theta_0$ and $X^*_0 = x_0$; else if $z_0 > z^*$ then set
\[ \Theta^*_0 = \theta_0 \frac{z^*}{(1+z^*)}\frac{(1+z_0)}{z_0} \]
and $X^*_0 = x_0 + y_0(\theta_0 - \Theta_0)$.
This corresponds to the sale of
a positive
quantity $\theta_0 - \Theta_0$ of units of the endowed asset at time $0$.

Then, the optimal holdings $\Theta^*_t$
of the endowed asset, the optimal consumption process
$C^*_t = C(X^*_t,Y_t,\Theta^*_t)$, the resulting wealth process
and the certainty equivalent value are given by
\begin{eqnarray}
\Theta^*_t  & = & \Theta_0^* \exp\left\{
-\frac{1}{z^*(1+z^*)} L_t \right\} ; \label{theta*def} \\
X^*_t & = & \frac{Y_t \Theta^*_t}{ (z^* - J_t)}; \label{x*def}\\
C(x,y,\theta) & = & x \left[g\left( \frac{y\theta}{x} \right)
-\frac{1}{1-R} \frac{y\theta}{x}
g' \left( \frac{y\theta}{x} \right)\right]^{-\frac{1}{R}} \label{Cdef}; \\
p(x,y,\theta) & = & x \left[\frac{g\left(\frac{y\theta}{x}\right)}{g(0)}\right]^{\frac{1}{1-R}} - x. \label{pdef}
\end{eqnarray}

\item Now suppose $R>1$ and $0<\epsilon<\delta^2 R$ so that $0 <
\dummyq^* < 1$.
Let all quantities be defined as before.
Then $N$ is decreasing.
On $(h^*, 1)$ $h$ is defined via
\[
u^{*}-u=\int_{h^*}^{h}\frac{1}{(R-1)fW\left(f\right)}df.
\]
The value function, the optimal holdings $\Theta^*$,
the optimal consumption process
$C^*$, the resulting wealth process $X^*$ and the certainty equivalent value $p$ are the same as before.
\end{enumerate}
\end{thm}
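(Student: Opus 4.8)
The plan is a verification argument built on the homothetic ansatz
\[ V(x,y,\theta,t) = e^{-\beta t}\frac{x^{1-R}}{1-R}\, g\!\left(\frac{y\theta}{x}\right), \]
which is forced by the scaling invariance of (\ref{eq:100b}), together with the changes of variable $z=y\theta/x$ and $q=z/(1+z)$. First I would write down the Hamilton--Jacobi--Bellman variational inequality for this singular control problem: in the interior (no-sale) region the consumption-optimised PDE holds with equality, while on the sale region the gradient constraint $yV_x-V_\theta=0$ binds. Substituting the ansatz, the supremum over $C\ge 0$ produces $C^{*}=x\left(g-\tfrac{z}{1-R}g'\right)^{-1/R}$, which is (\ref{Cdef}), and reduces the interior equation to a second-order ODE for $g$. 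The crucial algebraic step is the reduction of order: writing $g(z)=(R/\beta)^R n(q)^{-R}(1-q)^{R-1}$ turns this into the first-order ODE (\ref{eqn:node}) for $n$, and one then checks the identities $g-\tfrac{z}{1-R}g' = (1-q)g$ and $C^{*}=\tfrac{\beta}{R}\,n(q)(x+y\theta)$, so that the functions $N$, $W$ and $h$ appearing in the statement (via (\ref{eq:885}) and (\ref{eq:884})) are nothing but $g$ re-expressed in the variables $q$ and $u=\ln z$. The boundary behaviour at $z=0$ reproduces Merton's value function exactly because $n(0)=1$, which yields (\ref{eqn:Vdeftheta=0}); on $[z^{*},\infty)$ the explicit concave branch of (\ref{eq:884}) is precisely the value obtained by selling the holding down to ratio $z^{*}$, so there $yV_x-V_\theta=0$ holds automatically, and letting $z\to\infty$ gives (\ref{eqn:Vdefx=0}).

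Second, I would prove that the $g$ built from $n$ genuinely solves the variational inequality on all of $(0,\infty)$. There are three ingredients. (a) Monotonicity of $N$ on $[0,q^{*}]$, so that $W=N^{-1}$, and hence $h$ and $g$, are well defined --- the assertion ``$N$ is increasing'' for $R<1$ and ``$N$ is decreasing'' for $R>1$ --- which follows from the sign of $N'$ combined with the bounds on $n$ coming from Proposition~\ref{prop:crossings}. (b) The smooth fit at the free boundary $z^{*}$: value matching and $C^{1}$ contact between the interior branch and the explicit branch drop out of $g(z)=(R/\beta)^R n(q)^{-R}(1-q)^{R-1}$ evaluated at $q^{*}$ using $n(q^{*})=m(q^{*})$ and $h^{*}(1-q^{*})^{1-R}=m(q^{*})^{-R}$ (as noted just before the statement), while the $C^{2}$ contact is equivalent, via (\ref{eqn:node}), to $n'(q^{*})=0$; and the Remark after Proposition~\ref{prop:crossings} records that $n'(q^{*})=0$ iff $n(q^{*})=m(q^{*})$, which is exactly the defining property of $q^{*}$. (c) The correct inequalities off the two regions: $yV_x-V_\theta<0$ on $(0,z^{*})$ and the consumption-PDE expression $\le 0$ on $(z^{*},\infty)$, both of which reduce to short sign computations in $n$, $m$, $\ell$ using (a). I would also record here that $g\in C^{2}(0,\infty)$, $g>0$ and $g-\tfrac{z}{1-R}g'=(1-q)g>0$, which is what makes $C^{*}$ and the diffusion coefficients below well defined.

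Third, I would construct the candidate optimal pair. The coefficients $\tilde\Lambda,\tilde\Gamma$ in (\ref{eq:Jdef}) are Lipschitz on the compact interval $[0,z^{*}]$ (using that $g-\tfrac{z}{1-R}g'$ is bounded away from $0$ there, with value $(R/\beta)^R$ at $z=0$), so the one-sided Skorokhod problem has a unique strong solution $(J,L)$ with $0\le J_t\le z^{*}$. Defining $\Theta^{*},X^{*},C^{*}$ by (\ref{theta*def})--(\ref{Cdef}) with the prescribed initial jump, an It\^o computation identifies $J_t=z^{*}-Y_t\Theta^{*}_t/X^{*}_t$, so the wealth ratio $Y_t\Theta^{*}_t/X^{*}_t=z^{*}-J_t$ is a reflecting diffusion on $[0,z^{*}]$, the reflection (selling) acting only when the ratio hits $z^{*}$ and being carried by $L$; this makes $\Theta^{*}$ right-continuous and nonincreasing with $X^{*}\ge 0$ and $C^{*}\ge 0$, hence admissible. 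The verification then follows the usual route: applying It\^o to $e^{-\beta t}\frac{X_t^{1-R}}{1-R}g(Y_t\Theta_t/X_t)+\int_0^t e^{-\beta s}\frac{C_s^{1-R}}{1-R}ds$ along an arbitrary admissible strategy, the variational inequality makes this a supermartingale (so $V\ge$ objective), while along the candidate strategy it is a local martingale, whence equality provided the transversality condition $\limsup_{t\to\infty}e^{-\beta t}\E\!\left[X^{*\,1-R}_t\,g(Y_t\Theta^{*}_t/X^{*}_t)\right]=0$ holds. Finally (\ref{pdef}) is immediate from the defining relation $V(x+p,y,0,t)=V(x,y,\theta,t)$, (\ref{eq:883}) and $g(0)=(R/\beta)^R$. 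Part (ii), the case $R>1$, is the same argument with all monotonicity and sign statements reversed ($N$ decreasing, the two regional inequalities and the $C^{2}$-fit identities flipped), and there the transversality step is easy since the utility is bounded above.

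The main obstacle, I expect, is the last verification step when $R<1$: controlling the transversality term and proving the value function is finite. This is genuinely delicate, since by Theorem~\ref{thm:4cases}(4) the problem becomes ill-posed the moment $\epsilon\ge \tfrac{\delta^{2}}{2}R+\tfrac{1}{1-R}$, so the argument must use the standing assumption $\epsilon<\tfrac{\delta^{2}}{2}R+\tfrac{1}{1-R}$ (equivalently $m(1)>0$, as in the Remark) in an essential way to obtain the integrability estimates on $(X_t,Y_t,\Theta_t)$, and it is the boundedness of the ratio under the candidate strategy that rescues the argument. The other delicate point is deriving the $C^{2}$ smooth fit at $z^{*}$ cleanly from the turning-point characterisation of $q^{*}$; these points, together with the monotonicity of $N$, are precisely where Proposition~\ref{prop:crossings} is used, and the bulkier estimates would naturally be relegated to the appendix.
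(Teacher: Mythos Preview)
Your overall architecture matches the paper's: a verification argument on the homothetic ansatz, reduction of the HJB to the first-order ODE for $n$, $C^2$ smooth fit at $z^*$ driven by the turning-point condition $n(q^*)=m(q^*)$, construction of the candidate via the Skorokhod reflection problem, and a supermartingale inequality for general strategies together with transversality along the candidate. You have also correctly flagged the $R<1$ transversality as the subtle step.

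There are, however, two genuine gaps. First, your claimed identities $g-\tfrac{z}{1-R}g'=(1-q)g$ and $C^{*}=\tfrac{\beta}{R}n(q)(x+y\theta)$ with $q=z/(1+z)$ are \emph{false} on the no-sale region $(0,z^*)$; they hold only on $[z^*,\infty)$ where $g$ has the explicit form. The paper does not substitute $q=z/(1+z)$ into the interior ODE. It passes to $h(u)=(\beta/R)^R g(e^u)$ and uses the autonomous reduction $h'=w(h)=(1-R)hW(h)$ to obtain the first-order equation (\ref{eqn:wode}); the running variable is $W(h)$, which does not equal $z/(1+z)$ except at the free boundary. The correct identity is $g-\tfrac{z}{1-R}g'=(1-W(h))g$, and this is what gives positivity of the consumption coefficient and the Lipschitz bounds you need for the Skorokhod problem (via $0\le W(h)\le 1$). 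Your step (c) and the well-posedness of (\ref{eq:Jdef}) therefore rest on an incorrect formula.

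Second, your treatment of $R>1$ is too optimistic. ``Utility bounded above'' does not deliver the supermartingale property for arbitrary admissible strategies: since $G<0$, the local martingale part of $M$ is not bounded below, and the argument you use for $R<1$ (which relies on $0\le M_t\le M_0+N^4_t$) breaks down. The paper resolves this with the Davis--Norman perturbation: replace $G$ by $\widetilde V_\varepsilon(x,y,\theta,t)=G(x+\varepsilon,y,\theta,t)$, which is bounded below by $(R/\beta)^R\varepsilon^{1-R}/(1-R)$, localise so the stochastic integral is a true martingale, let $t\to\infty$, and finally send $\varepsilon\downarrow 0$. You also do not mention how jumps in $\Theta$ are controlled; the paper proves joint concavity of $G$ in $(x,\theta)$ (equivalently $(1-R)gz^2g''+R(zg')^2\le 0$), which is what makes each jump contribution to $M$ nonpositive.
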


\begin{rem}
Recall
that $n$ solves the first order differential equation (\ref{eqn:node}),
and $\dummyq^* \in (0,1)$ is the solution of a first crossing problem
for
$n$. Once we have constructed $n$ and determined $\dummyq^*$, numerically if
appropriate, expressions for all other quantities can be derived by
solving a further integral equation, which can be re-expressed as a
first order differential equation. This two-stage procedure is
significantly simpler than solving the HJB equation directly, as this
equation is second order and non-linear, and subject to second-order
smooth fit at an unknown free boundary.
\end{rem}

\begin{rem}
\label{rem:mertonline}
In the corresponding Merton problem for the unconstrained agent who may both buy and sell the risky asset at zero transaction cost, optimal behaviour for the agent is to hold a fixed proportion $q^M=\alpha/\eta^2 R = \epsilon/\delta^2 R$ of total wealth in the risky asset. This corresponds to keeping $Q_t = Y_t \Theta_t/(X_t + Y_t \Theta_t) = q^M$ or equivalently $Z_t = Y_t \Theta_t/X_t = z^M := q^M/(1-q^M) = \epsilon/(\delta^2 R - \epsilon)$. In Lemma~\ref{lem:n} below we show that $q^* \geq \epsilon/\delta^2 R = q^M$ so that optimal behaviour for the agent who cannot buy units of the risky asset is to keep the ratio of money invested in the risky asset to cash wealth in in interval $[0,q^*]$ where $q^M \in (0,q^*)$.
\end{rem}

The following theorem characterises the solution to the
problem in the second non-degenerate case (the third case in
Theorem~\ref{thm:4cases}).
In this case, the optimal strategy is
to first hold the endowed asset and finance consumption with initial
wealth. When liquid wealth is exhausted, consumption is further
financed by the sale of endowed asset. Here, the critical threshold
$z^* = \infty$.

\begin{thm}
\label{thm:2ndmaincase}

Suppose $\epsilon \geq \delta^2 R$ and if $R<1$,
$\epsilon < \frac{\delta^2}{2} R + \frac{1}{1-R}$.

Let $n$ solve (\ref{eqn:node}) on $[0,1]$. Then for the given parameter combinations we have $q^*=1$. As in Theorem~\ref{thm:maincase}, let $N(q) = n(q)^{-R}(1-q)^{R-1}$. Then $N$ is monotonic.

Let $W$ be inverse to $N$. For $R<1$ define $\gamma:(1,\infty) \mapsto \mathbb R$ by
\begin{equation}
\label{eqn:gammadefR<1}
\gamma(v) = \frac{\ln v}{1-R} + \frac{R}{1-R} \ln m(1) - \frac{1}{1-R} \int_v^\infty \frac{(1-W(s))}{s W(s)} ds .
\end{equation}
If $R>1$ define $\gamma:(0,1) \mapsto \mathbb R$ by
\begin{equation}
\label{eqn:gammadefR>1}
\gamma(v) = -\frac{\ln v}{R-1} - \frac{R}{R-1} \ln m(1) - \frac{1}{R-1} \int_0^v \frac{(1-W(s))}{s W(s)} ds .
\end{equation}
Let $h$ be inverse to $\gamma$ and let $g(z) = (R/\beta)^R h (\ln z)$.

Then, the value function $V$ is given by
\begin{equation}
V(x,y,\theta,t) = e^{-\beta t} \frac{x^{1 - R}}{1 - R}
g\left(\frac{y\theta}{x}\right),  \hspace{10mm} x>0, \theta>0
\label{eq:v}
\end{equation}
which can be extended by continuity to give
\begin{eqnarray}
V(x,y,0,t) & = & e^{-\beta t}\frac{x^{1-R}}{1-R} \left( \frac{R}{\beta} \right)^R, \\
V(0,y,\theta,t) & = & e^{-\beta t}\frac{y^{1-R}\theta^{1-R}}{1-R} \left( \frac{R}{\beta} \right)^R m(1)^{-R} .
\end{eqnarray}

The optimal consumption process $C^*$ is given by $C^*_t = C(X^*_t,Y_t,\Theta^*_t)$ where $C(x,y,\theta)$ is as in (\ref{Cdef})
and the optimal holdings $\Theta^*_t$ of the endowed asset
and the resulting wealth process are given by
\begin{equation}
\Theta^*_t =\begin{cases}
\begin{array}{ll}
\theta _0 & t \leq \tau \\
\theta_0 e^{-\frac{\beta}{R} m(1) (t-\tau)}
& t > \tau
\end{array}\end{cases},
\hspace{8mm}
X^*_t =\begin{cases}
\begin{array}{ll}
x_0 - \int_0^t C(X^*_s, Y_s, \theta_0) ds & t \leq \tau\\
0 &  t > \tau
\end{array}\end{cases},
\label{eq:optimal2}
\end{equation}
where $\tau = \inf \{t \geq 0: X^* _t = 0 \}$. Finally the certainty equivalent value is given by (\ref{pdef}).
\end{thm}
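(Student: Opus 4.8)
The plan is to proceed by verification: first construct a candidate value function $\tilde V$ from the stated formulas, check it satisfies the HJB variational inequality associated with~\eqref{eq:100b}, and then prove optimality of the proposed strategy $(\Theta^*, C^*)$ via a martingale/supermartingale argument. The starting point is Proposition~\ref{prop:crossings}(3), which guarantees $q^*=1=q_\ell=q_n=q_m$ for the given parameter range, so $n$ solves~\eqref{eqn:node} on all of $[0,1]$ with $n(1)=0$ (and $\ell(1)=m(1)=0$ when $\epsilon=\delta^2R$, or all three positive-to-zero behaviour as dictated). The first task is to verify that $N(q)=n(q)^{-R}(1-q)^{R-1}$ is monotone on $[0,1)$ — decreasing for $R<1$ and increasing for $R>1$, with range $(1,\infty)$ in the former case and $(0,1)$ in the latter — so that $W=N^{-1}$ is well defined; this follows by differentiating $\ln N = -R\ln n + (R-1)\ln(1-q)$ and using~\eqref{eqn:node}, exactly as in the proof of Theorem~\ref{thm:maincase}, and explains why the domains of $\gamma$ in~\eqref{eqn:gammadefR<1}--\eqref{eqn:gammadefR>1} are what they are. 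One must also check the improper integrals defining $\gamma$ converge: near $s\to\infty$ (resp. $s\to 0$) one has $W(s)\to 1$, so $(1-W(s))/(sW(s))$ must be shown integrable, which amounts to a growth estimate on $1-W$ obtained by inverting the asymptotics of $N$ near $q=1$, i.e. near $n(1)=0$.

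Next I would show that $g(z)=(R/\beta)^R h(\ln z)$ with $h=\gamma^{-1}$ satisfies the reduced ODE coming from the HJB equation in the region where no selling occurs — namely the same second-order ODE for $g$ that appears in Theorem~\ref{thm:maincase}, but now valid for all $z\in(0,\infty)$ since the selling boundary has been pushed to $z^*=\infty$. The change of variables $q=W(N(q))$, together with the definition of $N$ and the identity $g'(z)$ expressed through $n$ and $q$, should reduce this second-order ODE to the first-order relation encoded in $\gamma$; this is the algebraic heart of the argument and I expect it to mirror the manipulations already carried out for~\eqref{eq:884}--\eqref{eq:885}. I would then verify the boundary behaviour: as $z\to 0$, $g(z)\to (R/\beta)^R$ recovering the Merton value~\eqref{eq:v} with no risky asset, and as $z\to\infty$, $g(z)\sim (R/\beta)^R m(1)^{-R} z^{1-R}$, which gives~\eqref{eqn:Vdefx=0}'s analogue $V(0,y,\theta,t)=e^{-\beta t}\frac{(y\theta)^{1-R}}{1-R}(R/\beta)^R m(1)^{-R}$ — here the explicit constant $(R/\beta)^R m(1)^{-R}$ must be read off from the $\ln m(1)$ terms in the definition of $\gamma$.

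Finally, for the optimality and the explicit strategy~\eqref{eq:optimal2}: before $\tau$ the agent holds $\Theta^*\equiv\theta_0$ and consumes at rate $C(X^*_s,Y_s,\theta_0)$ given by~\eqref{Cdef}, so $X^*$ solves an autonomous ODE driven by $Y$ and hits $0$ at the stopping time $\tau=\inf\{t:X^*_t=0\}$ (one should check $\tau<\infty$ a.s., or handle $\tau=\infty$ on a null-free set, using that $C>0$ is bounded below appropriately); after $\tau$, $X^*\equiv 0$ and $\Theta^*_t=\theta_0 e^{-\frac{\beta}{R}m(1)(t-\tau)}$, which one verifies is precisely the strategy consistent with $V(0,y,\theta,t)$ and the first-order condition $C=\theta\cdot(\text{const})$ when all wealth is in the asset, i.e. solving $dX=0=-Cdt-Yd\Theta$ with $C$ proportional to $Y\Theta$. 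The verification argument is the standard one: show $e^{-\beta t}\frac{X_t^{1-R}}{1-R}g(Y_t\Theta_t/X_t)$ (suitably interpreted on $\{X_t=0\}$ via~\eqref{eqn:Vdefx=0}-type formula) plus accumulated discounted utility is a supermartingale for any admissible $(\Theta,C)$ and a martingale for $(\Theta^*,C^*)$, then take expectations and let $t\to\infty$, controlling the terminal term with a transversality estimate. I expect the main obstacle to be this last transversality/integrability step together with the behaviour at the boundary $x=0$: because $z^*=\infty$, the process $Z_t=Y_t\Theta_t/X_t$ is transient to $+\infty$ under the optimal policy, so one cannot appeal to a reflecting-diffusion stationarity as in Theorem~\ref{thm:maincase}; instead one must directly estimate $\E[e^{-\beta t}(X^*_t+Y_t\Theta^*_t)^{1-R}]\to 0$ using the explicit exponential form of $\Theta^*$ after $\tau$ and the condition $\epsilon<\frac{\delta^2}{2}R+\frac{1}{1-R}$ (equivalently $m(1)>0$) when $R<1$, which is exactly what rules out the ill-posed fourth case and makes the relevant exponent negative. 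Handling the discontinuity of the candidate strategy at $\tau$ (the pair $(X^*,\Theta^*)$ is continuous but the dynamics switch regime) within Itô's formula requires care but is routine once the supermartingale property is localized away from and across $\{X=0\}$.
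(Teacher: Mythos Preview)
Your overall architecture---verify the candidate solves the HJB equation, run a supermartingale/martingale argument, and close with a transversality estimate---is exactly the paper's approach, and the reductions you describe (showing $h'=(1-R)hW(h)$, checking $\mathcal{L}G=0$ and $\mathcal{M}G\ge 0$ with equality at $x=0$, splitting off the post-$\tau$ phase as an explicit exponential) match the paper's Lemmas~\ref{lem:gamma}, \ref{lem:2ndconcave}, \ref{lem:2ndoperator} and the case-by-case proof of Theorem~\ref{thm:2ndmaincase}.

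However, there are two concrete errors in your setup that would derail the computation if carried through. First, you assert $n(1)=0$ and invoke asymptotics ``near $n(1)=0$''; in fact the parameter conditions force $n(1)=m(1)>0$ (this is precisely the content of $\epsilon<\tfrac{\delta^2}{2}R+\tfrac{1}{1-R}$ when $R<1$, and is automatic for $R>1$). The asymptotics of $N(q)=n(q)^{-R}(1-q)^{R-1}$ near $q=1$ are therefore driven solely by the $(1-q)^{R-1}$ factor, with $n(q)^{-R}\to m(1)^{-R}$ contributing the \emph{finite} constant that shows up in $V(0,y,\theta,t)$; if $n(1)$ were zero you would get a different (wrong) limiting constant and the integrability of $(1-W(s))/(sW(s))$ at the far endpoint would not follow by your argument. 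Second, you have the monotonicity of $N$ reversed: $N$ is \emph{increasing} for $R<1$ (from $N(0)=1$ to $+\infty$) and \emph{decreasing} for $R>1$ (from $1$ to $0$), consistent with the stated domains of $\gamma$; your ``decreasing for $R<1$, increasing for $R>1$'' is incompatible with the ranges you yourself quote.

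One further point the paper handles that you do not mention: for $R>1$ the candidate $G$ is negative and unbounded below, so the usual ``local martingale bounded below hence supermartingale'' step fails for general admissible strategies. The paper deals with this (Appendix~\ref{app:R>1}) by perturbing to $\widetilde V(x,y,\theta,t)=G(x+\varepsilon,y,\theta,t)$, which is bounded, running the argument, and then letting $\varepsilon\downarrow 0$. Your transversality sketch for the \emph{optimal} strategy is fine in spirit (the paper uses an exponential-martingale representation of $G$ along the optimal path rather than a direct moment bound on $X^*+Y\Theta^*$, but either route works), yet the upper bound $V\le G$ for $R>1$ needs this extra device.
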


\begin{rem}
Note that $\lim_{z \uparrow \infty} \frac{1}{z} ( g(z) - \frac{z g'(z)}{1-R} )^{-1/R} = \beta m(1)/R$ and hence by continuity we may set $C(0,y,\theta) = y \theta \beta m(1)/R$. Then
for $t > \tau$ we have that
\[ C^*_t = C(0, Y_t, \Theta^*_t) =  \frac{\beta}{R} m(1) Y_t \Theta^*_t . \]
\end{rem}

\section{Proofs and verification arguments}

For $F = F(x,y, \theta, t) \in C^{1,2,1,1}$ such that $F_x>0$ define
operators $\sL$ and $\sM$ by
\begin{eqnarray*}
\mathcal{L}F & = &
\sup_{c > 0} \left\{ e^{-\beta t}\frac{c^{1-R}}{1-R} - c F_x \right\} +
\alpha y F_y + F_t
+ \frac{1}{2} \eta ^2 y^2 F_{yy} \\
& = &
\frac{R}{1 - R} e^{-\frac{\beta}{R} t}F_x ^{1-1/R} +
\alpha y F_y + F_t
+ \frac{1}{2} \eta ^2 y^2 F_{yy}, \\
\mathcal{M}F & = & F_\theta - yF_x.
\end{eqnarray*}

\begin{rem}
The state space of $(X_t,Y_t,\Theta_t,t)$ is $[0,\infty)\times(0,\infty)\times[0,\infty)\times[0,\infty)$, and we want to define $\sL$ and $\sM$ on this region including at the boundary. In practice, all the functions to which we apply the operators are of the form $F(x,y,\theta,t) = e^{-\beta t}\overline{F}(x,y,\theta)$ for some function $\overline{F}$ which is independent of $t$ in which case $F_t = - \beta F$, and this latter form is well defined at $t=0$. Also, we typically need $\sM F$ only for $\theta>0$. Then, given $F$ defined for $x>0$ we can define $F$ at $x=0$ by continuity, and then $\sM F|_{x=0}$ is also well defined. $\sL F$ at $\theta=0$ can be defined similarly, by first defining $F$ at $\theta=0$ by continuity. In order to define $\sL F$ at $x=0$ for $\theta>0$ we extend the domain of $F$ to $x > -\theta y$
and then show that $F_x$ and the other derivatives of $F$ are continuous across $x=0$ with this extension.
\end{rem}

\subsection{The Verification Lemma in the case of a depreciating asset}
Suppose $\epsilon \leq 0$. Our goal is to show that the conclusions of
Theorem~\ref{thm:4cases}(1) hold.

From Proposition~\ref{prop:crossings} we know $q^* = 0$. Define the candidate value function via
\begin{equation}
G(x,y,\theta,t) = e^{-\beta t} \left(\frac{R}{\beta}\right)^R \frac{(x + y\theta)^{1-R}}{1-R} \hspace{10mm} x \geq 0, \theta \geq 0.
\label{eq:vdeg1}
\end{equation}
The candidate optimal strategy is to sell all units of the risky asset immediately. The domain of $G$ can be extended to $-\theta y < x < 0$ for $\theta>0$, using the same functional form as in (\ref{eq:vdeg1}).

Prior to the proof of the theorem, we need the following lemma.

\begin{lem}
\label{deg1:inequal}
Suppose $\epsilon \leq 0$. Consider the candidate value function constructed in (\ref{eq:vdeg1}). Then on $(x \geq 0,\theta>0)$ we have $\mathcal{M} G = 0$, and on $(x \geq 0, \theta \geq 0)$ we have $\mathcal{L}G \leq 0$ with equality at
$\theta = 0$.
\end{lem}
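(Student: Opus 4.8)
The plan is to verify the two claims by direct computation with the explicit form (\ref{eq:vdeg1}) of $G$, exploiting the fact that $G$ depends on $(x,y,\theta)$ only through the combination $w := x + y\theta$. Write $G = e^{-\beta t}(R/\beta)^R \phi(w)$ with $\phi(w) = w^{1-R}/(1-R)$, so that $G_x = e^{-\beta t}(R/\beta)^R w^{-R}$, $G_\theta = y\, G_x$, $G_y = \theta\, G_x$, and $G_{yy} = -R\theta^2 e^{-\beta t}(R/\beta)^R w^{-R-1}$. The identity $\mathcal{M}G = G_\theta - y G_x = 0$ is then immediate on $\{x\ge 0,\ \theta>0\}$, which settles the first claim.

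For the inequality $\mathcal{L}G \le 0$ I would substitute these derivatives into the second (already-maximised) expression for $\mathcal{L}$. Using $G_t = -\beta G$ and $G_x^{1-1/R} = e^{-\beta t}e^{\beta t/R}(R/\beta)^{R-1} w^{-R+1}$, a short calculation gives
\[
\mathcal{L}G = e^{-\beta t}\left(\frac{R}{\beta}\right)^R w^{-R-1}\left[\, \frac{\beta}{R(1-R)} R w^2 \cdot\frac{\beta}{\beta} \;-\; \frac{\beta}{1-R} w^2 \;+\; \alpha\theta y\, w \;-\; \frac{R}{2}\eta^2\theta^2 y^2 \,\right],
\]
where the first two bracketed terms combine to $\tfrac{\beta}{1-R}\bigl(\tfrac{1}{1}-1\bigr)w^2$ — more precisely, the ``$R/(1-R)e^{-\beta t/R}G_x^{1-1/R}$'' term contributes $\tfrac{\beta}{1-R}w^2$ and the ``$G_t = -\beta G$'' term contributes $-\tfrac{\beta}{1-R}w^2$, so these cancel exactly. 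Hence
\[
\mathcal{L}G = e^{-\beta t}\left(\frac{R}{\beta}\right)^R w^{-R-1}\, y\theta\left[\, \alpha\, w - \frac{R}{2}\eta^2\, y\theta \,\right].
\]
Since $w = x + y\theta \ge y\theta > 0$ when $\theta>0$, and $\epsilon = \alpha/\beta \le 0$ forces $\alpha \le 0$, the bracket $\alpha w - \tfrac{R}{2}\eta^2 y\theta$ is strictly negative (both terms $\le 0$, the second strictly), so $\mathcal{L}G \le 0$ on $\{x\ge0,\ \theta>0\}$; and at $\theta=0$ the prefactor $y\theta$ vanishes, giving $\mathcal{L}G = 0$. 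By continuity of all the relevant derivatives across $\theta = 0$ (and across $x=0$ using the extension noted in the paper) the inequality extends to the closed region $\{x\ge0,\ \theta\ge0\}$.

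The one point requiring a little care — the ``main obstacle'', though it is minor — is bookkeeping the maximisation in $\mathcal{L}$: one must check that the optimal consumption rate $c$ attaining the supremum, namely $c = (e^{-\beta t}/G_x)^{1/R}$, is strictly positive (it is, since $G_x>0$), so that the reduced form of $\mathcal{L}$ is the correct one to use, and then verify the algebraic cancellation of the two $w^2$ terms that makes the sign of $\mathcal{L}G$ depend only on $\alpha \le 0$. I would also remark that the computation shows $\mathcal{L}G=0$ forces $\alpha=0$ when $\theta>0$, consistent with the boundary case $\epsilon=0$ being exactly the threshold of the degenerate regime.
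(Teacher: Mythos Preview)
Your proof is correct and follows essentially the same direct-computation approach as the paper; the paper parametrises by $z = y\theta/x$ and writes $\mathcal{L}G = \beta G\bigl[\epsilon(1-R)\tfrac{z}{1+z} - \tfrac{1}{2}\delta^2 R(1-R)(\tfrac{z}{1+z})^2\bigr]$, which is exactly your expression in $w = x+y\theta$ after noting $z/(1+z) = y\theta/w$. One small correction to your closing remark: when $\theta>0$ and $\alpha=0$ the bracket is $-\tfrac{R}{2}\eta^2 y\theta < 0$, so $\mathcal{L}G$ is still strictly negative --- equality holds \emph{only} at $\theta=0$, regardless of whether $\epsilon<0$ or $\epsilon=0$.
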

\begin{proof}
Given the form of the candidate value function in (\ref{eq:vdeg1}), we have
\[
\mathcal{M}G = e^{-\beta t} \left(\frac{R}{\beta}\right)^R y (x + y\theta)^{-R} - e^{-\beta t} \left(\frac{R}{\beta}\right)^R y (x + y\theta)^{-R} = 0.
\]
On the other hand, writing $z = y\theta/x$, provided $x>0$
\[
\mathcal{L}G = \beta \left( \frac{R}{\beta} \right)^R e^{-\beta t} \frac{(x + y\theta)^{1-R}}{1-R}
\left[\epsilon (1-R) \frac{z}{1 + z} - \frac{1}{2} \delta^2 R(1 - R) \left(\frac{z}{1 + z}\right)^2 \right]
\leq 0,
\]
with equality at $z = 0$. If $x=0$ then $\sL G = \beta G (1-R)[\epsilon - \frac{\delta^2 R}{2}] < 0$.
\end{proof}

\begin{thm}
\label{thm:deg1}
Suppose $\epsilon \leq 0$. Then the value function is
\begin{equation}
V(x,y,\theta,t) = e^{-\beta t} \left(\frac{R}{\beta}\right)^R \frac{(x + y\theta)^{1-R}}{1-R},
\label{eq:vsimple}
\end{equation}
and the optimal holdings $\Theta^*_t$ of the endowed asset, the
optimal consumption process $C^*_t$ and the resulting wealth process are given by
\begin{equation}
(\bigtriangleup \Theta^{*})_{t = 0} = -\theta_0,
\hspace{6mm}
C^{*}_t = \frac{\beta}{R} (x_0 + y_0\theta_0) e^{-\frac{\beta}{R} t},
\hspace{6mm}
X^*_t = (x_0 + y_0 \theta_0) e^{-\frac{\beta}{R} t}.
\label{eq:optmal3}
\end{equation}
\end{thm}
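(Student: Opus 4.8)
The plan is a two-sided verification argument built on Lemma~\ref{deg1:inequal}, with the function $G$ of (\ref{eq:vdeg1}) as the candidate value function and the ``sell the whole position immediately'' strategy as the candidate optimiser. By time-homogeneity of (\ref{eq:100b}) (shift the integration variable by $t$) it is enough to prove the claim for $t=0$ and then read off $V(x,y,\theta,t)=e^{-\beta t}V(x,y,\theta,0)$.

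\textbf{Upper bound $V\le G$.} Fix an admissible pair $(C,\Theta)\in\sA(x_0,y_0,\theta_0)$ with wealth process $X$ and apply the generalised It\^{o}/Dynkin formula to $s\mapsto G(X_s,Y_s,\Theta_s,s)$ along the dynamics (\ref{eq:95}). The structural point is that $G(x,y,\theta,s)$ depends on $(x,y,\theta)$ only through $e^{-\beta s}$ and $w:=x+y\theta$, and a sale at the prevailing price leaves $w$ unchanged; hence $G$ is continuous across every jump of $\Theta$ (in particular across the initial transaction, so that $G(X_0,Y_0,\Theta_0,0)=G(x_0,y_0,\theta_0,0)$), while the continuous singular part of $\Theta$ contributes $\sM G\,d\Theta^{c}=0$ by Lemma~\ref{deg1:inequal}. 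What is left is a local martingale plus the drift $(G_s+\alpha Y_s G_y+\frac12\eta^2 Y_s^2 G_{yy}-C_s G_x)\,ds$, and by $\sL G\le 0$ together with the definition of $\sL$ as a supremum over $c$ this drift is at most $-e^{-\beta s}C_s^{1-R}/(1-R)$. Localising along a reducing sequence $\tau_n\uparrow\infty$, taking expectations, and letting $n\to\infty$ and then $T\to\infty$ gives
\[
\E\Big[\int_0^\infty e^{-\beta s}\frac{C_s^{1-R}}{1-R}\,ds\Big]\le G(x_0,y_0,\theta_0,0)-\liminf_{n,T}\E\big[G(X_{T\wedge\tau_n},Y_{T\wedge\tau_n},\Theta_{T\wedge\tau_n},T\wedge\tau_n)\big];
\]
since $(C,\Theta)$ is arbitrary this yields $V(x_0,y_0,\theta_0,0)\le G(x_0,y_0,\theta_0,0)$ once the boundary term is controlled.

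\textbf{Lower bound and optimality.} Consider the strategy that sells all $\theta_0$ units at time $0$ (so $(\Delta\Theta^{*})_0=-\theta_0$, $X^*_0=w_0:=x_0+y_0\theta_0$, $\Theta^*_t\equiv0$) and then consumes $C^*_t=\frac{\beta}{R}w_0 e^{-\beta t/R}$; this produces $X^*_t=w_0 e^{-\beta t/R}\ge0$, so the pair is admissible. A direct integration gives $\E[\int_0^\infty e^{-\beta t}(C^*_t)^{1-R}/(1-R)\,dt]=(R/\beta)^R w_0^{1-R}/(1-R)=G(x_0,y_0,\theta_0,0)$, whence $V\ge G$ at $t=0$. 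Combining the two bounds proves $V=G$, shows this strategy is optimal, and gives (\ref{eq:vsimple})--(\ref{eq:optmal3}).

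\textbf{Main obstacle.} The delicate step is the limiting behaviour of the boundary term $\E[G(X_{T\wedge\tau_n},\dots)]$. For $R<1$ both $G$ and the running utility are non-negative, so the localised inequality already has the correct sign and monotone convergence closes the argument with no transversality condition required. For $R>1$ both are negative, and one must show $\liminf_{n,T}\E[G(X_{T\wedge\tau_n},\dots)]\ge0$, equivalently $e^{-\beta t}\E[(X_t+Y_t\Theta_t)^{1-R}]\to0$ along the localisation; I would establish this by a truncation argument exploiting $X\ge0$ and the growth bound on admissible consumption, exactly as in the corresponding step for Merton's problem. A secondary technicality is that $G\in C^{1,2,1,1}$ only on $\{x+y\theta>0\}$, which is the whole state space apart from the single point $x=\theta=0$; the It\^{o} expansion is valid there, and at that point no further consumption is possible so it is harmless.
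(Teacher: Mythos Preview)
Your lower bound and the $R<1$ half of the upper bound are correct and match the paper's argument essentially line for line (the paper phrases the $R<1$ step as ``$N^4$ is a local martingale bounded below, hence a supermartingale'' rather than via an explicit localising sequence, but this is the same idea).

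The gap is in the $R>1$ case. The transversality statement you propose to prove, namely $\liminf \E[G(X_{T\wedge\tau_n},\ldots)]\ge 0$, is \emph{false} for general admissible strategies. Take $\theta_0=0$, $x_0=1$ and consume so that $X_t=e^{-kt}$ with $k>\beta/(R-1)$; then
\[
e^{-\beta t}(X_t+Y_t\Theta_t)^{1-R}=e^{-\beta t}e^{k(R-1)t}\to\infty,
\]
so $G(X_t,Y_t,\Theta_t,t)\to-\infty$. Such strategies happen to have expected utility $-\infty$, so they do not threaten the inequality $V\le G$ itself, but your argument as written does not separate them out, and ``truncation exploiting $X\ge0$'' does not obviously do so either. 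The paper (following Davis and Norman) handles $R>1$ by a different device: apply the whole It\^o supermartingale argument not to $G$ but to the shifted function
\[
\widetilde V_\varepsilon(x,y,\theta,t)=G(x+\varepsilon,y,\theta,t),
\]
which still satisfies $\sL\widetilde V_\varepsilon\le 0$ and $\sM\widetilde V_\varepsilon=0$, but now obeys the uniform bound $0\ge\widetilde V_\varepsilon\ge e^{-\beta t}(R/\beta)^R\varepsilon^{1-R}/(1-R)$. This makes the boundary term go to zero for \emph{every} admissible strategy, yielding $V\le\widetilde V_\varepsilon$, and then one lets $\varepsilon\downarrow 0$. You should replace the vague ``truncation'' paragraph by this $\varepsilon$-shift.
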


\begin{proof}

Note that candidate optimal strategy given in (\ref{eq:optmal3}) is to
sell the entire holding of the risky asset at time zero (which gives $X^*_{0} = x_0 + y_0\theta_0$) and thereafter to finance consumption
from liquid wealth, whence the wealth process $(X^*_t)_{t \geq 0}$ is
deterministic and evolves as $dX^*_t = -C^*_t dt$. This gives $X^*_t = (x_0 +
y_0\theta_0)e^{-\frac{\beta}{R}t}$. It follows that the candidate optimal strategy is admissible.

The value function under the strategy proposed in (\ref{eq:optmal3}) is
\begin{eqnarray*}
\mathbb{E} \left[\int_0^\infty e^{-\beta t} \frac{{C^{*}_t}^{1-R}}{1-R} dt \right] & = &
\int_0^\infty e^{-\beta t} \left(\frac{\beta}{R}\right)^{1-R} \frac{\left( e^{-\frac{\beta}{R} t} (x_0
+ y_0\theta_0) \right)^{1-R}}{1-R} dt \\
& = & \left(\frac{R}{\beta}\right)^R \frac{(x_0 + y_0\theta_0)^{1 - R}}{1 - R} = G(x_0,y_0,\theta_0,0).
\end{eqnarray*}
Hence $V \geq G$.

Now, consider general admissible strategies. Suppose first that $R<1$.
Define the process $M=(M_t)_{t \geq 0}$ by
\begin{equation}
M_t = \int_0 ^t e^{-\beta s} \frac{C_s ^{1-R}}{1-R}ds +
G\left(X_t, Y_t, \Theta_t, t\right).
\label{eqn:Mdef}
\end{equation}
Applying the generalised It\^{o}'s formula~\cite[Section 4.7]{gito} to
$M_t$ and suppressing the argument $(X_{s-}, Y_s, \Theta_{s-},s)$ in
derivatives of $G$,
leads to
\begin{eqnarray}
\nonumber M_t - M_0 & = & \int_0^t \left[ e^{-\beta s} \frac{C_s ^{1-R}}{1-R} - C_s G_x + \alpha Y_s G_y
+ \frac{1}{2} \eta^2 Y_s ^2 G_{yy} + G_s \right]ds \\
\nonumber & & + \int_0^t (G_\theta - Y_s G_x) d\Theta_s \\
\label{eqn:Msde} & & + \sum_{\substack{0 \leq s \leq t}}
\left[G(X_s, Y_s, \Theta_s, s) - G(X_{s-}, Y_{s-}, \Theta_{s-}, s) - G_x (\bigtriangleup X)_s
- G_\theta (\bigtriangleup \Theta)_s \right] \\
\nonumber & & + \int_0^t \eta Y_s G_y dB_s \\
\nonumber & = & N_t^1 + N_t^2 + N_t^3 + N_t^4.
\end{eqnarray}
(Note that in the sum we allow for a portfolio rebalancing at $s=0$.)

Lemma~\ref{deg1:inequal} implies that $\mathcal{L}G \leq 0$ and $\mathcal{M}G = 0$, which leads to
$N_t^1 \leq 0$ and $N_t^2 = 0$. Using the fact that $(\Delta X)_s = -Y_s (\Delta \Theta)_s$ and
writing $\theta = \Theta_{s-}$, $x = X_{s-}$, $\chi = - (\Delta
\Theta)_s$ each non-zero jump in $N^3$ is of the form
\[
(\Delta N^3)_s = G(x+y \chi,y,\theta - \chi,s) - G(x,y,\theta,s) +
\chi \left[ G_\theta (x,y,\theta,s) - y G_x (x,y,\theta,s) \right].
\]
Given the form of the candidate value function in (\ref{eq:vdeg1}),
it is easy to see that
$\psi(\phi) = G(x+y \phi,y,\theta - \phi,s)$ is constant in $\phi$,
which gives $\psi(\chi)=\psi(0)$ and $yG_x = G_\theta$ whence
$(\Delta N^3) = 0$.
Then, since $R<1$, we have $0\leq M_t\leq M_0 + N^4_t$, and the local
martingale $N^4_t$ is
bounded from below and hence a supermartingale. Taking expectations
we find
$\mathbb{E} (M_t) \leq M_0 = G(x_0, y_0, \theta_0, 0)$, which gives
\begin{equation}
G(x_0, y_0, \theta_0, 0) \geq \mathbb{E} \int_0^t e^{-\beta s} \frac{{C_s}^{1-R}}{1-R} ds
+ \mathbb{E} G(X_t, Y_t, \Theta_t, t) \geq
\mathbb{E} \int_0^t e^{-\beta s} \frac{{C_s}^{1-R}}{1-R} ds,
\label{eq:16}
\end{equation}
where the last inequality follows since $G(X_t, Y_t, \Theta_t, t) \geq 0$
for $R \in (0,1)$. Letting
$t \to \infty$ in (\ref{eq:16}) leads to
\[
G(x_0, y_0, \theta_0, 0) \geq \mathbb{E} \int_0^\infty e^{-\beta t}
\frac{{C_t}^{1-R}}{1-R} dt,
\]
and taking a supremum over admissible strategies leads to $G \geq V$.

The case $R>1$ is  considered in the Appendix~\ref{app:R>1}.

\end{proof}

\subsection{Proof in the ill-posed case of Theorem~\ref{thm:4cases}}
Recall we are in the case where $R<1$ and $\epsilon
\geq \delta^2 R/2 + 1/(1-R)$.

It is sufficient to give an example of an admissible strategy when $\theta>0$ for which the expected
utility of consumption is infinite. Note that $V(x,y,0,t) = e^{-\beta t}x^{1-R}R^R \beta^{-R}/(1-R)$ so that
the value function is not continuous at $\theta=0$.

Consider a consumption and sale strategy pair
$((\tilde{C})_{t\geq 0}$, $(\tilde{\Theta})_{t\geq 0})$, given by
\begin{equation}
\tilde{\Theta}_t = \tilde{\Theta}_t(\phi) = e^{-\phi t} \theta_0,
\hspace{6mm}
\tilde{C}_t = \tilde{C}_t(\phi) = \phi Y_t \tilde{\Theta}_t = \phi
y_0\theta_0 \exp\left\{\beta (\epsilon -
\delta^2/2
- \phi/\beta)t  + \delta \sqrt{\beta} B_t \right\},
\label{eq:optmal4}
\end{equation}
where $\phi$ is some positive constant.

Note first that that such strategies are admissible
since the corresponding wealth process satisfies
$d\tilde{X}_t = -\phi Y_t \tilde{\Theta}_t dt + Y_t d\tilde{\Theta}_t = 0$,
and hence $(\tilde{X_t})_{t \geq 0} = x_0 > 0$. In particular, consumption is financed
by the sale of the endowed asset only.

The expected discounted utility from consumption $\tilde{G} = \tilde{G}(\phi)$
corresponding to the consumption and sale processes $(\tilde{C}, \tilde{\Theta})$ is
given by
\begin{eqnarray*}
\tilde{G}& = &\mathbb{E}
\left[ \int_0^ \infty e^{-\beta t}  \frac{{\tilde{C}_t}^{1-R}}{1-R} dt \right]  \\
& = & \frac{(\phi y_0 \theta_0)^{1-R}}{1-R} \mathbb{E}
\left[ \int_0^\infty
\exp \left\{\beta\left[(1-R)\left(\epsilon - \frac{\delta^2}{2} - \frac{\phi}{\beta}\right)
- 1\right]t + (1-R) \delta \sqrt{\beta}B_t \right\} dt \right] \\
& = & \frac{(\phi y_0 \theta_0)^{1-R}}{1-R} \int_0^\infty
\exp \left\{\beta(1-R) \left[ \left(\epsilon
- \frac{\delta^2 R}{2} - \frac{1}{1-R} \right) - \frac{\phi}{\beta} \right]
t \right\}  dt
\end{eqnarray*}

Suppose first that $\epsilon > \delta^2 R/2 + 1/(1-R)$. Then for $\lambda \in (0,1)$
and $\phi = \lambda \beta (\epsilon - \delta^2 R/2 - 1/(1-R))$
we have
\[ \left(\epsilon - \frac{\delta^2R}{2} - \frac{1}{1-R}\right) - \frac{\phi}{\beta}
= (1- \lambda) \left(\epsilon - \frac{\delta^2R}{2} - \frac{1}{1-R}\right) > 0,
\]
and $\tilde{G}$ is infinite.

Now suppose that $\epsilon = \delta^2 R/2 + 1/(1-R)$. Then
\[ \tilde{G}(\phi) = \frac{(\phi y_0 \theta_0)^{1-R}}{(1-R)}
\frac{1}{\phi(1-R)} = \phi^{-R}
\frac{(y_0 \theta_0)^{1-R}}{(1-R)^2}
\]
and $\tilde{G}(\phi) \uparrow \infty$ as $\phi \downarrow 0$.

\subsection{The Verification Lemma in the first non-degenerate case with finite
critical exercise ratio.}

Suppose $0 < \epsilon < \delta^2 R$ and if $R<1$, $\epsilon <
\frac{\delta^2}{2}R + \frac{1}{1-R}$. From
Proposition~\ref{prop:crossings} we know $0<\dummyq^*<1$.

Recall the definition $N(\dummyq) = n(\dummyq)^{-R} (1-\dummyq)^{R-1}$ and that $W$
is inverse to $N$. We have $h^* = N(\dummyq^*)$.

\begin{prop}
\label{prop:NWw}
\begin{enumerate}
\item For $R<1$, $N$ is increasing on $[0,\dummyq^*]$.
 $W$ is increasing and $0 < W(v) < \dummyq^*$ on $(1,h^*)$. For $R>1$,
$N$ is decreasing on $[0,\dummyq^*]$.
 $W$ is decreasing and $0 < W(v) < \dummyq^*$ on $(h^*, 1)$.
\item Let $w(v) = v(1-R)W(v)$.
Then $w$
solves
\begin{equation} \frac{\delta^2}{2} w(v) w'(v) -v  + \left(\epsilon -
\frac{\delta^2}{2}
\right) w(v) + \left( v - \frac{w(v)}{1-R} \right)^{1- 1/R}
= 0.
\label{eqn:wode}
\end{equation}
\item For $R<1$ and $1 < v < h^*$, and for $R>1$ and $h^* < v < 1$ we have $w'(v)
< 1 - R w(v)/((1-R)v)$ with $w'(h^*)
= 1 - R w(h^*)/((1-R)h^*)$.
\end{enumerate}
\end{prop}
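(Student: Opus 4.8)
The plan is to reduce all three parts to one logarithmic-derivative identity for $N$ together with the change of variables $v=N(q)$. For Part (1), I would differentiate $\log N(q)=-R\log n(q)+(R-1)\log(1-q)$ and substitute the defining equation (\ref{eqn:node}) for $n'/n$; the two copies of $(1-R)/(1-q)$ cancel, leaving
\[
\frac{N'(q)}{N(q)}=\frac{\delta^{2}}{2}(1-R)^{2}\,\frac{q}{\ell(q)-n(q)},\qquad q\in(0,q^{*}].
\]
On $(0,q^{*})$ one has $n>0$ (so $N>0$), and $\ell-n$ cannot vanish there, since at such a point the right-hand side of (\ref{eqn:node}) would blow up and $n$ could not be continued; hence $\ell-n$ keeps a strict sign on $(0,q^{*})$. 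The initial condition $n'(0)/(1-R)<\ell'(0)/(1-R)$, together with $n(0)=\ell(0)=1$ and the resolution of the indeterminate form of (\ref{eqn:node}) at $q=0$ carried out in Appendix~\ref{app:propertiesofna}, pins this sign down: $\ell-n>0$ when $R<1$ and $\ell-n<0$ when $R>1$. Therefore $N'/N>0$ (resp.\ $<0$), so $N$ is strictly increasing (resp.\ decreasing) on $[0,q^{*}]$, with $N(0)=1$ and $N(q^{*})=h^{*}$; its inverse $W$ is then a continuous strictly monotone bijection between the open interval with endpoints $1$ and $h^{*}$ and the interval $(0,q^{*})$, which is exactly the stated monotonicity and the bounds $0<W(v)<q^{*}$.

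For Part (2), since $N'$ does not vanish on $(0,q^{*})$ we have $W\in C^{1}$ there. Writing $q=W(v)$, so that $v=N(q)$ and $w(v)=(1-R)qv=(1-R)qN(q)$, the chain rule together with the identity above gives
\[
w'(v)=(1-R)q+\frac{2(\ell(q)-n(q))}{\delta^{2}(1-R)q}.
\]
The computation then closes through two elementary identities: $v-\frac{w}{1-R}=v(1-q)$, and since $v=n^{-R}(1-q)^{R-1}$ we get $v(1-q)=n^{-R}(1-q)^{R}$, hence $(v-\frac{w}{1-R})^{1-1/R}=n^{1-R}(1-q)^{R-1}=nv$. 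Substituting $w=(1-R)qv$, the formula for $w'$, and this last identity into the left-hand side of (\ref{eqn:wode}), a common factor $v$ comes out and what remains is $\frac{\delta^{2}}{2}(1-R)^{2}q^{2}+\ell(q)-1+(\epsilon-\frac{\delta^{2}}{2})(1-R)q$; inserting the explicit quadratic $\ell(q)=1+(\frac{\delta^{2}}{2}-\epsilon)(1-R)q-\frac{\delta^{2}}{2}(1-R)^{2}q^{2}$ makes this vanish identically, proving (\ref{eqn:wode}).

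For Part (3), because $w=(1-R)qv$ we have $1-\frac{Rw(v)}{(1-R)v}=1-Rq$, and using the formula for $w'$ the inequality $w'(v)<1-Rq$ is equivalent to $\frac{2(\ell(q)-n(q))}{\delta^{2}(1-R)q}<1-q$. Writing $\ell(q)=m(q)+\frac{\delta^{2}}{2}(1-R)q(1-q)$ and dividing by $\frac{\delta^{2}}{2}(1-R)q$ (positive for $R<1$, negative for $R>1$, so the sense of the inequality flips in the latter case) this reduces to $n(q)>m(q)$ for $R<1$ and to $n(q)<m(q)$ for $R>1$. Both hold for all $q\in(0,q^{*})$ directly from the definition of $q^{*}$ as the first crossing of $n$ and $m$ (and continuity of $n,m$), while $n(q^{*})=m(q^{*})$, which yields the claimed equality at $v=h^{*}$, i.e.\ at $q=q^{*}$.

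Everything above is a chain of explicit differentiations and cancellations, so the only genuinely nontrivial input is the one invoked in Part (1): that the solution of (\ref{eqn:node}) remains positive on $[0,q^{*}]$, that $\ell-n$ keeps a strict (and correctly signed) value on $(0,q^{*})$, and that the solution behaves well at the singular starting point $q=0$ where (\ref{eqn:node}) has an indeterminate form. I expect this to be the main obstacle; it is, however, exactly the qualitative analysis of (\ref{eqn:node}) performed in Appendix~\ref{app:propertiesofna} in the course of proving Proposition~\ref{prop:crossings}, so it can be quoted rather than redone here.
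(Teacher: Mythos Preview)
Your proposal is correct and follows essentially the same route as the paper: both derive the logarithmic-derivative identity $N'/N=\tfrac{\delta^2}{2}(1-R)^2 q/(\ell-n)$ (the paper writes the equivalent ODE for $N$ with $n=N^{-1/R}(1-q)^{1-1/R}$), invert to get $W'$, and then reduce Part~(3) to the comparison $n\gtrless m$ on $(0,q^*)$ via $\ell-m=\tfrac{\delta^2}{2}(1-R)q(1-q)$. Your write-up is in fact more explicit than the paper's (which disposes of Part~(2) in one line); the only cosmetic slip is that in Part~(3) you should say ``multiplying by $\tfrac{\delta^2}{2}(1-R)q$'' rather than ``dividing by'', though your conclusion is correct either way.
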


The proof of Proposition~\ref{prop:NWw} is given in the appendix.

Now define $h$ on $[1,h^*)$ by $\frac{dh}{du} = w(h) = (1-R)hW(h)$ subject to $h(u^*)=h^*$.
Then $h$ solves (\ref{eq:885}) and $w'(h) w(h) = \frac{d^2h}{du^2}$.
Let $g(z) = (\frac{R}{\beta})^R h(\ln z)$. Then $g$ solves (\ref{eq:884}).

\begin{lem}
\label{lem:smoothfit}
Let $m(q^*)^{-R}$, $z^*$ and $g$ be as given in Equations~(\ref{eq:886}) and
(\ref{eq:884})
of
Theorem~\ref{thm:maincase}.
Then,
$g\left(z\right)$, $g'\left(z\right)$, $g''\left(z\right)$
are continuous at $z=z^{*}$.
\end{lem}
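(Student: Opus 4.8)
The plan is to verify continuity of $g$, $g'$ and $g''$ at $z=z^*$ by matching the two branches of the definition \eqref{eq:884}, using the characterisation of $h$ via $dh/du = w(h)$ and the boundary data at $u^*=\ln z^*$, together with the three parts of Proposition~\ref{prop:NWw} and the identity $h^*(1-q^*)^{1-R}=m(q^*)^{-R}$. First I would change variables from $z$ to $u=\ln z$, so that the ``lower'' branch $z\in(0,z^*]$ becomes $h(u)$ on $u\le u^*$ and the ``upper'' branch $z\in[z^*,\infty)$ becomes, after substituting $z=e^u$ and $1+z = e^u(1+e^{-u})$, the explicit function $\phi(u) := m(q^*)^{-R}(1+e^u)^{1-R}$ on $u\ge u^*$ (absorbing the common factor $(R/\beta)^R$, which is irrelevant for continuity). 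Then $g,g',g''$ are continuous at $z^*$ if and only if $h,\,\dot h,\,\ddot h$ agree with $\phi,\,\dot\phi,\,\ddot\phi$ at $u^*$, since $z\mapsto u$ is a smooth bijection near $z^*>0$.

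Next I would check the three matching conditions in turn. For the value match, $h(u^*)=h^*=N(q^*)=n(q^*)^{-R}(1-q^*)^{R-1}$ by definition, while $\phi(u^*) = m(q^*)^{-R}(1+z^*)^{1-R}$; using $1+z^* = (1-q^*)^{-1}$ from \eqref{eq:886} this is $m(q^*)^{-R}(1-q^*)^{R-1}$, and these coincide precisely by the stated identity $h^*(1-q^*)^{1-R} = m(q^*)^{-R}$. For the first-derivative match, $\dot h(u^*) = w(h^*) = (1-R)h^* W(h^*) = (1-R)h^* q^*$ since $W(h^*)=q^*$; and $\dot\phi(u^*) = m(q^*)^{-R}(1-R)(1+z^*)^{-R}e^{u^*} = (1-R)h^* z^*/(1+z^*) = (1-R)h^* q^*$ using $z^*/(1+z^*)=q^*$ and the value match already established. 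So the first derivatives agree. For the second-derivative match, on the lower branch $\ddot h = w'(h)w(h)$, so $\ddot h(u^*) = w'(h^*)w(h^*) = w'(h^*)(1-R)h^*q^*$, and by Proposition~\ref{prop:NWw}(3) we have $w'(h^*) = 1 - Rw(h^*)/((1-R)h^*) = 1 - Rq^*$; hence $\ddot h(u^*) = (1-R)(1-Rq^*)h^* q^*$. On the upper branch a direct differentiation of $\phi(u)=m(q^*)^{-R}(1+e^u)^{1-R}$ gives $\ddot\phi(u) = m(q^*)^{-R}(1-R)e^u(1+e^u)^{-R-1}\big[(1+e^u) - R e^u\big]$, which at $u^*$ equals $(1-R)h^* q^*\big[1 - Rq^*\big]$ after again using $e^{u^*}/(1+e^{u^*})=q^*$. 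The two expressions coincide, completing the check.

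The only real content here is Proposition~\ref{prop:NWw}(3), which is what forces the second derivatives to match; parts (1)--(2) guarantee that $W$ and hence $h$ are well defined and smooth near the matching point, so that the computation of $\ddot h$ via $w'(h)w(h)$ is legitimate. I expect the main obstacle to be purely bookkeeping: keeping the factors $(1-q^*)$, $z^*$, $1+z^*$ and $m(q^*)^{-R}$ straight while repeatedly invoking the two relations $1+z^* = (1-q^*)^{-1}$ and $h^*(1-q^*)^{1-R}=m(q^*)^{-R}$, and being careful that the derivative identities $dh/du = w(h)$ and $d^2h/du^2 = w'(h)w(h)$ (noted just before the lemma) are used on the correct branch and with the correct sign of $u^*-u$ in the two cases $R<1$ and $R>1$. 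Since the $R>1$ case uses the same formula \eqref{eq:884} for $g$ (as remarked in Theorem~\ref{thm:maincase}(ii)) with $h$ now defined on $(h^*,1)$ by the analogous integral equation, the identical computation applies verbatim once one notes that $dh/du = w(h) = (1-R)hW(h)$ still holds with the sign conventions of Proposition~\ref{prop:NWw}(1).
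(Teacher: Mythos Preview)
Your proposal is correct and follows essentially the same approach as the paper's proof: both verify the three matching conditions by direct computation, invoking the identities $1+z^*=(1-q^*)^{-1}$, $h^*(1-q^*)^{1-R}=m(q^*)^{-R}$, $W(h^*)=q^*$, and the crucial equality $w'(h^*)=1-Rq^*$ from Proposition~\ref{prop:NWw}(3) for the second-derivative match. The only cosmetic difference is that the paper works in the $z$ variable (matching $g(z^*\pm)$, $z g'(z^*\pm)$, $z^2 g''(z^*\pm)$) while you change to $u=\ln z$ and match $h,\dot h,\ddot h$ against $\phi,\dot\phi,\ddot\phi$; since $z^2 g'' = (R/\beta)^R(\ddot h - \dot h)$ on the lower branch, the two computations are equivalent once first derivatives agree.
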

\begin{proof}
We have
\[ g(z^*+) = \left(\frac{R}{\beta}\right)^{R}
h^* (1-\dummyq^*)^{1-R} \left(1+z^*\right)^{1-R} =
\left(\frac{R}{\beta}\right)^{R} h^* = \left(\frac{R}{\beta}\right)^{R}
h(u^*) = g(z^*-). \]

For the first derivative we have for $z>z^*$,
\[
zg'(z) =   (1-R) \left( \frac{z g(z)}{1+z} \right)
\]
and then since $\frac{z^*}{1+z^*} = \dummyq^*$, $z^* g'(z^*) = (1-R)
\left(\frac{R}{\beta}\right)^{R} h^* \dummyq^*$.
Meanwhile, for $z<z^*$, and noting that $\frac{dh}{du} = h(1-R)W(h)
=w(h)$,
\[
zg'(z) = \left(\frac{R}{\beta}\right)^{R} h'(u) =
\left(\frac{R}{\beta}\right)^{R} w(h)
\]
so that $z^* g'(z^*-) = \left(\frac{R}{\beta}\right)^{R} w(h^*)$
and the result follows from the substitution $w(h^*) =
(1-R)h^*W(h^*) = (1-R) h^* \dummyq^*$.

Finally, for $z>z^*$
\begin{equation}
\label{eqn:2OSF} z^2 g''(z) = -R(1-R) \left(\frac{R}{\beta}\right)^{R} m(q^*)^{-R} (1+z)^{1-R}
\left(
\frac{z}{1+z} \right)^2 = -R(1-R) g(z) \left(
\frac{z}{1+z} \right)^2
\end{equation}
and $(z^*)^2 g''(z^*+) = -R(1-R)g(z^*) (\dummyq^*)^2$. For $z<z^*$,
\begin{equation}
\label{eqn:2OSFb} z^2 g''(z) = \left(\frac{R}{\beta}\right)^{R} (h'' -
h') =
\left(\frac{R}{\beta}\right)^{R} (w'(h) - 1)w(h)
\end{equation}
and at $z^*$, $(z^*)^2 g''(z^*-) = -R(1-R) \left(\frac{R}{\beta}\right)^{R} h^*
(\dummyq^*)^2$ where we use Proposition~\ref{prop:NWw} (3).
\end{proof}

\begin{prop}
\label{prop:gprop}
Suppose $g\left(z\right)$ solves
(\ref{eq:884}). Then for $R<1$, $g$ is an increasing concave function such
that $g(0)=
(\frac{R}{\beta})^R$. Otherwise, for $R>1$, $g$ is a decreasing convex function
such that $g(0)=(R/\beta)^R$ and $g(z) \geq 0$.
Further, for all values of $R$ we have that
$0 \geq Rg'(z)^2 + (1-R) g(z)
g''(z)$ with equality for $z \geq z^*$.
\end{prop}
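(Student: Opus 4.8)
The plan is to analyse the two branches of the definition (\ref{eq:884}) of $g$ separately and then glue them at $z=z^{*}$ using the $C^{2}$-matching of Lemma~\ref{lem:smoothfit}; since within each open branch $g$ is manifestly smooth, this will give $g\in C^{2}(0,\infty)$. Write $K_{0}=(R/\beta)^{R}$. On the outer branch $[z^{*},\infty)$ we have $g(z)=K_{0}m(q^{*})^{-R}(1+z)^{1-R}$, and differentiating the power directly gives $g'(z)=K_{0}m(q^{*})^{-R}(1-R)(1+z)^{-R}$ and $g''(z)=-K_{0}m(q^{*})^{-R}R(1-R)(1+z)^{-1-R}$. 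From these, $g>0$ there, $g$ is increasing and concave when $R<1$ and decreasing and convex when $R>1$, and a one-line cancellation shows that $Rg'(z)^{2}+(1-R)g(z)g''(z)=0$ identically on $[z^{*},\infty)$.

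Next I would treat the inner branch $(0,z^{*}]$. Put $u=\ln z$, so $g(z)=K_{0}h(u)$ with $h'=w(h)$, $w(h)=(1-R)hW(h)$. As in the proof of Lemma~\ref{lem:smoothfit} the chain rule gives $zg'(z)=K_{0}w(h)$ and, by (\ref{eqn:2OSFb}), $z^{2}g''(z)=K_{0}(w'(h)-1)w(h)$. By Proposition~\ref{prop:NWw}(1), on the relevant range $W(h)\in(0,q^{*})$, so $w(h)$ has the sign of $1-R$; hence $g'(z)=K_{0}w(h)/z$ is positive for $R<1$ and negative for $R>1$, settling monotonicity. For the second derivative I would combine Proposition~\ref{prop:NWw}(3) with the elementary identity $w(h)/((1-R)h)=W(h)$ (immediate from $w(v)=v(1-R)W(v)$) to get $w'(h)<1-RW(h)<1$, strictly for $z<z^{*}$ and with equality in the first inequality at $z=z^{*}$; then $w'(h)-1<0$, so $z^{2}g''(z)=K_{0}(w'(h)-1)w(h)$ is negative for $R<1$ (concave) and positive for $R>1$ (convex). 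Positivity of $g$ is immediate for $R<1$ (on the outer branch $g>0$ and on the inner $h>1$), and for $R>1$ it follows since $g$ is decreasing with $g(z)=K_{0}m(q^{*})^{-R}(1+z)^{1-R}\to0$ as $z\to\infty$. Finally, to identify $g(0)$ I would note that $h$ is monotone in $u$ (by the sign of $w$ just found) and bounded between $1$ and $h^{*}$, hence extends to all of $(-\infty,u^{*}]$ and $\bar h:=\lim_{u\to-\infty}h(u)$ exists with $w(\bar h)=0$; since $W(1)=0$ and $W$ does not vanish strictly between $1$ and $h^{*}$, $\bar h=1$, whence $g(0+)=K_{0}h(-\infty)=K_{0}=(R/\beta)^{R}$.

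For the last inequality on the inner branch I would substitute the chain-rule expressions to obtain
\[
Rg'(z)^{2}+(1-R)g(z)g''(z)=\frac{K_{0}^{2}w(h)}{z^{2}}\Bigl[Rw(h)+(1-R)h\bigl(w'(h)-1\bigr)\Bigr],
\]
and then divide the bracket by $(1-R)h$ — positive if $R<1$, negative (so the inequality flips) if $R>1$. Using $w(h)/((1-R)h)=W(h)$ once more, the right-hand side is $\le0$ exactly when $w'(h)\le1-RW(h)$, which is precisely Proposition~\ref{prop:NWw}(3): strict for $z<z^{*}$, an equality at $z=z^{*}$. Combined with the identity $Rg'^{2}+(1-R)gg''\equiv0$ on the outer branch and the continuity of $g,g',g''$ at $z^{*}$ from Lemma~\ref{lem:smoothfit}, this yields $0\ge Rg'(z)^{2}+(1-R)g(z)g''(z)$ for all $z>0$, with equality for $z\ge z^{*}$.

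The hard part will be the sign bookkeeping when $R>1$: the quantities $w(h)$, $(1-R)h$ and $w'(h)-1$ all change sign relative to the case $R<1$, and dividing by $(1-R)h$ reverses inequalities, so each step must be checked; the conceptual content, however, is just the identity $w(h)/((1-R)h)=W(h)$, which converts Proposition~\ref{prop:NWw}(3) into exactly the estimate required, together with the boundedness--monotonicity argument pinning down $h(-\infty)=1$.
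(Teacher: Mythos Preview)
Your proof is correct and follows essentially the same route as the paper: both reduce everything on the inner branch to the identities $zg'=K_0w(h)$ and $z^2g''=K_0(w'(h)-1)w(h)$, then read off monotonicity, convexity and the final inequality directly from Proposition~\ref{prop:NWw}(3). The only point of divergence is the evaluation of $h(-\infty)=1$: the paper changes variables back to $q$ in the integral (\ref{eq:885}) and shows it diverges as $h\to1$ via the asymptotic $\ell(q)-n(q)\sim Cq$ near $q=0$, whereas your bounded--monotone--autonomous--ODE argument (the limit must be an equilibrium of $h'=w(h)$, and $h=1$ is the only one) is an equally valid and arguably cleaner shortcut, provided one notes that $w$ is locally Lipschitz at $h=1$ (since $W'(1)$ is finite) so that the solution cannot reach $1$ in finite backward time.
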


\begin{proof}
Consider first $R<1$. Since the statements are immediate in the region $z \geq z^*$, and
since there is second order smooth fit at $z^*$ the result will
follow if $h(-\infty)=1$, $h$ is increasing and, using
(\ref{eqn:2OSFb}),
$w(h) w'(h) - w(h) \leq 0$. The last two properties follow from
Proposition~\ref{prop:NWw} since $w(h)
\geq 0$ and $w'(h) < 1$.

To evaluate $h(-\infty)$ note that
\[ u^* - u = \int_{h(u)}^{h^*} \frac{df}{(1-R)fW(f)}
= \int_{W(h(u))}^{\dummyq^*}
\frac{N'(\dummyq)}{(1-R)N(\dummyq)\dummyq} d\dummyq = \int_{W(h(u))}^{\dummyq^*}
\frac{\frac{\delta^2}{2}(1-R)}{\ell(\dummyq)-n(\dummyq)} d\dummyq . \]
We have that $\ell(\dummyq)-n(\dummyq)$ is bounded away from zero when $\dummyq$ is
bounded away from zero. Further, near $\dummyq=0$ we have
$\ell(\dummyq)-n(\dummyq) \sim C\dummyq$
for some positive constant $C = \ell'(0) - n'(0+)$. Hence
$W(h(-\infty))=0$ and $h(-\infty)=1$, since $W(1)=0$.

For $R>1$, and $z \geq z^*$, the statement holds immediately. For $z \leq z^*$,
 Proposition~\ref{prop:NWw} implies that
$h$ is decreasing and $w(h) \leq 0$, $w'(h) < 1$. Together with (\ref{eqn:2OSFb}),
we have $g$ is a decreasing convex function and $g(z) \geq 0$ given that $h \in [0,1]$.

For the final statement of the proposition, for $z \geq z^*$ the result follows immediately, whereas for $z< z^*$
\[ 
(1-R)g g'' z^2 + R (z g')^2  =  \left( \frac{R}{\beta} \right)^{2R} \left[(1-R) h w(h)[w'(h)-1] + Rw(h)^2 \right] \leq 0
\] 
where the final inequality follows from Proposition~\ref{prop:NWw}(3), noting that $(1-R)w(h) \geq 0$.

\end{proof}

Define the candidate value function via
\begin{equation}
\label{eqn:Gdef}
G(x,y,\theta,t) = e^{-\beta t}\frac{x^{1-R}}{1-R}
g \left( \frac{y
\theta}{x} \right) \hspace{12mm} x>0, \theta>0;
\end{equation}
and extend to $x \leq 0$ and $\theta=0$ using the formulae
\begin{eqnarray}
\label{eqn:Gdefxleq0}
G(x,y,\theta,t) & = & e^{-\beta t}\frac{(x+y\theta)^{1-R}}{1-R}
m(q^*)^{-R} \hspace{10mm} - \theta y < x \leq 0, \theta>0; \\
\label{eqn:Gdeftheta=0}
G(x,y,0,t) & = & e^{-\beta t}\frac{x^{1-R}}{1-R} \left( \frac{R}{\beta} \right)^R \hspace{20mm} x \geq 0, \theta = 0.
\end{eqnarray}

\begin{lem}
\label{lem:concave}
Fix $y$ and $t$. Then $G=G(x,\theta)$ is concave in $x$ and
$\theta$ on $[0,\infty) \times [0,\infty)$. In particular, if $\psi(\chi) = G(x - \chi y \phi, y ,
\theta
+ \chi \phi, t)$, then $\psi$ is concave in $\chi$.
\end{lem}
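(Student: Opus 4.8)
The plan is to reduce the two-variable concavity statement to the one-variable statement that the function $\Phi(v) = v^{1-R}g(1/v)$ (equivalently, a suitable reparametrisation of $g$) is concave, and then to read off the concavity of $\Phi$ from the properties of $g$ already established in Proposition~\ref{prop:gprop}, namely $g$ increasing/concave (resp. decreasing/convex) according as $R<1$ (resp. $R>1$), together with the curvature inequality $Rg'(z)^2 + (1-R)g(z)g''(z) \leq 0$. Because $G$ is homogeneous of degree $1-R$ in $(x,y\theta)$ for fixed $y,t$ — indeed on $x>0,\theta>0$ we have $G = e^{-\beta t}(1-R)^{-1}x^{1-R}g(y\theta/x)$, and the extensions in (\ref{eqn:Gdefxleq0})--(\ref{eqn:Gdeftheta=0}) are the natural boundary limits — concavity in $(x,\theta)$ on $[0,\infty)^2$ is equivalent to concavity of the restriction of $G$ to each ray $\{(x,\theta): x = s a, \theta = s b, s>0\}$ being automatic from homogeneity of degree $\leq 1$... no: homogeneity of degree $1-R$ with $1-R$ possibly negative forces me instead to check concavity directly via the Hessian.

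Concretely, first I would fix $y=t=1$ (the general case follows by the scaling $G(x,y,\theta,t) = e^{-\beta t}G(x,1,y\theta,0)$, which is an affine change of the variable $\theta$ composed with multiplication by a positive constant, preserving concavity). Then on the open region $x>0,\theta>0$ I compute the Hessian of $\overline G(x,\theta) := x^{1-R}g(\theta/x)/(1-R)$. Writing $z=\theta/x$, routine differentiation gives $\overline G_{\theta\theta} = x^{-1-R}g''(z)/(1-R)$, $\overline G_{x\theta} = -x^{-1-R}[zg''(z)]/(1-R)$ (plus lower-order terms I will track), and $\overline G_{xx} = x^{-1-R}[\,\text{expression in }g,g',g''\,]/(1-R)$; the determinant of the Hessian, after simplification, is proportional to $-(Rg'(z)^2 + (1-R)g(z)g''(z))$ times a positive factor, and the sign of $\overline G_{\theta\theta}$ is the sign of $g''(z)/(1-R)$, which is $\leq 0$ by Proposition~\ref{prop:gprop} (concave $g$ when $R<1$, convex $g$ when $R>1$, in both cases $g''/(1-R)\leq 0$). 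Hence the Hessian is negative semidefinite on $x>0,\theta>0$, so $\overline G$ is concave there.

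It remains to handle the boundary and the extension. For $x=0,\theta>0$ and for $-\theta y < x\leq 0$, the function is given by the explicit formula (\ref{eqn:Gdefxleq0}), $e^{-\beta t}(x+y\theta)^{1-R}m(q^*)^{-R}/(1-R)$, which is concave in $(x,\theta)$ (it is a concave function of the linear form $x+y\theta$, since $s\mapsto s^{1-R}/(1-R)$ is concave for all $R>0$); on $\theta=0$ it is $e^{-\beta t}x^{1-R}(R/\beta)^R/(1-R)$, concave in $x$. Since $G$ is $C^1$ across $z=z^*$ and across $x=0$ (the first-derivative matching at $z=z^*$ is Lemma~\ref{lem:smoothfit}; matching across $x=0$ uses that $g(z^*+) = (R/\beta)^R m(q^*)^{-R}(1+z^*)^{1-R}/\cdots$ agrees with the formula (\ref{eqn:Gdefxleq0}) at the glue, which one checks from $z^* = q^*/(1-q^*)$), a piecewise-$C^2$ function that is concave on each piece and globally $C^1$ is concave on the whole domain. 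The final sentence, concavity of $\psi(\chi) = G(x-\chi y\phi, y, \theta+\chi\phi, t)$, is then immediate: $\psi$ is the composition of the concave function $G(\cdot,y,\cdot,t)$ with the affine map $\chi\mapsto(x-\chi y\phi,\theta+\chi\phi)$.

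The main obstacle I anticipate is the gluing step: verifying that the negative-semidefinite Hessian on the interior together with the concavity of the explicit boundary/extension formulas really does yield global concavity, which needs the $C^1$-matching across both $z=z^*$ and $x=0$ to be stated cleanly (the former is Lemma~\ref{lem:smoothfit}, the latter must be extracted from the definitions). A secondary nuisance is keeping the signs straight in the Hessian determinant computation when $1-R$ changes sign, but the key identity $\det \mathrm{Hess}\,\overline G \propto -(Rg'^2 + (1-R)gg'')\times(\text{positive})$ together with Proposition~\ref{prop:gprop} settles it uniformly in $R$.
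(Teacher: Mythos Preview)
Your approach is essentially the same as the paper's: compute the Hessian of $G$ in $(x,\theta)$ on the interior, show that its determinant reduces to a positive multiple of $-\bigl(R(zg')^2 + (1-R)g\,z^2 g''\bigr)$, and invoke Proposition~\ref{prop:gprop} for both this inequality and for the sign of $G_{\theta\theta}$ (via $g''/(1-R)\le 0$). The paper carries out the same Hessian computation and uses the same two ingredients from Proposition~\ref{prop:gprop}.

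Two small remarks. First, your worry about gluing across $z=z^*$ is taken care of by Lemma~\ref{lem:smoothfit}, which gives \emph{second-order} smooth fit, so the Hessian is continuous across the interface and the interior argument applies on both pieces at once; the paper simply uses this implicitly rather than phrasing it as a $C^1$-gluing of two concave pieces. The extension to $x\le 0$ is not part of the domain $[0,\infty)^2$ in the lemma, so you need only continuity to the boundary, not the separate concavity check you outline (though it does no harm). Second, your final sentence — $\psi$ is concave as the composition of concave $G$ with an affine map — is cleaner than the paper's version, which writes out $\psi''$ as a quadratic form in the Hessian explicitly.
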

\begin{proof}
Consider first $R<1$. In order to show the concavity of the candidate value function
it is sufficient
to show that $G(x,0)$ is concave in $x$, $G(0,\theta)$ is concave in $\theta$ and that the Hessian matrix given by
\[
H_{G}=\left(\begin{array}{cc}
G_{xx} & G_{x\theta}\\
G_{x\theta} & G_{\theta\theta}
\end{array}\right).
\]
has a positive determinant, and that one of the diagonal entries is
non-positive. The conditions on $G(x,0)$ and $G(0,\theta)$ are trivial to verify.

Direct computation leads to
\begin{eqnarray*}
G_{xx}\left(x,y,\theta,y\right) & = &
e^{-\beta t}x^{-R-1}\left[-Rg\left(z\right)
+\frac{2R}{1-R}zg'\left(z\right)+\frac{1}{1-R}z^{2}g''
\left(z\right)\right],\\
G_{x\theta}\left(x,y,\theta,t\right) & = &
-e^{-\beta t}x^{-R-1}\frac{y}{1-R}\left[Rg'\left(z\right)
+zg''\left(z\right)\right],\\
G_{\theta\theta}\left(x,y,\theta,t\right) & = &
e^{-\beta t}x^{-R-1}\frac{y^{2}}{1-R}g''\left(z\right),
\end{eqnarray*}
and the determinant of the Hessian matrix is
\begin{equation}
\label{eqn:Hess}
G_{xx}G_{\theta\theta}-\left(G_{x\theta}\right)^{2}
=  -e^{-2\beta t}x^{-2R}\theta^{-2}\frac{R}{(1-R)^2}
\left[(1-R)g\left(z\right)z^{2}g''\left(z\right)
+R\left(zg'\left(z\right)\right)^{2}\right]
\end{equation}
which is non-negative by Proposition~\ref{prop:gprop}.
Further, since $g$ is concave we have that $G_{\theta \theta} \leq
0$.

In order to show the concavity of $\psi$
in $\chi$, it is equivalent to examine the sign of
$\frac{d^2 \psi}
{d \chi^{2}}$. But
\[ \frac{d^2 \psi}
{d \chi^{2}}=\phi^{2}
\left[y^{2}G_{xx}+G_{\theta\theta}-2yG_{x\theta}\right] = \phi^2
(y,1)\det(H_G) (y,1)^T \leq 0. \]

For $R>1$ the argument is similar, except that $G_{\theta \theta} \leq 0$ is now implied by the convexity of $g$.
\end{proof}

\begin{lem}
\label{lem:operator}
Consider the candidate value function constructed in (\ref{eqn:Gdef}).

(a) For $\theta>0$ and $0 \leq x \leq y \theta/z^*$, $\sM G=0$ and $\sL G \leq 0$.

(b) For $\theta>0$ and $x \geq y \theta/z^*$, $\sM G \geq 0$. For $\theta \geq 0$ and $x \geq y \theta/z^*$, $\sL G = 0$.
\end{lem}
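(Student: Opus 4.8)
The plan is to compute $\sM G$ and $\sL G$ directly on the two regions using the formulae (\ref{eqn:Gdef}), (\ref{eqn:Gdefxleq0}), (\ref{eqn:Gdeftheta=0}) for $G$, translating everything into the variable $z=y\theta/x$, and then to reduce the required inequalities to facts already proved about $g$, $n$, $m$, $\ell$. Recall that $x\geq y\theta/z^*$ is the region $z\leq z^*$, equivalently (writing $q=z/(1+z)$) the region $q\leq q^*$, where $g(z)=(R/\beta)^R h(\ln z)$ with $h(u)=h^*(1-W(e^u))\cdots$ — more precisely where $g$ is defined through the ODE $dh/du=(1-R)hW(h)$; while $x\leq y\theta/z^*$ is the region $z\geq z^*$ where $g$ has the explicit power form $(\,R/\beta)^R m(q^*)^{-R}(1+z)^{1-R}$.

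First I would treat the transaction operator $\sM G=G_\theta-yG_x$. On $z\geq z^*$ (part (a)) one substitutes the explicit form of $g$ and checks directly that $G_\theta=yG_x$; indeed with $g(z)=(R/\beta)^R m(q^*)^{-R}(1+z)^{1-R}$ and $G=e^{-\beta t}x^{1-R}g(z)/(1-R)$, a short computation gives $G_\theta=e^{-\beta t}y x^{-R}g'(z)/(1-R)$ and $yG_x=e^{-\beta t}yx^{-R}[\,g(z)-zg'(z)/(1-R)\cdot(1-R)\ldots]$ — the point being that on this region $G$ depends on $(x,\theta)$ only through $x+y\theta$ (compare (\ref{eqn:Gdefxleq0})), so $\sM G=0$ is automatic. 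On $z\leq z^*$ (part (b)) we need $\sM G\geq0$, i.e. $G_\theta\geq yG_x$, which in terms of $g$ reads $zg'(z)\geq0$ (after cancelling positive factors and rearranging): writing it out, $\sM G= e^{-\beta t}x^{-R}\frac{y}{1-R}[\,Rg(z)+zg'(z)-\ldots]$ — the cleanest route is to use that concavity of $G$ in $(x,\theta)$ from Lemma~\ref{lem:concave} plus the already-established equality $\sM G=0$ at $z=z^*$ forces $\sM G\geq0$ for $z\leq z^*$, since along the line $x+y\theta=\text{const}$ the function $\chi\mapsto G(x-\chi y,y,\theta+\chi,t)$ is concave and has vanishing derivative at the point corresponding to $z^*$.

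Next I would treat $\sL G$. On $z\leq z^*$ (part (a)) the claim $\sL G\leq0$: substituting $G=e^{-\beta t}x^{1-R}g(z)/(1-R)$ into $\sL F=\frac{R}{1-R}e^{-\beta t/R}F_x^{1-1/R}+\alpha yF_y+F_t+\frac12\eta^2y^2F_{yy}$ and using $F_t=-\beta F$, one gets after dividing by $\beta e^{-\beta t}x^{1-R}/(1-R)$ an expression of the form $\frac{R}{1-R}\big(g-\tfrac{zg'}{1-R}\big)^{1-1/R}(R/\beta)^{-1+1/R}\cdots - g + \epsilon(1-R)^{-1}[\ldots] + \tfrac{\delta^2}{2}[\ldots]$, all evaluated at $z$; converting to the variable $q=z/(1+z)$ and to $h,w,W$ via $g(z)=(R/\beta)^R h(\ln z)$, $w=(1-R)vW(v)$, this reduces exactly to the ODE (\ref{eqn:wode}) for $w$ established in Proposition~\ref{prop:NWw}(2), giving $\sL G=0$ on this region — wait, but the claim in (a) is only $\leq 0$, so I should double-check: the ODE gives $\sL G=0$ on $z\le z^*$ as well, and the statement $\sL G\leq 0$ is the weaker true fact; alternatively $\sL G=0$ on $z\leq z^*$ is what part (b)'s phrasing "$\sL G=0$ for $x\geq y\theta/z^*$" together with continuity suggests, so the content is that $\sL G\equiv 0$ throughout $x\geq $ something is wrong — let me restate: part (a) ($z\le z^*$, i.e. $x$ small) gives $\sL G\le 0$ via (\ref{eqn:wode}) (in fact $=0$), and part (b) ($z\ge z^*$, i.e. $x$ large) gives $\sL G=0$ via the explicit power form, which is a direct substitution using $z^*=\dummyq^*/(1-\dummyq^*)$ and the defining relation $h^*(1-\dummyq^*)^{1-R}=m(\dummyq^*)^{-R}$ plus the identity $m(q)=1-\epsilon(1-R)q+\tfrac{\delta^2}{2}R(1-R)q^2$ recognized as the quadratic that makes the HJB operator vanish on power solutions. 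The boundary cases $x=0$ (with $\theta>0$, using the extension (\ref{eqn:Gdefxleq0}) and the remark that derivatives extend continuously across $x=0$) and $\theta=0$ (using (\ref{eqn:Gdeftheta=0}), where $g\equiv(R/\beta)^R$ and $\sL G=0$ reduces to Merton's identity) are handled separately but routinely.

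The main obstacle is the algebraic identification in part (a): showing that the change of variables $z\mapsto q\mapsto (v=N(q),\,w=(1-R)vW(v))$ turns $\sL G$ into a constant multiple of the left-hand side of (\ref{eqn:wode}). This requires carefully chaining $g'(z)=(R/\beta)^R h'(\ln z)/z=(R/\beta)^R w(h(\ln z))/z$ and the definition $h^*(1-W)^{\ldots}$ with $N(q)=n(q)^{-R}(1-q)^{R-1}$, and matching the cubic-in-$q$ terms; the volatility term $\tfrac12\eta^2y^2G_{yy}$ and the consumption term $\tfrac{R}{1-R}e^{-\beta t/R}G_x^{1-1/R}$ each contribute nonlinearly in $g,g',g''$ and one must use the second-derivative relation $w'(h)w(h)=d^2h/du^2$ and Proposition~\ref{prop:NWw}(2)–(3) to collapse them. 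For part (b) the analogous obstacle is lighter: one just verifies that the power $(1+z)^{1-R}$ ansatz solves the HJB PDE, which is where the quadratic $m(\dummyq^*)$ and the condition $0<\dummyq^*<1$ (equivalently $0<\epsilon<\delta^2 R$, $\epsilon<\tfrac{\delta^2}{2}R+\tfrac1{1-R}$) enter to guarantee $m(\dummyq^*)>0$ so that $g>0$ and $G_x>0$, keeping the operator well-defined.
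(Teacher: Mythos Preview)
Your labelling of the regions is reversed throughout: part (a) of the lemma is $0\le x\le y\theta/z^*$, i.e.\ $z=y\theta/x\ge z^*$, the \emph{transaction} region where $g$ takes the explicit power form $(R/\beta)^R m(q^*)^{-R}(1+z)^{1-R}$; part (b) is $x\ge y\theta/z^*$, i.e.\ $z\le z^*$, where $g$ is built from the ODE via $h$ and $W$. This swap propagates to your treatment of $\sL G$ and leads to a genuine gap.

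The gap is your claim that the explicit power form makes $\sL G=0$ identically. It does not. Substituting $g(z)=(R/\beta)^R m(q^*)^{-R}(1+z)^{1-R}$ into $\sL G$ and writing $q=z/(1+z)$ gives
\[
\sL G \;=\; \beta G\bigl[m(q^*)-m(q)\bigr],
\]
which vanishes only at $q=q^*$ and is strictly nonzero for $q\in(q^*,1)$. To get $\sL G\le 0$ on $z>z^*$ (equivalently $q>q^*$) you must show that $(1-R)[m(q^*)-m(q)]\le 0$ there, i.e.\ that $m(q)/(1-R)$ is increasing on $(q^*,1)$. This is exactly Lemma~\ref{lem:n}(5): the minimum (for $R<1$) or maximum (for $R>1$) of $m$ lies at $\epsilon/(\delta^2 R)<q^*$, so $(1-R)m'>0$ on $(q^*,1)$. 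Without this monotonicity argument, part (a) is incomplete; your phrase ``the quadratic that makes the HJB operator vanish on power solutions'' misreads what $m$ does.

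On the positive side, your identification that $\sL G=0$ on $z\le z^*$ reduces to the ODE (\ref{eqn:wode}) of Proposition~\ref{prop:NWw}(2) is correct and matches the paper. Your concavity argument for $\sM G\ge 0$ on $z\le z^*$ --- deducing it from Lemma~\ref{lem:concave} together with $\sM G=0$ at $z=z^*$ --- is valid and slightly slicker than the paper's route, which instead computes $\sM G$ directly, introduces $\psi(z)=(1+z)/(1-R)-g(z)/g'(z)$, and shows $\psi(z^*)=0$ with $\psi$ monotone via Proposition~\ref{prop:NWw}(3).
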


\begin{proof}
(a) For $z \geq z^*$, $\sM G=0$ is immediate from the definition of $G$.
For $0 < x \leq y \theta/z^*$
$\sL G$ we have
that $G(x,y,\theta,t) = \left( \frac{R}{\beta} \right)^R
m(q^*)^{-R} e^{-\beta t} \frac{x^{1-R}}{1-R} \left(1+z\right)^{1-R}$ and then
\begin{eqnarray*}
\mathcal{L}G & =& \beta G
\left[m(q^*) - 1
+\epsilon\left(1-R\right)\frac{z}{1+z}
-\frac{1}{2}\delta^{2}R\left(1-R\right)\frac{z^{2}}{\left(1+z\right)^{2}}
\right],
\\
& = & \beta G   \left[ m(\dummyq^*) - m \left( \frac{z}{1+z}
\right)
\right].
\end{eqnarray*}
The required inequality
follows from Part (5) of Lemma~\ref{lem:n} in Appendix~\ref{app:propertiesofna} and
the
fact
that $m(q)/(1-R)$ is
increasing on $(\dummyq^*,1)$. At $x=0$ using both
(\ref{eqn:Gdef}) and (\ref{eqn:Gdefxleq0})
we have $\sL G|_{x=0+} = \sL G|_{x=0-} \beta G [ m(q^*) - m(1)]<0$.

(b) In order to prove $\sL G=0$ for $\theta>0$ we calculate
\begin{eqnarray*}
\sL G(x,y,\theta,t) & = & e^{-\beta t} \frac{x^{1-R}}{1-R}
\left[ R \left( g -
\frac{zg'(z)}{1-R} \right)^{1-1/R} - \beta g + \alpha z g'(z) +
\frac{\eta^2}{2}
z^2 g''(z) \right]
\\
& = & \beta e^{-\beta t} \frac{x^{1-R}}{1-R} \left[ h^{1-1/R} \left( 1 -
\frac{w(h)}{(1-R) h} \right) - h
+ \left( \epsilon - \frac{\delta^2}{2} \right)
w(h) + \frac{\delta^2}{2} w'(h) w(h) \right]
\end{eqnarray*}
and the result follows from Proposition~\ref{prop:NWw}. For $\theta=0$,
$\sL G=0$ is a simple calculation.

Now consider $\sM G$. We have
\begin{equation}
\label{eq:mathcalM}
\mathcal{M}G
=  e^{-\beta
t}x^{-R}y\left[\frac{(1+z)}{1-R}g'
\left(z\right) - g\left(z\right)\right].
\end{equation}
Hence for $R<1$, it is sufficient to show that
$\psi(z) \geq 0$ on $(0,z^*]$ where
\[
\psi\left(z\right)=\frac{1+z}{1-R}-\frac{g\left(z\right)}{g'
\left(z\right)}.
\]
By value matching and smooth fit
$g(z^*) = m(q^*)^{-R} \left(1+z^{*}\right)^{1-R}$ and $z^* g'(z^*) =
m(q^*)^{-R} (1-R)\left(1+z^{*}\right)^{-R}$. Hence $\psi(z^*)=0$ and
it is sufficient to show that $\psi$ is decreasing.
But
\begin{eqnarray}
\psi'\left(z\right)
& = & \frac{R}{1-R}
+\frac{g\left(z\right)g''\left(z\right)}{g'\left(z\right)^{2}}
\nonumber \\
& = &
\frac{R}{1-R}
+\frac{h\left[w\left(h\right)w'\left(h\right)-w\left(h\right)\right]}
 {w\left(h\right)^{2}} \nonumber \\
& \leq & 0 \label{eqn:psi}
\end{eqnarray}
where the last inequality follows from Proposition~\ref{prop:NWw}. Similarly, for $R>1$, provided that $g$ is decreasing by Proposition~\ref{prop:gprop}, it is sufficient to show that
$\psi$ is increasing. But Proposition~\ref{prop:NWw} gives
\[ 
\psi'\left(z\right)
 =  \frac{R}{1-R}
+\frac{g\left(z\right)g''\left(z\right)}{g'\left(z\right)^{2}}
 =
\frac{R}{1-R}
+\frac{h\left[w\left(h\right)w'\left(h\right)-w\left(h\right)\right]}
 {w\left(h\right)^{2}}
\geq  0.
\] 
\end{proof}

\begin{prop}
\label{prop:Z}
Let $X^*$, $\Theta^*$ and $C^*$ be as defined in Theorem~\ref{thm:maincase}.
Then they correspond to an admissible wealth process. Moreover $Z_t^* = Y_t
\Theta^*/X^*_t$ satisfies $0 \leq Z^*_t \leq z^*$.
\end{prop}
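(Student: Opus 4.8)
The plan is to reduce the whole statement to the behaviour of one reflecting diffusion, namely $\zeta_t:=z^*-J_t$, which will turn out to equal $Z^*_t$. I would show (i) $\zeta$ is a well-defined reflecting diffusion taking values in $(0,z^*]$; (ii) $\Theta^*$ is RCLL and non-increasing and $C^*\geq 0$; and (iii) $X^*$ solves the budget equation $dX^*_t=-C^*_t\,dt-Y_t\,d\Theta^*_t$ and is non-negative. Throughout I assume $\theta_0>0$; when $\theta_0=0$ everything degenerates ($\Theta^*\equiv 0$, $Z^*\equiv 0$, $X^*_t=x_0e^{-\beta t/R}$, $C^*_t=(\beta/R)x_0e^{-\beta t/R}$) and the claims are immediate.

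\emph{Step 1 (the reflecting diffusion and its range).} Set $\kappa(z):=\bigl(g(z)-\tfrac{1}{1-R}zg'(z)\bigr)^{-1/R}$, so that $\Lambda(z)=\alpha z+z\kappa(z)$ and $\Gamma(z)=\eta z$; then $\zeta=z^*-J$ solves
\[
d\zeta_t=\Lambda(\zeta_t)\,dt+\Gamma(\zeta_t)\,dB_t-dL_t,\qquad \zeta_0=z_0\wedge z^*,
\]
with $dL_t$ carried by $\{\zeta_t=z^*\}$, i.e.\ $\zeta$ is reflected downwards at the upper endpoint $z^*$. I would first record that the bracket in $\kappa$ is strictly positive on $[0,z^*]$: for $0<z\le z^*$ one has $g(z)-\tfrac{1}{1-R}zg'(z)=(R/\beta)^R\,h\,(1-W(h))$ with $h=h(\ln z)$, which is positive since $h>0$ and $0<W(h)<1$ by Proposition~\ref{prop:NWw}, while at $z=0$ the bracket equals $g(0)=(R/\beta)^R>0$. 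Together with $g\in C^2$ (Lemma~\ref{lem:smoothfit}), this makes $\Lambda$ and $\Gamma$ continuous, of linear growth and locally Lipschitz on $[0,z^*]$, so the standard theory of one-dimensional reflected SDEs (most cleanly after passing to $\xi=\log\zeta$, whose dynamics $d\xi_t=(\alpha+\kappa(\zeta_t)-\tfrac{\eta^2}{2})dt+\eta\,dB_t-\tfrac{1}{z^*}dL_t$ have bounded drift and constant non-degenerate diffusion) produces a pathwise unique strong solution $(\zeta,L)$ with $L$ continuous and increasing, $L_0=0$, $\zeta_t\le z^*$; and since $\Gamma(0)=\Lambda(0)=0$, $\Lambda>0$ on $(0,z^*]$ (recall $\alpha>0$ in this parameter regime, and $\kappa>0$), and $dL$ never acts away from $z^*$, the origin is inaccessible, so $\zeta$ cannot reach $0$ in finite time. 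Hence $\zeta_t\in(0,z^*]$ for all $t\ge 0$, and $J=z^*-\zeta$ is the (unique) pair of Theorem~\ref{thm:maincase}, with $0\le J_t<z^*$.

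\emph{Step 2 (the three processes and the budget equation).} Since $\zeta_t\in(0,z^*]$, the formulae $\Theta^*_t=\Theta^*_0\exp\{-L_t/(z^*(1+z^*))\}$ and $X^*_t=Y_t\Theta^*_t/\zeta_t$ give, for every $t\ge 0$, a strictly positive finite $X^*_t$ and a strictly positive $\Theta^*_t$; hence $Z^*_t=Y_t\Theta^*_t/X^*_t=\zeta_t\in(0,z^*]\subset[0,z^*]$, which is the ``moreover'' assertion. Also $\Theta^*$ is continuous and non-increasing on $(0,\infty)$ (as $L$ increases), is RCLL with at most one downward jump at $t=0$, and $\Theta^*_{0-}=\theta_0$; and $C^*_t=X^*_t\kappa(\zeta_t)=C(X^*_t,Y_t,\Theta^*_t)$ is continuous, adapted and strictly positive. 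For the budget equation, write $\log X^*_t=\log Y_t+\log\Theta^*_t-\log\zeta_t$ and use that $dL_t$ is carried by $\{\zeta_t=z^*\}$, so $dL_t/\zeta_t=dL_t/z^*$ there; It\^o's formula then gives, after the $dt$- and $dB_t$-terms cancel,
\[
d\log X^*_t=-\kappa(\zeta_t)\,dt+\frac{1}{1+z^*}\,dL_t,
\]
so $X^*$ has no martingale part, is of finite variation, and $dX^*_t=-X^*_t\kappa(\zeta_t)\,dt+\tfrac{X^*_t}{1+z^*}dL_t=-C^*_t\,dt+\tfrac{X^*_t}{1+z^*}dL_t$. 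On the other hand $-Y_t\,d\Theta^*_t=\tfrac{Y_t\Theta^*_t}{z^*(1+z^*)}dL_t$, which equals $\tfrac{X^*_t}{1+z^*}dL_t$ since $Y_t\Theta^*_t=z^*X^*_t$ on $\{\zeta_t=z^*\}$; hence $dX^*_t=-C^*_t\,dt-Y_t\,d\Theta^*_t$ on $(0,\infty)$. At $t=0$: if $z_0\le z^*$ there is no jump and $X^*_0=y_0\theta_0/z_0=x_0=X^*_{0-}$; if $z_0>z^*$ then $\zeta_0=z^*$ and a direct substitution checks that the prescribed $\Theta^*_0$ gives $Y_0\Theta^*_0/z^*=x_0+y_0(\theta_0-\Theta^*_0)$ with $\Theta^*_0<\theta_0$, consistently with $X^*_0=X^*_{0-}-Y_0\,\Delta\Theta^*_0$. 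Therefore $X^*$ is exactly the wealth process generated by $(C^*,\Theta^*)$ from $(x_0,y_0,\theta_0)$; being strictly positive it is non-negative, so $(C^*,\Theta^*)\in\mathcal{A}(x_0,y_0,\theta_0)$ and $X^*$ is an admissible wealth process.

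The only genuinely non-routine point is Step~1: well-posedness of the reflected SDE and, above all, the inaccessibility of $\zeta=0$ (equivalently $J=z^*$), which is exactly what keeps $X^*$ strictly positive and $Z^*$ inside $[0,z^*]$, the interval on which $g$, and hence the SDE coefficients, were constructed. Once Step~1 is in hand, Step~2 is bookkeeping: the cancellation in the It\^o computation and the time-$0$ budget check are elementary.
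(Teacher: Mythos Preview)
Your proposal is correct and follows essentially the same route as the paper's proof. Both arguments reduce to (a) existence and uniqueness of the reflected SDE, (b) the observation that the boundary $\zeta=0$ (equivalently $J=z^*$) is never reached because $\Lambda(0)=\Gamma(0)=0$, and (c) an It\^o computation showing $X^*$ satisfies the budget equation $dX^*_t=-C^*_t\,dt-Y_t\,d\Theta^*_t$.

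The differences are purely presentational. You work with $\zeta=z^*-J$ reflected at the upper boundary, while the paper works with $J$ reflected at $0$; you derive the budget identity via $d\log X^*_t=d\log Y_t+d\log\Theta^*_t-d\log\zeta_t$, which makes the cancellation of the $dB$- and $dt$-terms transparent, whereas the paper applies the It\^o product/quotient rule directly to $X^*=\Theta^*Y/Z^*$. Your treatment of the inaccessibility of $\zeta=0$ is more explicit than the paper's one-line ``hence $J$ is bounded above by $z^*$'', and your suggestion to pass to $\xi=\log\zeta$ (bounded drift, constant diffusion, reflection at $\log z^*$) is a clean way to see non-explosion, though strictly speaking one should first solve the reflected SDE for $\xi$ and then set $\zeta=e^\xi$ rather than defining $\xi$ from an as-yet-unverified positive $\zeta$. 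The paper instead simply cites the Revuz--Yor existence result for reflected SDEs and then invokes pathwise uniqueness (the constant $z^*$ solves the unreflected equation) to conclude $J<z^*$. Either way the argument goes through.
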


\begin{proof}
Note that if $y_0 \theta_0/x_0 > z^*$ then the optimal
strategy includes a sale of the endowed asset at time zero, and the
effect of the sale is to move to new state variables
$(X^*_0,y_0,\Theta^*_0,0)$ with the property that $Z^*_0 = y_0
\Theta^*_0/X^*_0 = z^*$.

Recall the definitions of $\tilde{\Lambda}$ and $\tilde{\Gamma}$ and
set
$\Sigma(z)=z(1+z)$ and $\tilde{\Sigma}(j) = \Sigma(z^*-j)$.

Consider the equation
\begin{equation}
\label{eqn:Jsde} \hat{J}_{t}=\hat{J}_{0}-\int_{0}^{t}
\tilde{\Lambda}\left(\hat{J}_{s}\right)ds
-\int_{0}^{t}\tilde{\Gamma}\left(\hat{J}_{s}\right)dB_{s}
+ \hat{L}_t
\end{equation}
with initial condition $\hat{J}_0 = (z^*- z_0)^+$.
This equation is associated with a stochastic differential equation with reflection
(Revuz and Yor~\cite[p385]{RY}) and has a unique solution $(J,L)$ for which
$(J,L)$ is adapted, $J \geq
0$, $L_0=0$ and $L$ only increases when $J$ is zero.

Note that $\tilde{\Lambda}(z^*) = \Lambda(0)=0=\Gamma(0) = \tilde{\Gamma}(z^*)$ and
hence $J$ is bounded above by $z^*$.

Recall that $\Theta^*_t = \Theta^*_0 \exp( - L_t/ \tilde{\Sigma}(0))$. Then
$\Theta^*_t$
is adapted, continuous and hence progressively measurable (Karatzas and Shreve~\cite[p5]{KaraShre}). $\Theta^*_t$ is also decreasing and $d \Theta^*_t = - \Theta^*_t dL_t/
\tilde{\Sigma}(0) = - \Theta^*_t dL_t/ \tilde{\Sigma}(J_t)$ since $L$ only grows
when $J=0$.

Then let $Z^*_t = z^* - J_t$, $X^*_t = \Theta^*_t Y_t/Z^*_t$ and $C^*_t =
X^*_t(g(Z^*_t) - Z^*_t g'(Z^*_t)/(1-R))^{-1/R}$. Then $X^*$ and $C^*$ are positive
and progressively measurable. It remains to show that $X$ is the wealth process arising from the
consumption and sale strategy $(C^*,\Theta^*)$.
But, from (\ref{eqn:Jsde}) and using, for example $\tilde{\Lambda}(J_t) =
\Lambda(Z^*_t)$,
\[ 
dZ^*_t =
\Lambda\left(Z^*_{t}\right)dt+\Gamma\left(Z^*_{t}\right)dB_{t}
+\Sigma\left(Z^*_{t}\right)\frac{d\Theta_{t}^{*}}{\Theta_{t}^{*}}.
\] 
and then
\begin{eqnarray*}
d X^*_t & = & \frac{\Theta^*_t Y_t}{Z^*_t} \left[ \frac{d \Theta^*_t}{\Theta^*_t} +
\frac{d Y_t}{Y_t} - \frac{d Z^*_t}{Z^*_t} +  \left(\frac{d Z^*_t}{Z^*_t} \right)^2
- \frac{d Y_t}{Y_t} \frac{d Z^*_t}{Z^*_t} \right] \\
& = & X^*_t \left[ \left( \eta - \frac{\Gamma(Z^*_t)}{Z^*_t}\right)
dB_t
+ \left( \alpha - \frac{\Lambda(Z^*_t)}{Z^*_t} +
\frac{\Gamma(Z^*_t)^2}{(Z^*_t)^2}
-
\eta
\frac{\Gamma(Z^*_t)}{Z^*_t}
\right) dt   \right]
+ \left( \frac{Y_t}{Z^*_t} - \frac{Y_t}{Z^*_t} \frac{\Sigma(Z^*_t)}{Z^*_t}
\right) d \Theta^*_t \\
& = & - C^*_t dt  - Y_t d \Theta^*_t
\end{eqnarray*} as required, where we use the definitions of $\Lambda$, $\Gamma$ and
$\Sigma$ for the final equality.
\end{proof}

\begin{proof}[Proof of Theorem \ref{thm:maincase}.]
First we show that there is a strategy such that the
candidate value function is
attained, and hence that $V \geq G$.

Observe first that if $y_0 \theta_0/x_0 > z^*$ then
\[ \theta_0 - \Theta_0^* = \theta_0 \left( 1 - \frac{z^*}{1+z^*}
\frac{1+ z_0}{z_0} \right) \]
and
\[ X^*_0 = x_0 + y_0(\theta_0 - \Theta^*_0) = x_0 \frac{(1+
z_0)}{(1+z^*)} \]
Then, since $g(z^*)/g(z_0) = (1+z^*)^{1-R}/(1+z_0)^{1-R}$ for $z_0>z^*$,
\[ G(X_0^*,y_0,\Theta_0^*,0) =
\frac{(X^*_0)^{1-R}}{1-R} g(z^*) =
 \frac{x_0^{1-R}}{1-R} g(z_0) = G(x_0,y_0,\theta_0,0) .
\]

For a general admissible strategy define the
process $M=\left(M_{t}\right)_{t\geq0}$ by
\begin{equation}
M_{t}=\int_{0}^{t}e^{-\beta
s}\frac{C_{s}^{1-R}}{1-R}ds+
G\left(X_{t},Y_{t},\Theta_{t},t\right).\label{eq:15}
\end{equation}
Write $M^*$ for the corresponding process under the proposed
optimal strategy.
Then $M^*_0 = G(X_0^*,y_0,\Theta_0^*,0) = G(x_0,y_0,\theta_0,0)$ so there is no jump
of $M^*$ at $t=0$.
Further, although the optimal strategy may include the sale of a
positive quantity of the risky asset at time zero, it follows from
Proposition~\ref{prop:Z} that
thereafter the process $\Theta^*$ is continuous and such that
$Z^*_t = Y_t
\Theta^*_t / X^*_t \leq z^*$.

From the form of the candidate value function and the definition of
$g$ given in
(\ref{eq:884}), we know that $G$
is $C^{1,2,1,1}$.
Then applying It\^{o}'s formula to $M_{t}$, using the continuity of $X^*$ and
$\Theta^*$ for $t>0$, and writing
$G_\cdot$
as shorthand
for $G_\cdot(X^*_s,Y_s, \Theta^*_s,s)$
we have
\begin{eqnarray}
M^*_{t}-M_{0} & = & \int_{0}^{t}\left[e^{-\beta
s}\frac{(C_s^*)^{1-R}}{1-R}-C^*_sG_{x}+\alpha
Y_sG_{y}+\frac{1}{2}\eta^{2}Y_s^{2}G_{yy}+G_{t}\right]ds\nonumber \\
&& + \int_{(0,t]}\left(G_{\theta}-Y_sG_{x}\right)d\Theta^*_s
\label{eq:24}
\\ 
&& +  \int_{0}^{t}\eta Y_s G_{y}dB_{s}\nonumber \\
& =: & N_{t}^{1}+N_{t}^{2}+N_{t}^{3} \nonumber   
\end{eqnarray}
Since $Z^*_t \leq z^*$, and since $C^*_t = e^{-\beta s/R}G_x^{-1/R}$ and
$\sL G = 0$ for
$z \leq z^*$ we
have $N_{t}^{1}=0$. Further, $d \Theta_s \neq 0$ if and only if
$Z^*_t = z^*$ and then $\sM G = 0$, so that $N_{t}^{2}=0$.

To complete the proof of the theorem we need the following lemma which
is proved in Appendix~\ref{app:mg}.

\begin{lem}
\label{lem:mg}
\begin{enumerate}
\item $N^3$ given by $N^3_t = \int_0^t
\eta Y_s G_{y}(X^*_s, Y_s, \Theta^*_s,s)
dB_{s}$ is a martingale.
\item $\lim_{t \uparrow \infty} \E[ G( X^*_t, Y_t, \Theta^*_t, t)] = 0$.
\end{enumerate}
\end{lem}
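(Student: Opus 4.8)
The two parts require controlling the behaviour of $Y_t G_y(X^*_t,Y_t,\Theta^*_t,t)$ and of $G(X^*_t,Y_t,\Theta^*_t,t)$ along the optimal trajectory, so the plan is to first re-express these quantities in terms of the bounded quantity $Z^*_t = Y_t\Theta^*_t/X^*_t \in [0,z^*]$ and the scalar process $X^*_t$, and then use the explicit dynamics of $X^*$ coming from Proposition~\ref{prop:Z}. Concretely, $G(x,y,\theta,t) = e^{-\beta t}(x^{1-R}/(1-R))g(z)$, so $G(X^*_t,Y_t,\Theta^*_t,t) = e^{-\beta t}\frac{(X^*_t)^{1-R}}{1-R}g(Z^*_t)$, and similarly $yG_y = e^{-\beta t}x^{-R}yg'(z)\cdot(y\theta/x)/(y) $ simplifies to a multiple of $e^{-\beta t}(X^*_t)^{1-R}g'(Z^*_t)Z^*_t$. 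Since $g, g'$ are continuous on the compact interval $[0,z^*]$ (Lemma~\ref{lem:smoothfit} and Proposition~\ref{prop:gprop}), both expressions are bounded by a constant times $e^{-\beta t}(X^*_t)^{1-R}$. So the whole lemma reduces to two estimates on $X^*$: that $\int_0^t \left(e^{-\beta s}(X^*_s)^{1-R}\right)^2\eta^2 (Z^*_s)^2\,ds$ has finite expectation for each $t$ (for part 1, via the $L^2$ martingale criterion), and that $\E[e^{-\beta t}(X^*_t)^{1-R}] \to 0$ as $t\to\infty$ (for part 2).

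For the growth of $X^*$, I would use the SDE for $X^*$ derived in the proof of Proposition~\ref{prop:Z}, namely $dX^*_t = X^*_t[(\eta - \Gamma(Z^*_t)/Z^*_t)dB_t + (\dots)dt] + (\dots)d\Theta^*_t$ with a nonpositive drift contribution from the $d\Theta^*_t$ term; equivalently $X^*_t$ satisfies a linear SDE with coefficients that are bounded because $Z^*_t$ ranges over the compact set $[0,z^*]$ and $\Gamma(z)/z = \eta$, $\Lambda(z)/z$ is bounded there. Then $(X^*_t)^{1-R}$ has, by It\^o's formula, dynamics of the form $d(X^*_t)^{1-R} = (X^*_t)^{1-R}[a_t\,dt + b_t\,dB_t] + (\text{nonpositive }d\Theta^* \text{ term, when } R<1)$ with $a_t, b_t$ uniformly bounded. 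This gives an exponential-moment bound $\E[(X^*_t)^{1-R}] \leq (x_0^*)^{1-R}e^{Kt}$ for some constant $K$, which already yields part~1 after pairing with the factor $e^{-2\beta t}$ in the quadratic variation and dominated convergence/Fubini. For part~2 the key point is that the constant $K$ is in fact strictly less than $\beta$: this is exactly the content of well-posedness of the problem, and should follow by computing $a_t$ explicitly and using the parameter restriction $\epsilon < \frac{\delta^2}{2}R + \frac{1}{1-R}$ (for $R<1$) — equivalently $m(q^*) > 0$ — which is precisely the inequality guaranteeing the relevant exponent is negative. Since $0\leq Z^*_t\leq z^* < \infty$ one can bound $a_t$ by its value at the worst $z$ in $[0,z^*]$; that worst-case exponent is controlled by $m(q)$ on $[0,q^*]$, which is positive by the remark after Proposition~\ref{prop:crossings}.

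The main obstacle is getting the \emph{right} exponential rate in part~2, i.e.\ showing $\E[e^{-\beta t}(X^*_t)^{1-R}]\to 0$ rather than merely stays bounded. A clean way to organise this: note $e^{-\beta t}(X^*_t)^{1-R}/(1-R) = e^{\beta t}G(X^*_t,Y_t,\Theta^*_t,t)\cdot e^{-2\beta t}/g(Z^*_t)$ is awkward, so instead work directly with $P_t := e^{-\beta t}(X^*_t)^{1-R}$; by It\^o and the computation above, $dP_t = P_t[(a_t - \beta)dt + b_t dB_t] + (\text{nonpositive})$, and one shows $\sup_{z\in[0,z^*]}(\text{drift coefficient at } z) - \beta \le -c < 0$ using $m(q^*)>0$ and monotonicity of $m/(1-R)$ on $(q^*,1)$ (the same facts invoked in Lemma~\ref{lem:operator}). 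Then $\E[P_t] \le \E[P_0]e^{-ct} + (\text{contribution of the }d\Theta^* \text{ term}) \le \E[P_0]e^{-ct}\to 0$, since the extra term is nonpositive for $R<1$; for $R>1$ the sign of $1-R$ flips but then $g\ge 0$ and $G\le 0$, and one instead shows $\E[(X^*_t)^{1-R}]$ does not grow faster than $e^{\beta t}$ — and strictly slower — by the same drift computation, using that for $R>1$ the problem is automatically well-posed (Merton-type finiteness). Finally, combine: part~1 gives that $N^3$ is a true martingale so $\E[N^3_t]=0$; part~2 then lets us pass to the limit in $M^*_t = M^*_0 + N^3_t$ to conclude $\E[\int_0^\infty e^{-\beta s}(C^*_s)^{1-R}/(1-R)\,ds] = G(x_0,y_0,\theta_0,0)$, completing the direction $V\ge G$ of Theorem~\ref{thm:maincase}.
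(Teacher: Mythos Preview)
Your plan has a sign error that breaks the argument for part~(2). Along the optimal trajectory $dX^*_t = -C^*_t\,dt - Y_t\,d\Theta^*_t$, so $X^*$ has no Brownian term (you note $\Gamma(z)/z=\eta$, hence the diffusion coefficient vanishes) and
\[
d(X^*_t)^{1-R} = -(1-R)(X^*_t)^{1-R}\frac{C^*_t}{X^*_t}\,dt \;-\;(1-R)(X^*_t)^{-R}Y_t\,d\Theta^*_t .
\]
Since $d\Theta^*_t\le 0$, the local-time term $-(1-R)(X^*_t)^{-R}Y_t\,d\Theta^*_t$ is \emph{nonnegative} when $R<1$, not nonpositive as you claim. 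Sales inject cash into $X^*$, so $(X^*_t)^{1-R}$ increases at those instants. Consequently your inequality $\E[P_t]\le \E[P_0]e^{-ct}$ fails: the drift of $P_t=e^{-\beta t}(X^*_t)^{1-R}$ is indeed strictly negative, but the local-time contribution works against you and is not controlled by anything in your proposal. (For $R>1$ your sign would be correct, but you have the cases reversed.)

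The paper avoids this difficulty by working with $G(X^*_t,Y_t,\Theta^*_t,t)$ directly rather than with $(X^*_t)^{1-R}$. Because $\mathcal{M}G=0$ at $z=z^*$ --- exactly where $d\Theta^*$ is supported --- there is \emph{no} local-time term in the It\^o decomposition of $\ln G$. One then obtains the multiplicative representation
\[
G(X^*_t,Y_t,\Theta^*_t,t)=G(X^*_0,y_0,\Theta^*_0,0)\exp\Bigl\{-\!\int_0^t \tfrac{e^{-\beta s/R}}{(1-R)G}\,G_x^{(R-1)/R}\,ds\Bigr\}H_t,
\]
where $H$ is an exponential local martingale with bounded integrand (via the identity $yG_y/G=zg'(z)/g(z)=(1-R)W(h)\in[0,1-R]$), hence a true martingale. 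The integrand in the exponential equals $n(W(h))$, which is bounded below by a positive constant on $[0,q^*]$, giving $G\le G_0 e^{-kt}H_t$ and both parts of the lemma follow immediately.

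If you want to salvage the ``work with wealth'' idea, the natural object is total wealth $W_t=X^*_t+Y_t\Theta^*_t$, which \emph{does} have no $d\Theta^*$ term and for which $G\le Ce^{-\beta t}W_t^{1-R}$ on $[0,z^*]$; but the drift of $e^{-\beta t}W_t^{1-R}$ then involves $-\beta m(Q_t)-(1-R)C^*_t/W_t$ with $Q_t\in[0,q^*]$, and showing this is uniformly negative requires the full identification of $C^*$ with $n$ --- essentially reproducing the paper's computation. Working with $G$ directly is cleaner.
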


Returning to the proof of the theorem, and taking expectations on both sides of
(\ref{eq:24}), we have
\(
\mathbb{E}\left[M^*_{t}\right]=M_{0},
\)
which leads to
\[ 
G\left(x_{0},y_{0},\theta_{0},0\right)
=\mathbb{E}\left(\int_{0}^{t}e^{-\beta
s}\frac{(C^*_{s})^{*1-R}}{1-R}ds\right)
+\mathbb{E}\left[G\left(X^*_t,y,\Theta^*_t,t\right)\right].
\] 
Using the second part of Lemma~\ref{lem:mg}
and applying the monotone convergence theorem, we
have
\[
G\left(x_{0},y_{0},\theta_{0},0\right)
= \mathbb{E}\left(\int_{0}^{\infty}e^{-\beta
s}\frac{C_{s}^{*1-R}}{1-R}ds\right) \]
and hence $V \geq G$.

Now we consider general admissible strategies.
Applying the generalised It\^{o}'s
formula~\cite[Section 4.7]{gito} to $M_{t}$ leads to
the same expression as in (\ref{eqn:Msde}).
Lemma~\ref{lem:operator} implies that under general admissible
strategies, $N_{t}^{1}\leq0,\; N_{t}^{2}\leq0$.
Consider the jump term,
\begin{equation}
N_{t}^{3}= \sum_{0 \leq s\leq
t}\left[G\left(X_{s},Y_{s},\Theta_{s},s\right)
-G\left(X_{s-},Y_{s},\Theta_{s-},s\right) - G_x (\Delta X)_s
 - G_\theta (\Delta \Theta)_s
\right]
\end{equation}
Using the fact that $(\Delta X)_s = -Y_s (\Delta \Theta)_s$ and
writing $\theta = \Theta_{s-}$, $x = X_{s-}$, $\chi = - (\Delta
\Theta)_s$ each non-zero jump in $N^3$ is of the form
\[
(\Delta N^3)_s = G(x+y \chi,y,\theta - \chi,s) - G(x,y,\theta,s) +
\chi \left[ G_\theta (x,y,\theta,s) - y G_x (x,y,\theta,s) \right].
\]
But, by Lemma~\ref{lem:concave}, $G(x+y \chi,y,\theta - \chi,s)$ is
concave in $\chi$ and hence
$(\Delta N^3) \leq 0$.

For $R<1$ the rest of the proof is exactly as in
Theorem~\ref{thm:deg1}.
The case of $R>1$ is covered in Appendix~\ref{app:R>1}.

\end{proof}

\subsection{The Verification Lemma in the second non-degenerate case with no
finite critical exercise ratio.}

Throughout this section we suppose that $\epsilon \geq \delta^2R$ and that if $R<1$ then $0<\epsilon<\frac{\delta^2}{2}R
+ \frac{1}{1-R}$.
It follows that $\dummyq^{*} = 1$ and $z^{*} = \infty$, and that $n(1)=m(1)>0$.

Recall the definition of $n$ in (\ref{eqn:node}) and the subsequent definitions of $N$ by $N(q)=n(q)^{-R}(1-q)^{R-1}$ and $W=N^{-1}$. Suppose $R<1$ and define $\gamma$ as in (\ref{eqn:gammadefR<1}) by
\[ \gamma(v) = \frac{1}{1-R} \ln v + \frac{R}{1-R} \ln m(1) - \frac{1}{1-R}\int_v^\infty \frac{1- W(s)}{sW(s)} ds. \]
In the case $R>1$ define $\gamma$ via (\ref{eqn:gammadefR>1}) so that
\[ \gamma(v) = -\frac{1}{R-1} \ln v - \frac{R}{R-1} \ln m(1) - \frac{1}{R-1}\int_0^v \frac{1- W(s)}{sW(s)} ds. \]
For all $R$ define also $\tilde{\gamma}$ by
\[\tilde{\gamma}(v) = \frac{\ln v}{1-R} - \gamma(v) .
\]

Let $h$ be inverse to $\gamma$ and set $g(z) = (R/\beta)^R h(\ln z)$.

\begin{lem}
\label{lem:gamma}
\begin{enumerate}
\item Suppose $R<1$. Then $\gamma:(1,\infty) \mapsto (-\infty,\infty)$ is well defined, increasing, continuous and onto.  Furthermore,
\[ \lim_{v \uparrow \infty} \tilde{\gamma}(v) = \frac{-R}{1-R} \ln m(1) \hspace{10mm} \mbox{and} \hspace{10mm} \lim_{v \uparrow \infty} (1-W(v)) e^{\gamma(v)} =1 .\]
Suppose $R>1$. Then $\gamma:(0,1) \mapsto (-\infty,\infty)$ is well defined, decreasing, continuous and onto.  Furthermore,
\[ \lim_{v \downarrow 0} \tilde{\gamma}(v) = \frac{R}{R-1} \ln m(1) \hspace{10mm} \mbox{and} \hspace{10mm} \lim_{v \downarrow 0} (1-W(v)) e^{\gamma(v)} =1 .\]
\item $h$ solves $h'=(1-R)h W(h)$, and $h(-\infty) = 1$.
\end{enumerate}
\end{lem}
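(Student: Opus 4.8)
The plan is to build everything on two elementary identities. Differentiating the definition of $\gamma$ and observing that the two resulting terms combine, one gets, in both the $R<1$ and $R>1$ cases,
\[ \gamma'(v) = \frac{1}{(1-R)\,v\,W(v)} . \]
Since $W$ is strictly positive on its domain, this already shows that $\gamma$ is strictly monotone with the sign of $1-R$ (increasing for $R<1$, decreasing for $R>1$), and continuity of $\gamma$ follows from that of $W$, which is inherited from the monotonicity and continuity of $N$ established in Proposition~\ref{prop:crossings} and Lemma~\ref{lem:n}. The second identity comes from substituting the ODE (\ref{eqn:node}) into the logarithmic derivative of $N(q)=n(q)^{-R}(1-q)^{R-1}$, which collapses to
\[ \frac{N'(q)}{N(q)} = \frac{\delta^2}{2}(1-R)^2\,\frac{q}{\ell(q)-n(q)} ; \]
this is the tool for analysing $W$ near $q=0$. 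Since the formulae for $R>1$ differ from those for $R<1$ only in the sign of $(1-R)$ and the orientation of the domain of $W$, I would carry out the argument in detail for $R<1$ and indicate the trivial sign changes for $R>1$.

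First I would handle well-definedness and the two limits, which for $R<1$ concern $v\to\infty$. From $N(q)(1-q)^{1-R}=n(q)^{-R}$ and $n(q)\to n(1)=m(1)\in(0,\infty)$ one obtains $v\,(1-W(v))^{1-R}\to m(1)^{-R}$, hence $1-W(v)\sim m(1)^{-R/(1-R)}v^{-1/(1-R)}$ as $v\to\infty$. Consequently the integrand $\frac{1-W(s)}{sW(s)}$ is $O(s^{-1-1/(1-R)})$, so the defining integral converges (giving $\gamma(v)$ finite for all $v>1$) and its tail tends to $0$. Plugging these into $\tilde{\gamma}(v)=\frac{\ln v}{1-R}-\gamma(v)$, the $\ln v$ terms cancel and one is left with a constant plus the vanishing tail, giving $\tilde{\gamma}(v)\to-\frac{R}{1-R}\ln m(1)$; and writing $e^{\gamma(v)}=v^{1/(1-R)}m(1)^{R/(1-R)}\exp\{-\frac{1}{1-R}\int_v^\infty\cdots\}$ and multiplying by the asymptotic for $1-W(v)$ gives $(1-W(v))e^{\gamma(v)}\to1$. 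For $R>1$ the identical estimates apply with $v\downarrow0$ in place of $v\to\infty$ (now $1-W(v)\to0$ since $W(0^+)=1$), and $\int_0^v$ is controlled near $s=0$ by the same bound.

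Next I would prove surjectivity of $\gamma$, which is where the second identity enters. Since $\ell(0)=n(0)=1$, near $q=0$ we have $\ell(q)-n(q)\sim(\ell'(0)-n'(0))\,q$; the initial condition $n'(0)/(1-R)<\ell'(0)/(1-R)$ together with $\delta>0$ makes $\ell'(0)-n'(0)$ finite and nonzero, so $N'(0)/N(0)$ is a finite nonzero constant and $W$ vanishes \emph{linearly} at $q=0$, i.e.\ $W(s)\asymp|s-1|$ as $s\to1$ from the relevant side. Hence $\frac{1-W(s)}{sW(s)}\asymp|s-1|^{-1}$ near $s=1$, so the integral in $\gamma$ diverges to $+\infty$ as $v\to1$ from within the domain; since the $\ln v$ term stays bounded there, $\gamma(v)\to-\infty$ as $v\to1$. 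With the monotonicity and continuity already in hand, $\gamma$ is a continuous strictly monotone bijection onto $(-\infty,\infty)$, proving part (1).

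Part (2) is then immediate: as a $C^1$ strictly monotone bijection with nowhere-vanishing derivative, $\gamma$ has a $C^1$ inverse $h$ on $\mathbb R$ with $h'(u)=1/\gamma'(h(u))=(1-R)\,h(u)\,W(h(u))$, and, because $\gamma\to-\infty$ precisely as $v\to1$, we get $h(u)\to1$ as $u\to-\infty$. The main obstacle is the surjectivity step: because the last term of (\ref{eqn:node}) is of the form $0/0$ at $q=0$, one must first know that $n$ is $C^1$ up to $0$ with the prescribed value of $n'(0)$, and that $\ell-n$ keeps a constant (nonzero) sign on the relevant subinterval of $(0,1)$; these are exactly the properties of the ODE developed in Appendix~\ref{app:propertiesofna}, and granted them the linear vanishing of $W$ at $q=0$, hence the logarithmic divergence of the defining integral, yields both surjectivity of $\gamma$ and the boundary value $h(-\infty)=1$.
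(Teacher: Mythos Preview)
Your proof is correct and follows essentially the same approach as the paper's: both establish convergence of the defining integral at the far end via the asymptotic $1-W(v)\sim m(1)^{-R/(1-R)}v^{-1/(1-R)}$ (from $N(q)(1-q)^{1-R}=n(q)^{-R}$ and $n(1)=m(1)$), divergence near $v=1$ via the linear vanishing of $W$ (from $N'(0+)$ finite and nonzero), and then read off $\gamma'(v)=1/((1-R)vW(v))$, the ODE for $h$, and the two limits by direct substitution. The only cosmetic difference is that you derive the formula $N'/N=\frac{\delta^2}{2}(1-R)^2 q/(\ell-n)$ explicitly en route to $N'(0+)\neq0$, whereas the paper simply cites this constant as $\kappa\in(0,\infty)$.
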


\begin{proof}
Suppose $R<1$, the proof for $R>1$ being similar. First we want to show that
\[ \int^\infty \frac{1-W(s)}{sW(s)} ds < \infty, \hspace{10mm} \mbox{and} \hspace{10mm} \int_{1+} \frac{1-W(s)}{sW(s)} ds = \infty, \]
which, given $\lim_{s \uparrow \infty} W(s)=1$ and $\lim_{s \downarrow 1} W(s)=0$ is equivalent to
\[ \int^\infty \frac{1-W(s)}{s} ds < \infty; \hspace{20mm} \int_{1+} \frac{1}{W(s)} ds = \infty. \]
But $(1-q) N(q)^{1/(1-R)} \stackrel{q \uparrow 1}{\longrightarrow} n(1)^{-R/(1-R)}$ and so
$(1-W(s)) \sim n(1)^{-R/(1-R)} s^{-1/(1-R)}$ for large $s$ and the first integral is finite. Conversely, since $N'(0+) = \kappa$ for some $\kappa \in (0,\infty)$ we have $W'(1+) = \kappa^{-1}$ and $W(s) \sim (s-1) \kappa^{-1}$ for $s$ near 1. Since $1/(s-1)$ is not integrable near 1, the second integral explodes.

It follows that $\gamma$ is onto; the fact that $\gamma$ is increasing follows on differentiation. Indeed $\gamma'(v) = 1/((1-R)vW(v))$ and hence $h' = (1-R)hW(h)$. Also $h(-\infty):=\lim_{u \downarrow -\infty} h(u) =1$.

The first limit result for $\tilde{\gamma}$ follows immediately from the definition. For the second,
\begin{eqnarray*}
\lim_{v \uparrow \infty} e^{\gamma(v)}(1-W(v)) & = & \lim_{v \uparrow \infty} e^{-\tilde{\gamma}(v)}v^{1/(1-R)}(1-W(v)) = \lim_{v \uparrow \infty} e^{-\tilde{\gamma}(v)} \lim_{q \uparrow 1} N(q)^{1/(1-R)}(1-q) \\ & = & m(1)^{R/(1-R)} \lim_{q \uparrow 1} n(q)^{-R/(1-R)} = 1.
\end{eqnarray*}

\end{proof}

Define the candidate value function via
\begin{equation}
G(x,y,\theta,t) =
e^{-\beta t} \frac{x^{1 - R}}{1 - R} g \left(\frac{y
\theta}{x}\right), \hspace{20mm} x>0, \theta>0
\label{eq:vf}
\end{equation}
and extend the definition to $\theta=0$ and $-\theta y < x \leq 0$ by
\begin{eqnarray}
\label{eqn:G3defxleq0}
G(x,y,\theta,t) & = & e^{-\beta t}\frac{(x+y\theta)^{1-R}}{1-R}
\left(\frac{R}{\beta} \right)^R m(1)^{-R} \hspace{10mm} - \theta y < x \leq 0, \theta>0; \\
\label{eqn:G3deftheta=0}
G(x,y,0,t) & = & e^{-\beta t}\frac{x^{1-R}}{1-R} \left( \frac{R}{\beta} \right)^R \hspace{10mm} x \geq 0, \theta = 0.
\end{eqnarray}
Here continuity of $G$ at $x=0$ follows from the identity
\begin{equation}
\label{eqn:hlimit}
\lim_{z \uparrow \infty} z^{R-1} g(z) = \lim_{u \uparrow \infty} e^{-(1-R)u}h(u) = \lim_{v} e^{-(1-R)\gamma(v)}v = \lim_{v} e^{-(1-R)\tilde{\gamma}(v)} = m(1)^{-R}.
\end{equation}

\begin{lem}
\label{lem:2ndconcave}
Fix $y$ and $t$. Then $G=G(x,\theta)$ is concave in $x$ and
$\theta$ on $[0,\infty)\times[0,\infty)$. In particular, if $\psi(\chi) =
G(x - \chi y, y , \theta + \chi, t)$,
then $\psi$ is concave in $\chi$.
\end{lem}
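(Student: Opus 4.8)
The plan is to re-run the proof of Lemma~\ref{lem:concave} essentially unchanged, the only genuinely new work being to re-establish, for the $g$ built in this section through $\gamma$ and $h$, the structural facts recorded in Proposition~\ref{prop:gprop}: for $R<1$, $g$ is increasing and concave with $g(0)=(R/\beta)^R$; for $R>1$, $g$ is decreasing and convex with $g(0)=(R/\beta)^R$ and $g\ge 0$; and, in all cases, $Rg'(z)^2+(1-R)g(z)g''(z)\le 0$ for every $z>0$. Once these are available the Hessian calculation is verbatim that of Lemma~\ref{lem:concave}: $G_{xx}$, $G_{x\theta}$, $G_{\theta\theta}$ have the same expressions, $G_{xx}G_{\theta\theta}-G_{x\theta}^2=-e^{-2\beta t}x^{-2R}\theta^{-2}\frac{R}{(1-R)^2}[(1-R)g\,z^2g''+R(zg')^2]\ge 0$ by the displayed inequality, and $G_{\theta\theta}=e^{-\beta t}x^{-R-1}\frac{y^2}{1-R}g''\le 0$ (using $g''\le 0$ when $R<1$, and $g''\ge 0$ together with $\frac{1}{1-R}<0$ when $R>1$), so the Hessian $H_G$ is negative semidefinite and $G$ is concave on the open quadrant $\{x,\theta>0\}$. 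The sections $G(x,0)$ and $G(0,\theta)$ are scaled powers and hence trivially concave, and by the limit identity (\ref{eqn:hlimit}) (with Lemma~\ref{lem:gamma}) $G$ is continuous on all of $[0,\infty)\times[0,\infty)$, so its concavity on the open quadrant passes to the closed quadrant by density and continuity. Finally, for $\psi(\chi)=G(x-\chi y,y,\theta+\chi,t)$ we have $\psi''(\chi)=y^2G_{xx}-2yG_{x\theta}+G_{\theta\theta}=(-y,1)H_G(-y,1)^{T}\le 0$, giving the remaining claim.

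For the properties of $g$: recall $g(z)=(R/\beta)^R h(\ln z)$ where, by Lemma~\ref{lem:gamma}(2), $h'=(1-R)hW(h)=:w(h)$ and $h(-\infty)=1$; in particular $g(0)=(R/\beta)^R$. Since $W\ge 0$ and $h>0$, the sign of $h'$, hence the direction of monotonicity of $g$, is that of $1-R$, and $h$ ranges over $(1,\infty)$ when $R<1$ (so $g>0$) and over $(0,1)$ when $R>1$ (so $0<g<(R/\beta)^R$). Exactly as at (\ref{eqn:2OSFb}), $z^2g''(z)=(R/\beta)^R(h''-h')=(R/\beta)^R w(h)(w'(h)-1)$, and a short computation gives $z^2\big(Rg'(z)^2+(1-R)g(z)g''(z)\big)=(R/\beta)^{2R}w(h)\big[Rw(h)+(1-R)h(w'(h)-1)\big]$. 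Using that $w(v)$ has the sign of $1-R$, both the sign of $g''$ and the sign of this last expression are then dictated by the single inequality $w'(v)\le 1-\frac{Rw(v)}{(1-R)v}$ (which in particular gives $w'(v)<1$): one reads off $g''\le 0$ for $R<1$, $g''\ge 0$ for $R>1$, and $Rg'^2+(1-R)gg''\le 0$ in both cases. That inequality is the content of Proposition~\ref{prop:NWw}(3), invoked here in the present regime $q^*=1$, i.e. $h^*=\infty$: the equality point of that inequality recedes to infinity, so the strict inequality holds on the whole of $(1,\infty)$ for $R<1$, resp. $(0,1)$ for $R>1$.

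The step I expect to be the main obstacle is exactly this appeal to Proposition~\ref{prop:NWw} at $q^*=1$: one needs to know that $N$ is strictly monotone on $[0,1)$, that $W$ is well defined on the whole ray, and that the part-(3) inequality is strict throughout. In the first non-degenerate case this was underpinned by the second-order smooth-fit identity of Lemma~\ref{lem:smoothfit}, which pinned the equality point at the finite value $h^*=N(q^*)$; here there is no free boundary to smooth-fit, and instead one must push the analysis of (\ref{eqn:node}) (equivalently (\ref{eqn:wode})) all the way to the endpoint $q=1$, using $n(1)=m(1)>0$ and the local expansions near $q=1$ and $q=0$ already recorded in the proof of Lemma~\ref{lem:gamma} (namely $(1-q)N(q)^{1/(1-R)}\to n(1)^{-R/(1-R)}$ and $N'(0+)\in(0,\infty)$). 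Everything else --- the Hessian algebra, the continuity of $G$ at $x=0$ via (\ref{eqn:hlimit}), the passage from open-quadrant concavity to concavity on the closure, and the identity $\psi''(\chi)=(-y,1)H_G(-y,1)^{T}$ --- is routine and identical to Lemma~\ref{lem:concave}.
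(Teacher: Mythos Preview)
Your proposal is correct and follows the same approach as the paper, which in fact just says ``the proof follows similarly to the proof of Lemma~\ref{lem:concave}, and makes use of the fact $dh/du=(1-R)hW(h)$ proved in Lemma~\ref{lem:gamma}.'' You have fleshed out precisely what that sentence means.

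Two small remarks. First, the obstacle you flag is genuine but easily resolved: look at the proof of Proposition~\ref{prop:NWw}(3) in the appendix. It uses only that $(1-R)(n(q)-m(q))>0$ on $(0,q^*)$, and in the present regime $q^*=1$ this is exactly the content of Proposition~\ref{prop:crossings}(3). So the inequality $w'(v)<1-\frac{Rw(v)}{(1-R)v}$ holds strictly on the whole range with no boundary equality point, and the smooth-fit mechanism of Lemma~\ref{lem:smoothfit} is simply not needed here. Second, a minor slip: for $R>1$ one has $N(q)=n(q)^{-R}(1-q)^{R-1}\to 0$ as $q\uparrow 1$, so $h^*\to 0$ rather than $\infty$; the equality point recedes to the left endpoint of $(0,1)$ rather than to infinity. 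This does not affect the argument.
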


\begin{proof}
The proof follows similarly to the proof of Lemma~\ref{lem:concave}, and makes use of the fact  $dh/du = (1-R)hW(h)$ proved in Lemma~\ref{lem:gamma}.
\end{proof}

\begin{lem} \label{lem:2ndoperator}
Consider the candidate function constructed in (\ref{eq:vf})--(\ref{eqn:G3deftheta=0}).
Then for $x>0,\theta>0$, $\mathcal{L} G = 0$, and $\mathcal{M} G \geq 0$. Further, $\mathcal{M} G = 0$ at $(x=0,\theta>0)$ and $\sL G = 0$ at $x=0$ and at $\theta=0$.
\end{lem}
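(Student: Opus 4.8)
The plan is to mirror the proof of Lemma~\ref{lem:operator}, the one structural difference being that here $z^{*}=\infty$: there is no region $z\geq z^{*}$ on which $G$ reduces to the simple power form, so every identity must be verified for all $z\in(0,\infty)$ and then examined in the limit $z\uparrow\infty$ (equivalently $x\downarrow 0$). For $x>0,\theta>0$ I substitute the ansatz $G(x,y,\theta,t)=e^{-\beta t}\frac{x^{1-R}}{1-R}g(z)$, $z=y\theta/x$, into $\sL$; using $g(z)=(R/\beta)^{R}h(\ln z)$ and $h'=(1-R)hW(h)$ from Lemma~\ref{lem:gamma}(2), the same computation as in the proof of Lemma~\ref{lem:operator}(b) yields
\[
\sL G = \beta e^{-\beta t}\frac{x^{1-R}}{1-R}\left[h^{1-1/R}\left(1-\frac{w(h)}{(1-R)h}\right)-h+\left(\epsilon-\frac{\delta^{2}}{2}\right)w(h)+\frac{\delta^{2}}{2}w'(h)w(h)\right],
\]
with $w(v)=v(1-R)W(v)$ and $h=h(\ln z)$. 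This vanishes precisely because $w$ satisfies the ODE~(\ref{eqn:wode}); the derivation of (\ref{eqn:wode}) from the defining equation~(\ref{eqn:node}) for $n$ carried out in the proof of Proposition~\ref{prop:NWw} applies verbatim here (now with $q^{*}=1$, i.e.\ on the whole range of $W$).

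Next I treat $\sM G$. From (\ref{eq:mathcalM}), $\sM G=e^{-\beta t}x^{-R}y\,g'(z)\,\psi(z)$ with $\psi(z)=\frac{1+z}{1-R}-\frac{g(z)}{g'(z)}$. Since $zg'(z)=(R/\beta)^{R}w(h)$ one gets $g(z)/g'(z)=z/((1-R)W(h))$, hence $\psi(z)=\frac{1}{1-R}\bigl(1-z\,\frac{1-W(h)}{W(h)}\bigr)$. As in (\ref{eqn:psi}), $\psi'(z)=\frac{R}{1-R}+\frac{gg''}{(g')^{2}}=\frac{R}{1-R}+\frac{h[w(h)w'(h)-w(h)]}{w(h)^{2}}$, which is $\leq 0$ for $R<1$ and $\geq 0$ for $R>1$ by the inequality $w'(v)\leq 1-Rw(v)/((1-R)v)$ (the analogue of Proposition~\ref{prop:NWw}(3), valid on the whole range of $W$). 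It then remains to identify $\lim_{z\uparrow\infty}\psi(z)$: writing $v=h(\ln z)$ and $z=e^{\gamma(v)}$, Lemma~\ref{lem:gamma}(1) gives $z(1-W(h))=e^{\gamma(v)}(1-W(v))\to 1$ and $W(h)=W(v)\to 1$, so $\psi(z)\to 0$. For $R<1$, $\psi$ decreasing with $\psi(\infty)=0$ forces $\psi\geq 0$ on $(0,\infty)$ and $g'>0$, so $\sM G\geq 0$; for $R>1$, $\psi$ increasing with $\psi(\infty)=0$ forces $\psi\leq 0$ and $g'<0$, so again $\sM G\geq 0$.

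For the boundaries: at $\theta=0$ use (\ref{eqn:G3deftheta=0}), where $G$ is the $\theta$-free Merton value function, $G_{y}=G_{yy}=0$, and $\sL G=0$ is the elementary Merton identity. At $x=0,\theta>0$ use the extension (\ref{eqn:G3defxleq0}), where $G\propto(x+y\theta)^{1-R}$: then $\sM G=G_{\theta}-yG_{x}=0$ by the one-line computation of Lemma~\ref{deg1:inequal}, and the same algebra as in Lemma~\ref{lem:operator}(a) gives $\sL G$ proportional to $m(1)-m(z/(1+z))$, which at $z=\infty$ equals $m(1)-m(1)=0$. To see these boundary values are consistent with the computations for $x>0$ (so that $\sL G$ and $\sM G$ are unambiguously defined there), one checks that $G_{x}$, $G_{y}$, $G_{yy}$ extend continuously across $x=0$ to the values coming from (\ref{eqn:G3defxleq0}); this is a limit computation along $z\uparrow\infty$ using the refined asymptotics $(1-W(v))e^{\gamma(v)}\to 1$, $e^{-(1-R)\tilde{\gamma}(v)}\to m(1)^{-R}$ of Lemma~\ref{lem:gamma} and (\ref{eqn:hlimit}), together with $\psi(z)\to 0$.

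I expect the main obstacle to be precisely this $z\uparrow\infty$ (equivalently $x\downarrow 0$) analysis. Unlike in Lemma~\ref{lem:operator}, the sign of $\sM G$ is not pinned down at a finite critical ratio but only in the limit, so the conclusion $\psi\geq 0$ (resp.\ $\psi\leq 0$) rests on combining the monotonicity of $\psi$ with the delicate identity $\psi(z)\to 0$; and making $\sL G$, $\sM G$ meaningful and correct at $x=0$ requires matching all the relevant derivatives of $G$ across $x=0$. Both reduce to the asymptotics of $W$ and $\gamma$ near the right end of their range, which are exactly what Lemma~\ref{lem:gamma} and (\ref{eqn:hlimit}) provide; everything else is the same bookkeeping as in Lemma~\ref{lem:operator}.
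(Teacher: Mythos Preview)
Your proposal is correct and follows essentially the same approach as the paper. The paper's proof is extremely terse---it says ``The majority of the lemma follows exactly as in Lemma~\ref{lem:operator}'' and then only explicitly computes $\sM G|_{x=0}$ by showing $yG_x$ is continuous across $x=0$ via the limit $\lim_{x\downarrow 0}\bigl[z-\tfrac{z^{2}g'(z)}{(1-R)g(z)}\bigr]=\lim_{v}e^{\gamma(v)}(1-W(v))=1$; this is precisely your computation that $\psi(z)\to 0$, just written in different variables, and you have filled in the remaining details (monotonicity of $\psi$ on all of $(0,\infty)$, the sign of $g'$, and the boundary computations at $\theta=0$ and $x=0$) that the paper leaves implicit.
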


\begin{proof}
The majority of the lemma follows exactly as in Lemma~\ref{lem:operator}.

For $\sM G|_{x=0}$, note that
$G_\theta |_{x=0} = y G (1-R)/(x+y\theta)|_{x=0} = (1-R)G/\theta$.
Then, $yG_x |_{x=0-} = y G (1-R)/(x+y\theta)|_{x=0-} = (1-R)G/\theta$, whereas for $x > 0$,
\[ yG_x = \frac{y(1-R)G}{x} - \frac{g'}{g} \frac{y^2 \theta}{x^2} G = \frac{(1-R)G}{\theta} \left[z -\frac{z^2 g'(z)}{(1-R)g(z)} \right], \]
and then for fixed $(y,\theta)$
\[ \lim_{x \downarrow 0} \left[ z -\frac{z^2 g'(z)}{(1-R)g(z)} \right] =\lim_{u \uparrow \infty} e^u \left( 1 - \frac{h'(u)}{(1-R)h(u)} \right) =  \lim_{v} e^{\gamma(v)} \left( 1 - W(v) \right) =1. \]
\end{proof}

\begin{proof}[Proof of Theorem \ref{thm:2ndmaincase}.]
For an admissible strategy $(C,\Theta) = (C_t, \Theta_t)_{t \geq 0}$
define the
process $M(C,\Theta) = (M_t)_{t
\geq 0}$ via
\begin{equation}
\label{eq:Mdef}
M_t = \int_0 ^t e^{-\beta s}
\frac{C_s ^{1-R}}{1-R}ds + G\left(X_t, Y_t, 0, t\right).
\end{equation}
where $G$ is as given in (\ref{eq:vf})---(\ref{eqn:G3deftheta=0}).

\noindent{\em Case 1: $\theta_0=0$ and $x_0>0$: we show $V =G$.}
For these initial values the agent
does not own any units of
asset for sale and consumption can only be financed from liquid (cash)
wealth. Then
$(\Theta_t)_{t\geq0} = 0$, $dX_t = -C_t dt$ and
the problem is non-stochastic. The candidate optimal consumption
function is $C(x,y,0) = \beta x/R$ and the associated consumption
process is
$C^* _t = \frac{\beta}{R}
x_0
e^{-\frac{\beta}{R}t}$ with resulting wealth process $X^*_t = x_0
e^{-\frac{\beta}{R} t}$.

Then the value function is
\begin{eqnarray*}
\mathbb{E} \left[\int_0^\infty e^{-\beta t}
\frac{{C^{*}_t}^{1-R}}{1-R} dt \right] & = &
\int_0^\infty e^{-\beta t} \left(\frac{\beta}{R}\right)^{1-R}
\frac{ \left(e^{-\frac{\beta}{R} t}
{x_0} \right)^{1-R}}{1-R} dt \\
& = & \left(\frac{R}{\beta}\right)^R \frac{x_0 ^{1 - R}}{1 - R}
= G(x_0,y_0,0,0),
\end{eqnarray*}
where the last equality follows from (\ref{eqn:G3deftheta=0}). Hence,
we have $V \geq G$.

Now consider general admissible strategies. Let $M^0$ be given by
$M^0_t = M_t(C_t,0)$.
Applying It\^{o}'s formula to $M^0$, we get
\begin{eqnarray*}
M^0_t - M^0_0 & = & \int_0^t \left[ e^{-\beta s}
\frac{C_s ^{1-R}}{1-R} - C_s G_x + \alpha Y_s G_y
+ \frac{1}{2} \eta^2 Y_s ^2 G_{yy} + G_s \right]ds \\
& & + \int_0^t \eta Y_s G_y dB_s \\
& = & N_t^1 + N_t^3.
\end{eqnarray*}
Lemma~\ref{lem:2ndoperator} implies that $\mathcal{L}G = 0$ and hence
$N_t^1 = 0$.

Suppose $R<1$. Then we have $0\leq M^0_t\leq M^0_0 + N^3_t$, and the
local
martingale $N^3_t$ is now bounded from below and hence a
supermartingale. Taking expectations we conclude $\mathbb{E} (M^0_t)
\leq
M^0_0 = G(x_0, y_0, 0, 0)$, and hence
\begin{equation}
G(x_0, y_0, 0, 0) \geq
\mathbb{E} \int_0^t e^{-\beta s} \frac{{C_s}^{1-R}}{1-R} ds
+ \mathbb{E} G(X_t, Y_t, 0, t) \geq
\mathbb{E} \int_0^t e^{-\beta s} \frac{{C_s}^{1-R}}{1-R} ds,
\label{eq:7}
\end{equation}
Letting $t \to \infty$, (\ref{eq:7}) we conclude
\[
G(x_0, y_0, 0, 0) \geq \mathbb{E} \int_0^\infty e^{-\beta t}
\frac{{C_t}^{1-R}}{1-R} dt.
\]
and taking a supremum over admissible strategies we have $G \geq V$, and
hence $G=V$.

For $R>1$, a modification of the proof of Theorem~\ref{thm:deg1}
applies
here also and $G=V$.

\noindent{Case 2: $x_0=0$ and $\theta_0>0$: we show $V  \geq G$.}
Under the candidate optimal strategy defined in Theorem~\ref{thm:2ndmaincase}
the consumption
and sale processes evolve according to $C_t dt = - Y_t d\Theta_t$,
meaning that the investor finances consumption only from the sales of
the
endowed asset and wealth stays constant and identically zero. In this
case, the proposed
strategies in (\ref{eq:optimal2}) become
\[
\Theta^*_t = \theta_0e^{-\frac{\beta}{R} \phi t},
\hspace{6mm}
C^*_t = \frac{\beta}{R} \phi Y_t \Theta^*_t  =
\frac{\beta}{R} \phi y_0\theta_0
\exp\left\{\beta (\epsilon - \delta^2/2 - \phi/R)t
+ \delta \sqrt{\beta}B_t  \right\}.
\]
where temporarily we write $\phi = m(1)= \delta^2 R(1-R)/2 - \epsilon(1-R)
+ 1>0$.

The corresponding value function is
\begin{eqnarray*}
G^* & = & \mathbb{E} \left[ \int_0^ \infty e^{-\beta t}
\frac{{C^*_t}^{1-R}}{1-R} dt \right] \\
& = &\left(\frac{\beta}{R}\right)^{1-R} \frac{(\phi y_0 \theta_0)^{1-R}}{1-R}
\mathbb{E}\left[ \int_0^\infty
e^{-\beta t}
e^{ (1-R) \beta ( \epsilon - \frac{\delta^2}{2} - \frac{\phi}{R} )t +
 \delta \sqrt{\beta}(1-R) B_t }
dt \right] \\
& = & \left(\frac{\beta}{R}\right)^{1-R}
\frac{(\phi y_0 \theta_0)^{1-R}}{1-R}
\int_0^\infty
e^{ \{
(\epsilon (1 - R) - \frac{\delta^2}{2} R(1 - R) - 1)
- \frac{(1-R)}{R} \phi  \} \beta t}  dt \\
& = &  \left(\frac{R}{\beta}\right)^{1-R}
\frac{(\phi y_0\theta_0)^{1-R}}{1-R} \int_0^\infty e^{-(\beta \phi/R)t}
dt =
 \left(\frac{R}{\beta}\right)^R \frac{(y_0 \theta_0)^{1-R}}{1-R}
\phi^{-R} =  G(0,y_0,\theta_0,0).
\end{eqnarray*}

Then, under the candidate optimal strategy,
\[
G(0,y_0,\theta_0,0) =  \mathbb{E} \left[ \int_0^ \infty e^{-\beta t}
\frac{(C^*_t)^{1-R}}{1-R} dt \right],
\]
and we have $G(0,y_0,\theta_0,0) \leq V(0,y_0,\theta_0,0)$.

\noindent{Case 3: $x_0>0$ and $\theta_0>0$: we show $V  \geq G$.}
Let $M^* = M(C^*,\Theta^*)$ for the candidate optimal strategies in
Theorem~\ref{thm:2ndmaincase}.

From the form of the candidate value function we know that $G$
is $C^{1,2,1,1}$.
Then applying It\^{o}'s formula to $M^*$, we have
\begin{eqnarray}
M^*_{t}-M^*_{0} & = & \int_{0}^{t}\left[e^{-\beta
s}\frac{(C^*_s)^{1-R}}{1-R}-C^*_sG_{x}+\alpha
Y_sG_{y}+\frac{1}{2}\eta^{2}Y_s^{2}G_{yy}+G_{t}\right]ds\nonumber \\
&& + \int_{(0,t]}\left(G_{\theta}-Y_sG_{x}\right)d\Theta_s \label{eq:9} \\
&& +  \int_{0}^{t}\eta Y_s G_{y}dB_{s}\nonumber \\
& =: & N_{t}^{1}+N_{t}^{2}+N_{t}^{3}. \nonumber
\end{eqnarray}
Since $C^*_s = G_x^{-1/R} e^{\beta s/R}$ is optimal and, by
Lemma~\ref{lem:2ndoperator}, $\mathcal{L}G = 0$, we have $N^1_t = 0$.
Further, under the proposed strategies in (\ref{eq:optimal2}),
$d\Theta_t \neq 0$ if and only if
$X_t = 0$.
Then,
by Lemma~\ref{lem:2ndoperator}, $\sM G|_{x=0} = 0$ and $N^2_t = 0$.

The following Lemma is proved in the appendix.
\begin{lem}
\label{lem:mgc3}
(1) $N^3$ given by $N_{t}^{3}=\int_{0}^{t}\eta Y_s G_{y}(X^*_s, Y_s,
\Theta^*_s,s) dB_{s}$ is a martingale. \\
(2) $\lim_{t \uparrow \infty} \mathbb{E} [G(X^*_t, Y_t, \Theta^*_t,t)] =
0$
\end{lem}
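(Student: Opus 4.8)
The plan is to read everything off the explicit forms of $G$ and of the optimal processes $(X^*,\Theta^*)$, together with the behaviour of $g$ (equivalently $h$) near infinity recorded in Lemma~\ref{lem:gamma} and~(\ref{eqn:hlimit}). Two elementary bounds do most of the work. First, since $g(z)=(R/\beta)^Rh(\ln z)$ with $h(-\infty)=1$ and $z^{R-1}g(z)\to m(1)^{-R}$ as $z\uparrow\infty$, continuity and the finiteness of these limits give a constant $C$ with $g(z)\le C(1+z^{1-R})$ when $R<1$ and $g(z)\le Cz^{1-R}$ when $R>1$; feeding this into~(\ref{eq:vf}) and~(\ref{eqn:G3defxleq0})--(\ref{eqn:G3deftheta=0}) yields
\[
|G(x,y,\theta,t)|\le e^{-\beta t}\bigl(C_1 x^{1-R}\mathbf 1_{\{R<1\}}+C_2(y\theta)^{1-R}\bigr),\qquad x\ge 0,\ \theta\ge 0.
\]
Second, differentiating $G$ and using $h'=(1-R)hW(h)$ (Lemma~\ref{lem:gamma}) gives $yG_y=e^{-\beta t}x^{1-R}g(z)W(h(\ln z))$, and since $0<W<1$ (here $q^*=1$) we have $|yG_y|\le|1-R|\,|G|$ throughout, in particular at $x=0$.

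For part (1) it suffices to check $\E\int_0^t(\eta Y_sG_y(X^*_s,Y_s,\Theta^*_s,s))^2\,ds<\infty$ for each finite $t$. By the second bound this is controlled by $\eta^2(1-R)^2\E\int_0^tG(X^*_s,Y_s,\Theta^*_s,s)^2\,ds$, and by the displayed bound (with $X^*_s\le x_0$ on $\{s\le\tau\}$ and $X^*_s=0$ on $\{s>\tau\}$) this reduces to showing that $s\mapsto\E[(Y_s\Theta^*_s)^{2(1-R)}]$ is finite and locally bounded. When $R<1$, $\Theta^*_s\le\theta_0$ gives $\E[(Y_s\Theta^*_s)^{2(1-R)}]\le\theta_0^{2(1-R)}\E[Y_s^{2(1-R)}]$, which is continuous in $s$. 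When $R>1$ one splits at $\tau$; on $\{s>\tau\}$ one has $Y_s\Theta^*_s=Y_s\theta_0e^{-(\beta/R)m(1)(s-\tau)}$, and conditioning on $\mathcal F_\tau$ and using $\E[Y_s^{2(1-R)}\mid\mathcal F_\tau]=Y_\tau^{2(1-R)}e^{c(s-\tau)}$ for a constant $c$ expresses the quantity through bounded exponential factors over $[0,t]$; finiteness then follows since $\E[Y_s^{2(1-R)}]<\infty$ for every $s$ and every $R$.

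For part (2), consider first $R<1$: here $G\ge 0$, and since $\Theta^*_t\le\theta_0$ the displayed bound gives $\E[G(X^*_t,Y_t,\Theta^*_t,t)]\le C_1e^{-\beta t}x_0^{1-R}+C_2\theta_0^{1-R}\E[e^{-\beta t}Y_t^{1-R}]$; a direct computation using $m(1)=1-\epsilon(1-R)+\frac{\delta^2}{2}R(1-R)$ gives $\E[e^{-\beta t}Y_t^{1-R}]=y_0^{1-R}e^{-\beta m(1)t}$, and since $m(1)>0$ both terms vanish. For $R>1$ the quantity $(Y_t\Theta^*_t)^{1-R}$ blows up when $Y_t\Theta^*_t$ is small, so one argues by dominated convergence: on $\{t\le\tau\}$, $\Theta^*_t=\theta_0$ and $\E[|G|\mathbf 1_{\{t\le\tau\}}]\le C_2e^{-\beta t}\theta_0^{1-R}\E[Y_t^{1-R}]=C_2\theta_0^{1-R}y_0^{1-R}e^{-\beta m(1)t}\to 0$; on $\{t>\tau\}$, $Y_t\Theta^*_t=Y_t\theta_0e^{-(\beta/R)m(1)(t-\tau)}$, and conditioning on $\mathcal F_\tau$ with $\E[Y_t^{1-R}\mid\mathcal F_\tau]=Y_\tau^{1-R}e^{\beta(1-m(1))(t-\tau)}$ the exponents collapse (again by the identity for $m(1)$) to $\E[e^{-\beta t}(Y_t\Theta^*_t)^{1-R}\mid\mathcal F_\tau]=\theta_0^{1-R}e^{-\beta\tau}Y_\tau^{1-R}e^{-(\beta m(1)/R)(t-\tau)}$. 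Since $e^{-\beta(1-m(1))s}Y_s^{1-R}$ is a nonnegative supermartingale, optional stopping gives $\E[e^{-\beta\tau}Y_\tau^{1-R}\mathbf 1_{\{\tau<\infty\}}]\le y_0^{1-R}$, so the integrand is dominated by the integrable random variable $\theta_0^{1-R}e^{-\beta\tau}Y_\tau^{1-R}\mathbf 1_{\{\tau<\infty\}}$ and tends to $0$ pointwise as $t\uparrow\infty$; dominated convergence completes the proof.

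The main obstacle is the case $R>1$ of part (2): because neither $G$ nor the running utility is sign-definite there, one cannot just combine $\E[M^*_t]=M^*_0$ with monotone convergence as in Theorem~\ref{thm:deg1}, and one genuinely needs the sharp asymptotics of $g$ near infinity from Lemma~\ref{lem:gamma} to obtain the displayed bound on $|G|$, followed by the conditioning and optional-stopping argument to control the post-$\tau$ regime in which cash has been exhausted. The remaining steps (the log-normal moment identities for $Y$, the cancellation of exponents) are routine once the algebraic relation between $m(1)$, $\epsilon$ and $\delta^2$ is in hand.
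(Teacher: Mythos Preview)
Your argument is essentially correct, but it takes a genuinely different route from the paper. The paper simply says ``this follows exactly as in the proof of Lemma~\ref{lem:mg}'', and that proof works by writing $D_t=\ln G(X^*_t,Y_t,\Theta^*_t,t)$ and deriving a multiplicative representation
\[
G(X^*_t,Y_t,\Theta^*_t,t)=G(X^*_0,y_0,\Theta^*_0,0)\exp\Bigl\{-\int_0^t\frac{e^{-\beta s/R}}{1-R}\frac{G_x^{(R-1)/R}}{G}\,ds\Bigr\}H_t,
\]
where $H$ is the stochastic exponential of $(\eta Y G_y/G)\circ B$. The bound $|yG_y/G|\le|1-R|$ (which you also use) makes $H$ a true martingale with $\E[H_t^2]\le e^{(1-R)^2\eta^2 t}$, giving part~(1); for part~(2) one identifies the integrand as $n(W(h))$, which is bounded below by a strictly positive constant on $[0,q^*]=[0,1]$, so the exponential factor decays at least like $e^{-kt}$. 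This approach never looks at $X^*$, $\Theta^*$ or $\tau$ separately, and it handles $R<1$ and $R>1$ in one stroke.

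Your approach instead bounds $|G|$ directly by $e^{-\beta t}(C_1 x^{1-R}\mathbf 1_{\{R<1\}}+C_2(y\theta)^{1-R})$ via the asymptotics of $g$ from Lemma~\ref{lem:gamma}, and then uses the explicit formula for $\Theta^*$ from Theorem~\ref{thm:2ndmaincase} together with lognormal moment computations and a split at $\tau$. This is more hands-on but also more transparent about where the decay comes from. One small slip: you invoke optional stopping for the \emph{martingale} $e^{-\beta(1-m(1))s}Y_s^{1-R}$ but conclude a bound on $\E[e^{-\beta\tau}Y_\tau^{1-R}\mathbf 1_{\{\tau<\infty\}}]$; to close the gap either observe $e^{-\beta\tau}\le e^{-\beta(1-m(1))\tau}$ (since $m(1)>0$), or---cleaner---apply optional stopping directly to the nonnegative \emph{supermartingale} $e^{-\beta s}Y_s^{1-R}$, which you have already implicitly identified via $\E[e^{-\beta t}Y_t^{1-R}]=y_0^{1-R}e^{-\beta m(1)t}$.
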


The conclusion that $V \geq G$ now follows exactly as in the proof of Theorem~\ref{thm:maincase} but using Lemma~\ref{lem:mgc3} in place of Lemma~\ref{lem:mg}.

\noindent{Case 4: $x_0 \geq 0$ and $\theta_0>0$: $V  \leq G$.}
To complete the proof of the theorem, it remains to show for
$\theta_0 > 0$ and
general admissible strategies, we have
$V(x_0, y_0, \theta_0, 0) \leq G(x_0, y_0, \theta_0, 0)$.
Recall the definition of $M$ in (\ref{eq:Mdef}).

Applying the generalised It\^{o}'s
formula~\cite[Section 4.7]{gito} to $M_{t}$ leads to the expression in (\ref{eqn:Msde}) and
\[ M_{t}-M_{0}
  =  N_{t}^{1}+N_{t}^{2}+N_{t}^{3}+N_{t}^{4}.  
  \]
Lemma~\ref{lem:2ndoperator} implies that under general admissible
strategies, $N^1 _t \leq 0$, and $N^2 _t \leq 0$ with equality at $x=0$.
Consider the jump term,
\begin{equation}
N_{t}^{3}= \sum_{0 \leq s\leq
t}\left[G\left(X_{s},Y_{s},\Theta_{s},s\right)
-G\left(X_{s-},Y_{s},\Theta_{s-},s\right) - G_x (\Delta X)_s
 - G_\theta (\Delta \Theta)_s
\right]
\end{equation}
Using the fact that $(\Delta X)_s = -Y_s (\Delta \Theta)_s$ and
writing $\theta = \Theta_{s-}$, $x = X_{s-}$, $\chi = - (\Delta
\Theta)_s$ each non-zero jump in $N^3$ is of the form
\[
(\Delta N^3)_s = G(x+y \chi,y,\theta - \chi,s) - G(x,y,\theta,s) +
\chi \left[ G_\theta (x,y,\theta,s) - y G_x (x,y,\theta,s) \right].
\]
Note that by Lemma~\ref{lem:2ndconcave}, $G(x+y \chi,y,\theta -
\chi,s)$ is
concave in $\chi$ and hence
$(\Delta N^3) \leq 0$.

For the case $R<1$ the remainder of the proof follows as in the proof
of Theorem~\ref{thm:deg1}.
The case $R>1$ for general admissible strategies is covered in
Appendix~\ref{app:R>1}.

\end{proof}

\section{Comparative statics}

In this section, we provide comparative statics describing how the outputs of the model depend on market parameters. This section consists of five parts, analysis of the optimal threshold $z^*$, the value function $g$, the optimal consumption $C(x,y,\theta)$, the utility indifference price $p(x,y,\theta)$, and the cost of illiquidity $p^*(x,y,\theta)$, and are based on our numerical results. The cost of illiquidity, defined in (\ref{eq:defill}) below represents the loss in cash terms faced by our agent when compared with an otherwise identical agent with the same initial portfolio who is able to adjust her portfolio of the risky asset in either direction at zero cost.

The equations describing the function $n$ and the first crossing of $m$ are simple to implement in MATLAB, and then it also proved straightforward to calculate $h$ or $\gamma$ and thence the value function in the non-degenerate cases. Figures~\ref{fig:mnR<1} and \ref{fig:mnR>1} are generic plots of the various functions used in the construction of the value function. The parameter values are such that we are in the second non-degenerate case ($\epsilon \geq \delta^2 R$ and $\epsilon < \frac{\delta^2 R}{2} + \frac{1}{1-R}$ if $R<1$), but the figures would be similar for the first non-degenerate case ($0<\epsilon < \delta^2 R$ and $\epsilon < \frac{\delta^2 R}{2} + \frac{1}{1-R}$ if $R<1$). The two figures cover the cases $R<1$ and $R>1$ respectively.
For $R<1$, as plotted in Figure~\ref{fig:mnR<1}, $m$ and $n$ are monotone decreasing and $W$ is increasing on $[1,\infty)$ with $\lim_{v \to 1}W(v) = 0$ and $\lim_{v \to \infty} W(v) = 1$. Further, we have $\gamma(v)$ is increasing on $[1,\infty)$ and $g$ is concave and increasing. For $R>1$, as plotted in Figure~\ref{fig:mnR>1}, $m$ and $n$ are monotone increasing and $W$ is decreasing on $(0, 1]$ with $\lim_{v \to 0}W(v) = 1$ and $\lim_{v \to 1} W(v) = 0$. Finally, we have $\gamma(v)$ is decreasing on $(0,1]$ and $g$ is convex decreasing and convergent to zero as $z$ tends to infinity.

\begin{figure}
\begin{centering}
\includegraphics[scale=0.295]{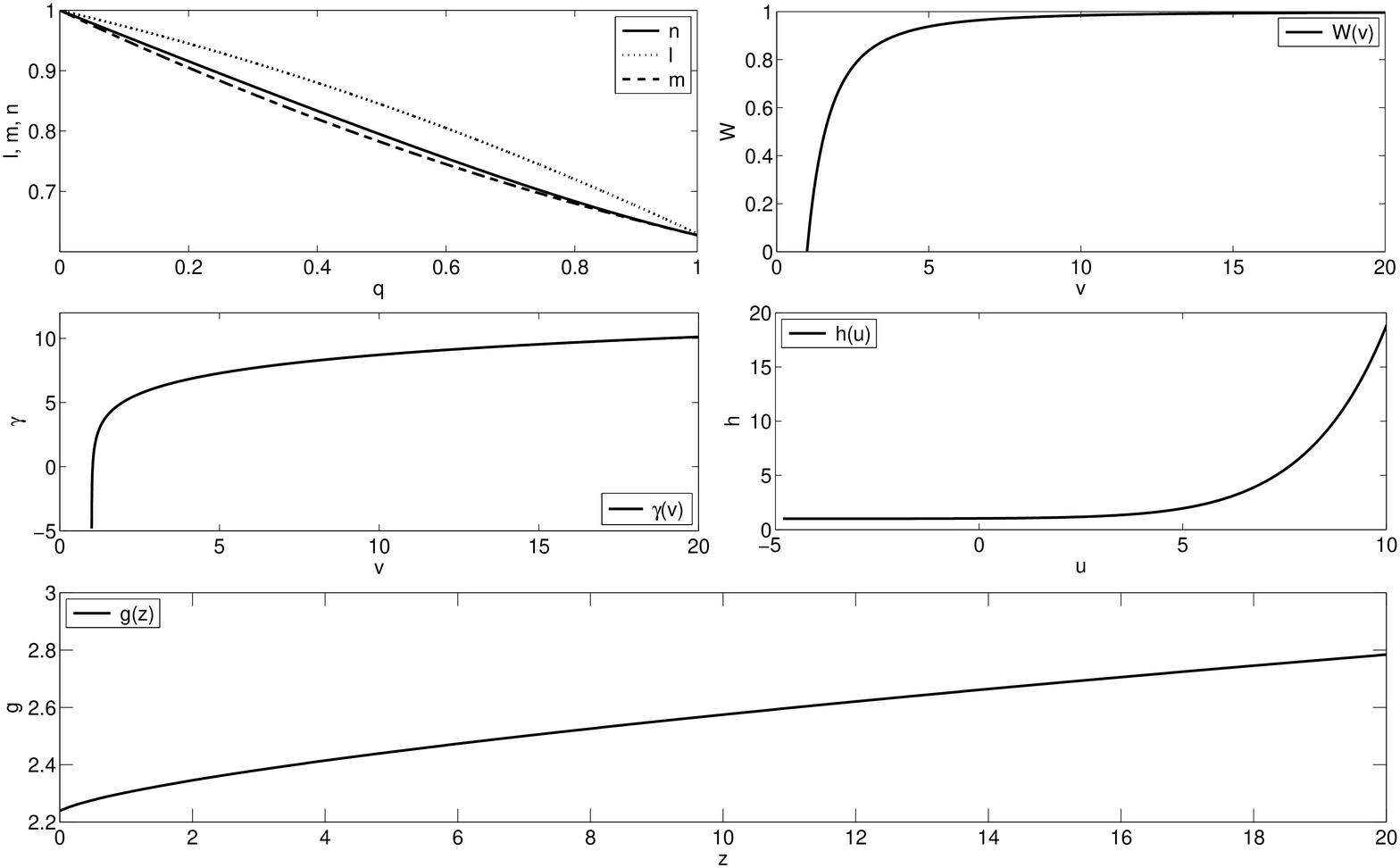}
\par\end{centering}

\caption{Transformations from $m, n, \ell$ to $W(v)$ to $\gamma(v)$ to $h(u)$ and $g(z)$ in the second non-degenerate scenario in the case $R<1$. Parameters are $\epsilon = 1$ $\delta = 1$, $\beta = 0.1$ and $R = 0.5$. For these parameters $m$ is monotonic decreasing.}
\label{fig:mnR<1}
\end{figure}

\begin{figure}
\begin{centering}
\includegraphics[scale=0.29]{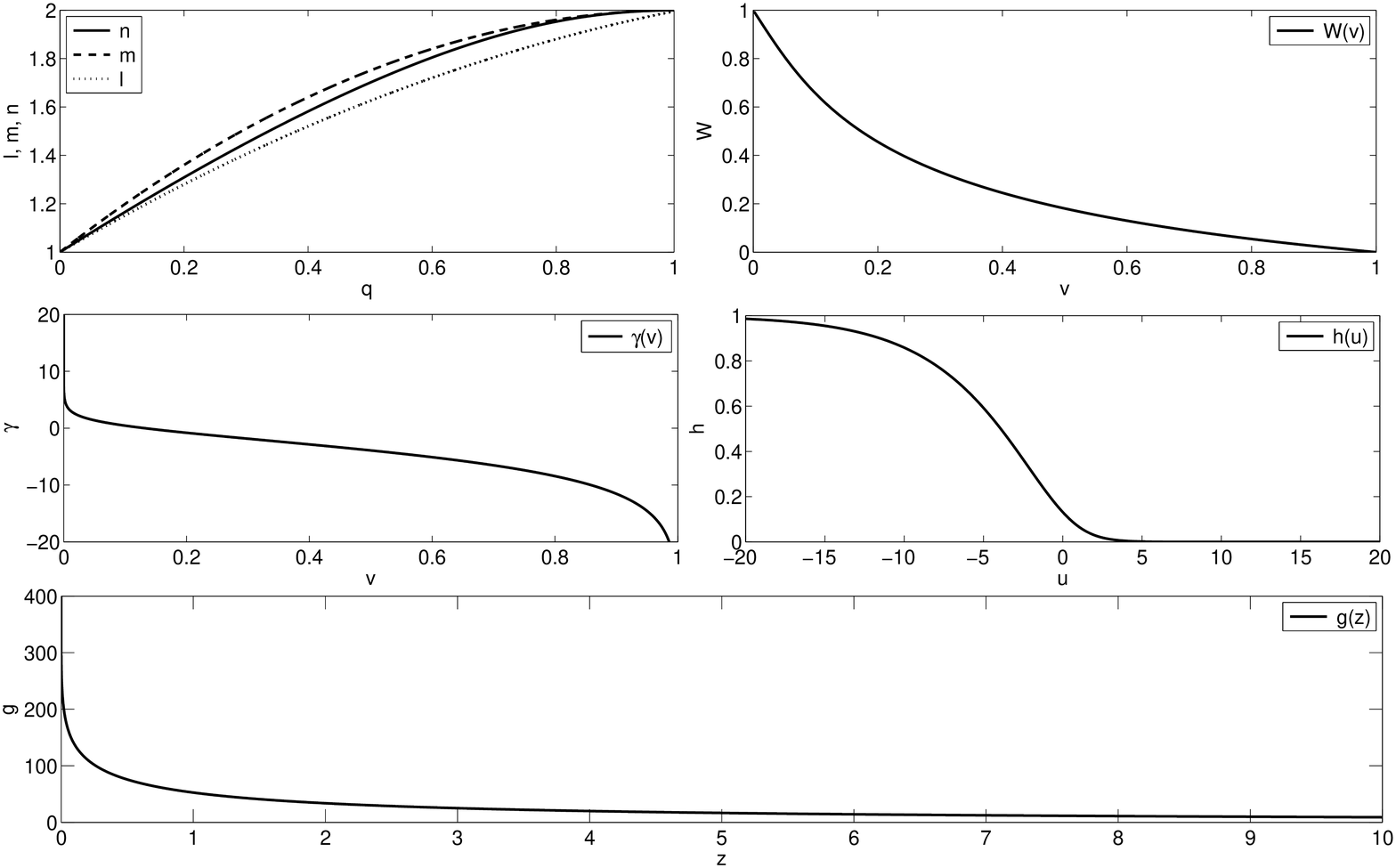}
\par\end{centering}

\caption{Transformations from $m, n, \ell$ to $W(v)$ to $\gamma(v)$ to $h(u)$ and $g(z)$ in the second non-degenerate scenario in the case $R>1$. Parameters are $\epsilon = 3$ $\delta = 1$, $\beta = 0.1$ and $R = 2$.}
\label{fig:mnR>1}
\end{figure}

Figures~\ref{fig:z*1} and \ref{fig:z*2} show that $z^*$ increases as mean return $\epsilon$ increases and decreases as volatility $\delta$ increases or risk aversion $R$ increases. As $\epsilon$ increases, the non-traded asset $Y$ becomes more valuable and it is optimal for the investor to wait longer to sell $Y$ for a higher return. For $\epsilon = 0$, when the endowed asset has zero return but with additional risk, the optimal strategy is to sell immediately to remove the risk. Similarly, as $\delta$ increases, the level of $z^*$ decreases as holding $Y$ involves additional risk. Hence, it is optimal for the investor to sell units of $Y$ sooner in order to mitigate this risk. As the risk aversion of the investor increases, she is less tolerant to the risk of the endowed asset and hence more inclined to sell $Y$ earlier. As $R \to 0$, (provided $\epsilon > 0$) we have $z^* \to \infty$, which implies the optimal strategy is never to sell the asset. In the limit the investor is not concerned about the risk of holding the risky asset. Conversely, as $R \to \infty$, we have $z^* \to 0$. In this case, the investor cannot tolerate any risks and it is therefore optimal to sell the asset immediately to arrive at a safe position.

\begin{figure}
\begin{centering}
\includegraphics[scale=0.3]{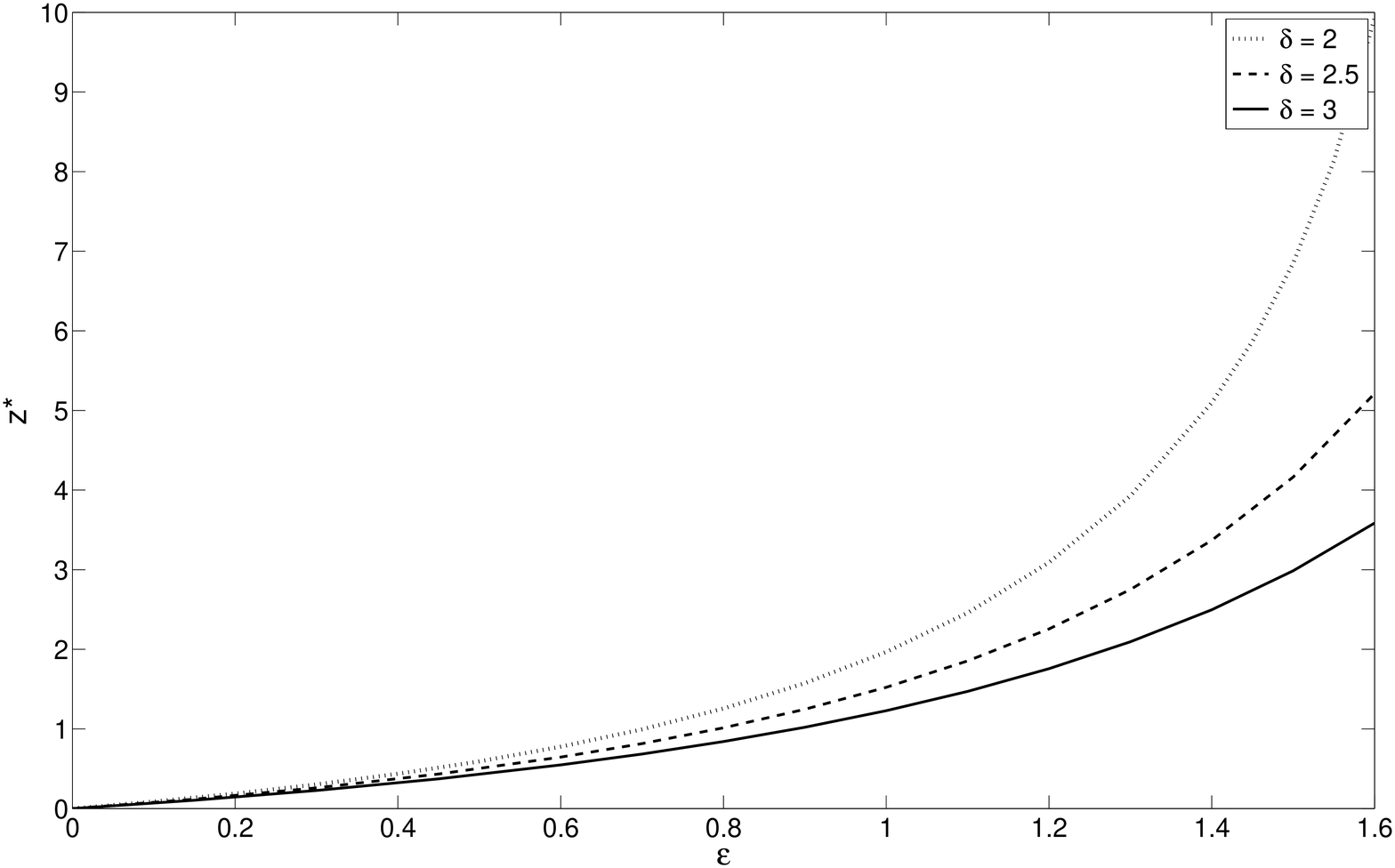}
\par\end{centering}

\caption{$z^{*}$ increases as $\epsilon$ increases or as $\delta$ increases. Here $\beta = 0.1$ and $R = 0.5$.}
\label{fig:z*1}
\end{figure}

\begin{figure}
\begin{centering}
\includegraphics[scale=0.32]{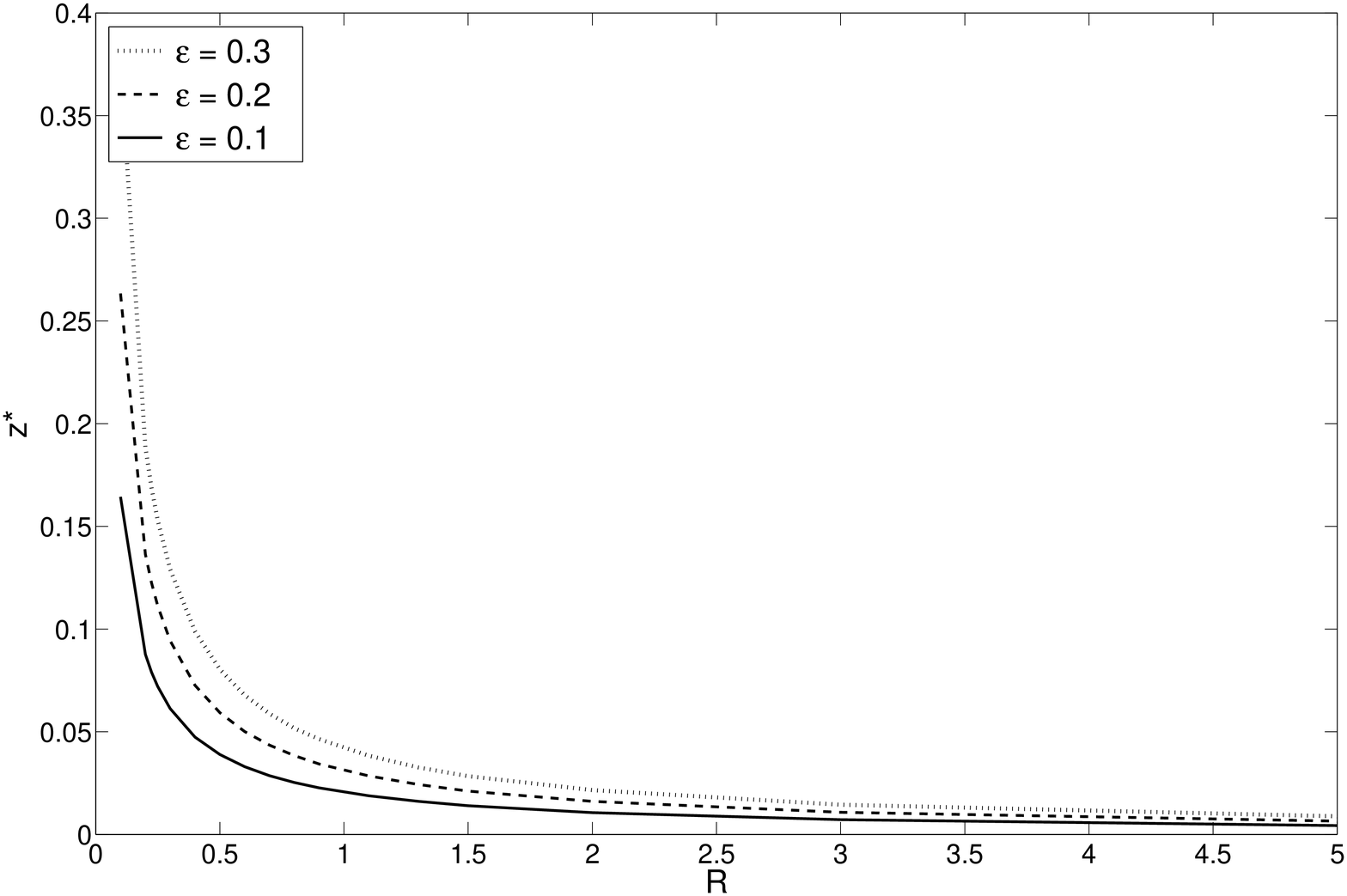}
\par\end{centering}

\caption{$z^{*}$ decreases as $R$ increases or as $\epsilon$ decreases. Here $\delta = 3$ and $\beta = 0.1$.}
\label{fig:z*2}
\end{figure}

The value function as expressed via $g$ in non-degenerate cases is plotted in Figures~\ref{fig:gep} and \ref{fig:gR} under different drifts and risk aversions. These figures show that $g$ is increasing in drift while $g$ has no monotonicity in risk aversion. (A similar plot shows that $g$ is decreasing in volatility.) As the non-traded asset becomes more valuable, the investor can choose optimal sale and consumption strategies which lead to a larger value function. (Further, as the asset becomes more risky, the additional risk makes the value function smaller.) Meanwhile, as $\epsilon$ increases, $z^*$ in Figure~\ref{fig:gep} is decreasing (and as $\delta$ increases, $z^*$ is increasing). These results are consistent with the results in described in the previous paragraph. At $z = z^*$, smooth fit conditions are satisfied. Observe also that for different values of drift, we nonetheless have that $g$ starts at the same point. This corresponds to the value function when $\theta_0 = 0$ whereby consumption is only financed by initial wealth and the problem is deterministic. In this case, we have $g(0) = (R /\beta)^R$.

\begin{figure}
\begin{centering}
\includegraphics[scale=0.29]{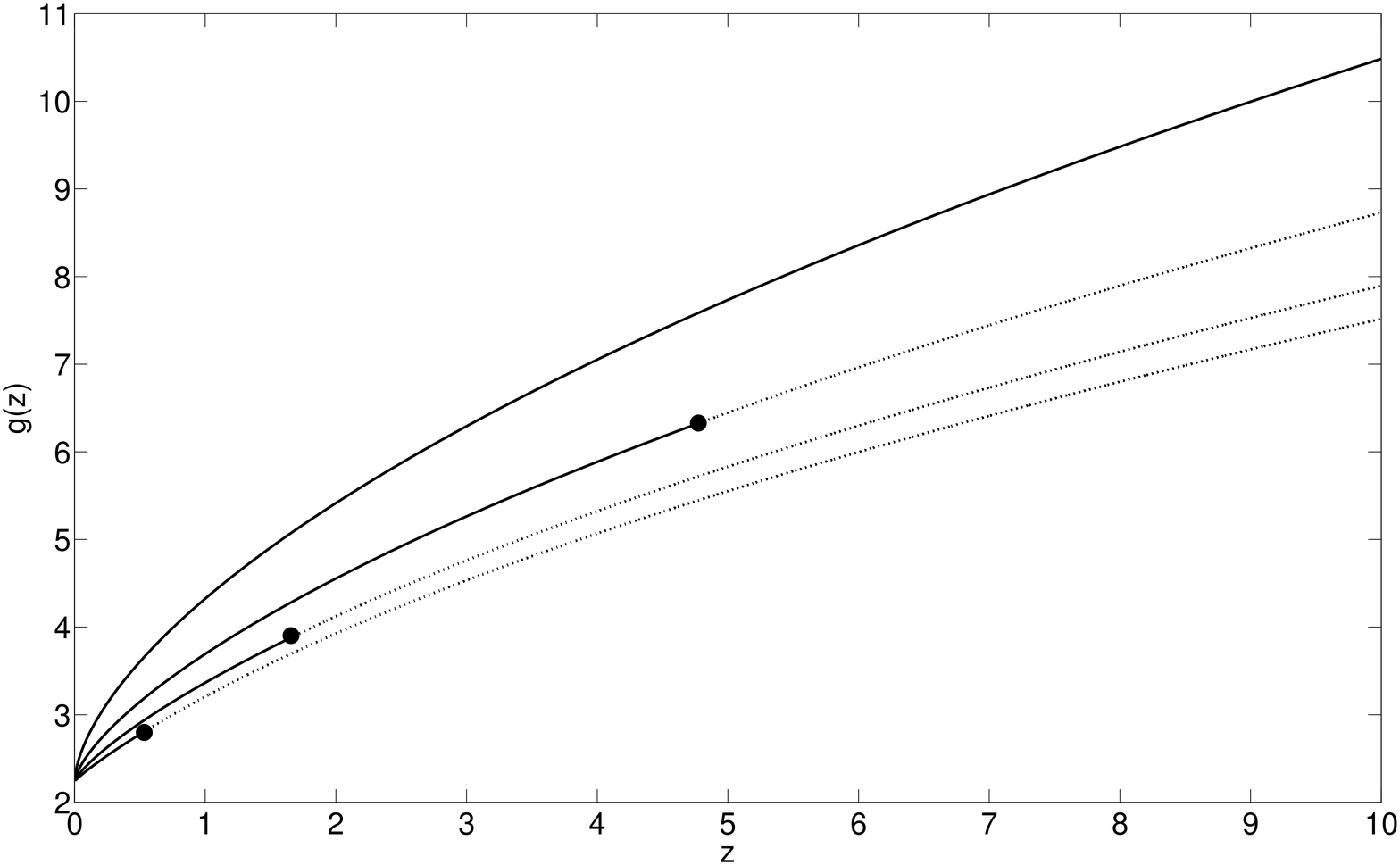}
\par\end{centering}

\caption{$g(z)$ with different $\epsilon$ in the first and second non-degenerate scenarios. Dotted line: $z \geq z^{*}$, solid line: $z \leq z^{*}$ and dots represent $z^{*}$. $\epsilon$ varies from top to bottom as 2, 1.5, 1, 0.5, with fixed parameters $\delta = 2$, $\beta = 0.1$ and $R = 0.5$. The top line is the value function $g$ in the second non-degenerate scenario given $\epsilon = \delta^2 R = 2$ and $z^*$ is infinite.}
\label{fig:gep}
\end{figure}



\begin{figure}
\begin{centering}
\includegraphics[scale=0.27]{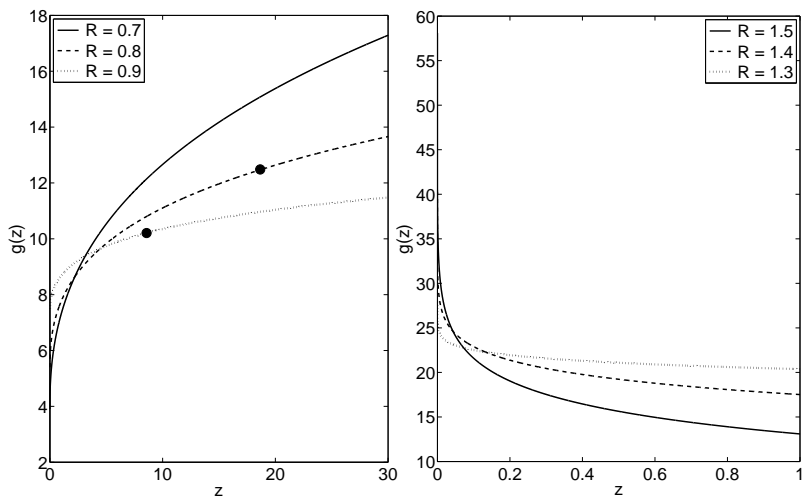}
\par\end{centering}
\caption{$g(z)$ with different risk aversion $R$ in the first and second non-degenerate scenarios. In the left graph, $R$ takes values in 0.7, 0.8 and 0.9. The rest of the parameters are $\epsilon = 3$, $\delta = 2$, $\beta = 0.1$. The critical risk aversion is $R = \epsilon/ \delta^2 = 0.75$. The dots represent finite $z^*$ and the solid line is the value function $g$ in the second non-degenerate scenario with infinite $z^*$. In the right graph, $R$ takes values in 1.3, 1.4 and 1.5 and the rest of the parameters are $\epsilon = 6$, $\delta = 2$ and $\beta = 0.1$.}
\label{fig:gR}
\end{figure}

Optimal consumption $C(x,y,\theta)$ is considered in Figures~\ref{fig:CthetaR}---\ref{fig:Cthetaep}. Figure~\ref{fig:CthetaR} plots the optimal consumption $C(1,1,\theta)$ as a function of endowed units $\theta$ and shows that the optimal consumption increases in $\theta$: as the size of the holdings of the non-traded asset $Y$ increases, the agent feels richer
and hence consumes at a faster rate. For $\theta = 0$, the optimal consumption
$C(x,y,0) = x g(0)^{-\frac{1}{R}}= \frac{\beta}{R} x$ is strictly positive and is financed from cash wealth. Figure~\ref{fig:CthetaR} also suggests that the optimal consumption $C(1,1,\theta)$ decreases in risk aversion. Given the set of parameters the critical risk aversion (i.e. the boundary between the two non-degenerate cases) is at $R = \epsilon/\delta^2 = 0.75$. For  the bottom two lines in Figure~\ref{fig:CthetaR} with $R>0.75$, we have $\epsilon<\delta^2 R$ and this falls into the first non-degenerate case with finite $z^*$. For $R\leq 0.75$, we have $\epsilon \geq \delta^2 R$, which is the second non-degenerate case with infinite $z^*$. As we see, there is no discontinuity in consumption with respect to risk aversion at either $R = 0.75$ or $R = 1$.
The optimal consumptions for different risk aversions differ primarily in the levels, and the dominant factor is the optimal consumption for $\theta = 0$. As argued above $C(x, y, 0) = \beta x/R$ is decreasing in $R$.

\begin{figure}
\begin{centering}
\includegraphics[scale=0.27]{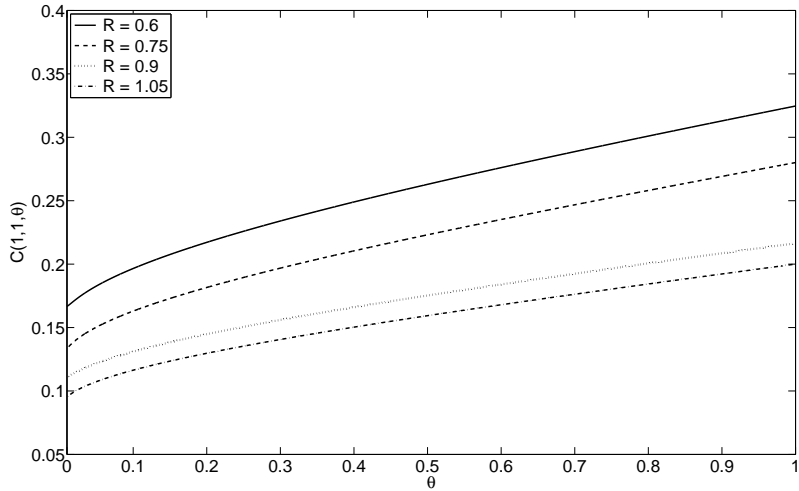}
\par\end{centering}

\caption{Optimal consumption $C(1,1,\theta)$ as $R$ varies. $R$ takes values in 0.6, 0.75, 0.9, 1.05 with parameters $\epsilon = 3$, $\delta = 2$, $\beta = 0.1$ and $\theta \in [0,1]$. The critical risk aversion is $R = \epsilon/\delta^2 = 0.75$. The top two lines correspond to the optimal consumption in the second non-degenerate scenario where $z^*$ is infinite under the condition that $\epsilon \geq \delta^2 R$. The bottom two lines correspond to the first non-degenerate case with finite $z^*$.}
\label{fig:CthetaR}
\end{figure}

Figure~\ref{fig:CxR} plots both consumption as a function of wealth $C(x, 1, 1)$ and the ratio of consumption to wealth $C(x,1,1)/x$ as a function of $x$ with different risk aversions. Note that this can only be shown for
$x > y\theta/z^* = 1/z^*$ since if $x < 1/z^*$ the agent makes an immediate sale of units of risky asset. The critical value of the risk aversion is $R = \epsilon/\delta^2 = 0.75$. For $R>0.75$, we have $z^*<\infty$ and $x^* = 1/z^*>0$ while for $R \leq 0.75$, $z^* = \infty$ and $x^* = 1/z^* = 0$. The results show that the optimal rate of consumption is an increasing function of wealth but that consumption per unit wealth is a decreasing function of wealth. (In the standard Merton problem, consumption is proportional to wealth.) As the agent becomes richer, she consumes more, but the fraction of wealth that she consumes becomes smaller. The explanation is that her endowed wealth is being held constant. By scaling we have that if both $x$ and $\theta$ are increased by the same factor, then consumption would also rise by the same factor, but here $x$ is increasing, but $\theta$ (and $y$) are held constant, and hence consumption increases more slowly than wealth. In the limit $x \to \infty$ we have $\lim_{x \to \infty} C(x,1,1) = \infty$ and $\lim_{x \to \infty} C(x,y,\theta)/x = g(0)^{-\frac{1}{R}} = \beta/R$.

\begin{figure}
\begin{centering}
\includegraphics[scale=0.27]{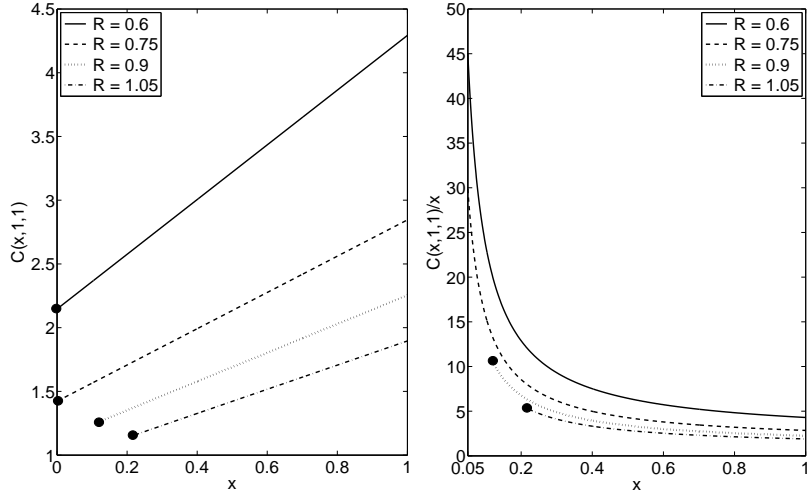}
\par\end{centering}

\caption{Optimal consumption $C(x,1,1)$ and $C(x,1,1)/x$ as $R$ varies. $R$ takes values in 0.6, 0.75, 0.9 and 1.05 with parameters $\epsilon = 3$, $\delta = 2$, $y_0 = 1$ and $\theta_0 = 1$. The dots represent $x^* = 1/z^*$ and the critical risk aversion is $R = \epsilon/\delta^2 = 0.75$. In both graphs, the top two lines correspond to the optimal consumptions in the second non-degenerate case with $x^* = 0$. The bottom two lines are the optimal consumptions in the first non-degenerate case with finite $z^*$, or equivalently, $x^*>0$. }
\label{fig:CxR}
\end{figure}

Figure~\ref{fig:Cthetaep} plots the optimal consumption $C(1,1,\theta)$ as a function of $\theta$ and $\epsilon$.
 Here we find a first surprising result: we might expect the optimal consumption $C(x,y,\theta)$ to be increasing in the drift, but this is not the case for large $\theta$. For an explanation of this phenomena, recall that the optimal exercise ratio $z^*$ is increasing in the drift. As the drift increases, the asset has a more promising return on average which makes the agent feel richer and consume at a higher rate. However, a larger drift also implies a larger $z^*$, indicating that the agent should postpone the sale of the risky asset. Hence, a larger drift involves more risk, and in order to mitigate this risk, the agent consumes less in the short term. Hence, the optimal consumption decreases in the drift for large $\theta$.
We find similar results if we consider $C(1,1,\theta)$ as a function of $\delta$.
 Optimal consumption is not necessarily decreasing in volatility and consumption can be increasing in volatility for large values of $\theta$.
 Analogously, if we plot $C(x,1,1)$ we find that consumption is a decreasing (increasing) function of return $\epsilon$ if wealth $x$ is small (large).

\begin{figure}
\begin{centering}
\includegraphics[scale=0.3]{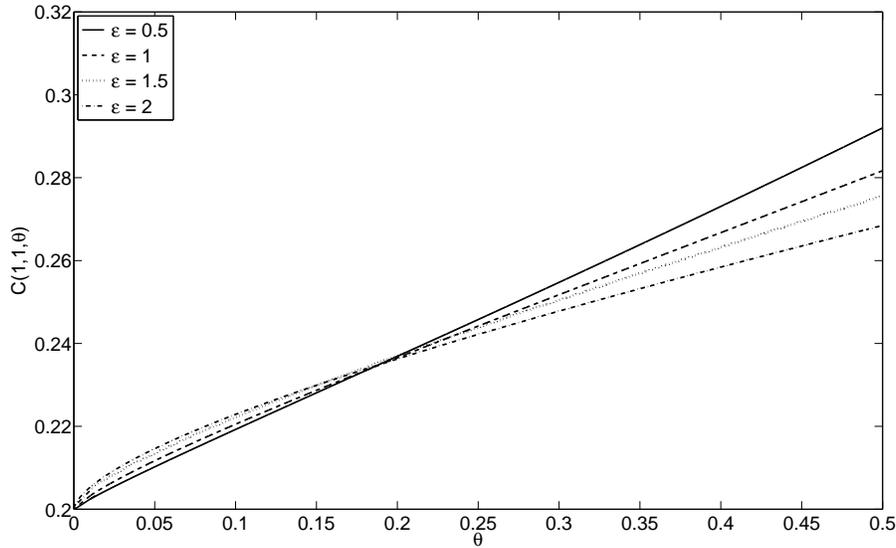}
\par\end{centering}

\caption{Optimal consumption $C(1,1,\theta)$ as $\epsilon$ varies. $\epsilon$ takes values in 0.5, 1, 1.5 and 2 with parameters $\delta = 2$, $\beta = 0.1$, $R = 0.5$, $x_0 = 1$ and $y_0 = 1$. The critical mean return is $\epsilon = \delta^2 R = 2$. When $\epsilon = 2$ we are in the second non-degenerate case.}
\label{fig:Cthetaep}
\end{figure}

Figures~\ref{fig:pxep}---\ref{fig:pxR} plot the utility indifference price or certainty equivalence value $p(x, y, \theta)$. Recall that in the second and third cases of Theorem~\ref{thm:4cases} the certainty equivalent value of the non-traded asset is given by
\[
p(x,y,\theta) = x \left[\frac{g\left(\frac{y\theta}{x}\right)}{g(0)}\right]^{\frac{1}{1-R}} - x
\]
Figures~\ref{fig:pxep} and \ref{fig:pxvol} consider the indifference price as a function of wealth. Dots in figures represent the optimal exercise ratio $z^* = y\theta/x$.
In each of the figures we choose a range of parameter values such that sometimes we are in the first non-degenerate case, and sometimes in the second non-degenerate case. In Figure~\ref{fig:pxep}, for $\epsilon<2$, we have $z^*<\infty$ and $x^* = 1/z^*>0$, and for $\epsilon \geq 2$, we have $z^* = \infty$ and $x^* = 0$.
We can see $p(x, 1, 1)$ is concave and increasing in $x$. It follows from
Theorem~\ref{thm:maincase} that $g(z) = (R/\beta)^R m(q^*)^{-R} (1 + z)^{1-R}$ for $z \geq z^*$. Further, under the condition that $0<\epsilon<\delta^2 R$ and $\epsilon<\frac{\delta^2}{2}R + \frac{1}{1-R}$, which ensures a finite exercise ratio,
\[
\lim_{x \to 0}p(x,y,\theta) = \lim_{x \to 0} x\left\{ \left[\frac{g\left(\frac{y\theta}{x}\right)}{g(0)}\right]^{\frac{1}{1-R}} - 1\right\} = \lim_{x \to 0}\left\{m(q^*)^{\frac{R}{R - 1}} (x + y\theta) - x\right\} = m(q^*)^{\frac{R}{R-1}} y\theta > y\theta.
\]
In that case, for $x = 0$, where no initial wealth is available to finance consumption, it is optimal for the investor to sell some units of the endowed asset $Y$ immediately so as to keep the ratio of the wealth invested in the endowed asset to liquid wealth below $z^*$, i.e. from the initial portfolio ($x = 0$, $\theta = \Theta_{0-}$)
the agent moves to ($x = X_{0+}$, $\theta = \Theta_{0+}$), where
$\Theta_{0+} = \frac{z^*}{1 + z^*} \Theta_{0-}$ and $X_{0+} = \frac{1}{1 + z^*}y\Theta_{0-}$. The monotonicity of $p(x, 1, 1)$ in $\epsilon$ and $\delta$ is also illustrated in Figures~\ref{fig:pxep} and \ref{fig:pxvol}: a higher mean return adds value to the asset, while the increasing volatility makes $Y$ more risky and reduces value. Also observe that for the drift larger than the critical value, the change in drift does not move the dot (representing the critical ratio) while for the drift smaller than the critical value, the dot  moves rightwards as drift increases. To the left of the dot, the agent should sell the endowed asset initially, while to the right of the dot, the agent should wait. As drift increases, the agent should wait longer for a higher return when selling the asset.

\begin{figure}
\begin{centering}
\includegraphics[scale=0.3]{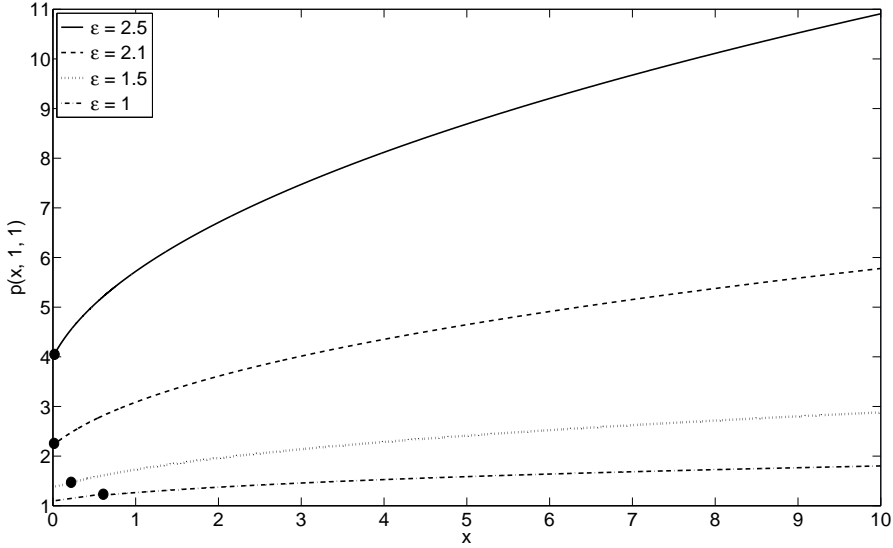}
\par\end{centering}

\caption{Indifference price $p(x,1,1)$ as $\epsilon$ varies. $\epsilon$ varies from top to bottom as 2.5, 2.1, 1.5, 1 with fixed parameters $\delta = 2$, $\beta = 0.1$, $R = 0.5$, $\theta_0 = 1$ and $y_0 = 1$. The dots represent $x^* = 1/z^*$ and the critical mean return is $\epsilon = \delta^2 R = 2$. }
\label{fig:pxep}
\end{figure}

\begin{figure}
\begin{centering}
\includegraphics[scale=0.28]{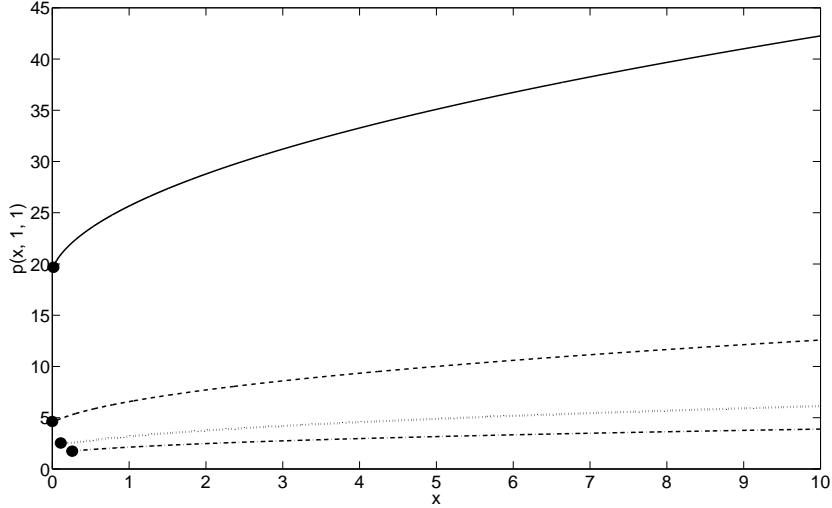}
\par\end{centering}

\caption{Indifference price $p(x,1,1)$. $\delta$ varies from top to bottom as 2.1, 2.4, 2.8 and 3.2 with fixed parameters $\epsilon = 3$, $\beta = 0.1$, $R = 0.5$, $\theta_0 = 1$ and $y_0 = 1$. The dots represent $x^* = 1/z^*$ and the critical volatility is $\delta = \sqrt{\epsilon/R} = 2.45$. The top two lines correspond to the indifference prices in the second non-degenerate case with $x^* = 0$. The bottom two lines are indifference prices in the first non-degenerate case with $x^*>0$.}
\label{fig:pxvol}
\end{figure}

Figure~\ref{fig:pthetaep} considers the indifference price $p(1,1,\theta)$ and unit indifference price $p(1,1,\theta)/\theta$ as a function of $\theta$. We see that $p(1, 1, \theta)$ is increasing in $\theta$ and for $\theta$ = 0, $p(1, 1, 0) = 0$, reflecting the fact that a null holding is worth nothing. We also have the unit price $p(1, 1, \theta)/\theta$ is decreasing in the units of asset $\theta$. For small holdings, the marginal price $\lim_{\theta \to 0} p(1,1,\theta)/\theta$ is infinite. As $\theta \to \infty$, the figures imply that the unit price $p(1, 1, \theta)/\theta$ tends to some constant larger than the unit price $y$ of $Y$:
\[
\lim_{\theta \to \infty} \frac{p(x,y,\theta)}{\theta} =
\lim_{\theta \to \infty} \frac{x \left[\frac{g\left(\frac{y\theta}{x}\right)}{g(0)}\right]^{\frac{1}{1-R}} - x}{\theta}
= \lim_{\theta \to \infty} \frac{m(q^*)^{\frac{R}{R - 1}} (x + y\theta) - x}{\theta} = m(q^*)^{\frac{R}{R - 1}}y > y,
\]
where the second equality follows since for $z \geq z^*$, we have $g(z) = (R/\beta)^R m(q^*)^{-R} (1+z)^{1-R}$.

\begin{figure}
\begin{centering}
\includegraphics[scale=0.28]{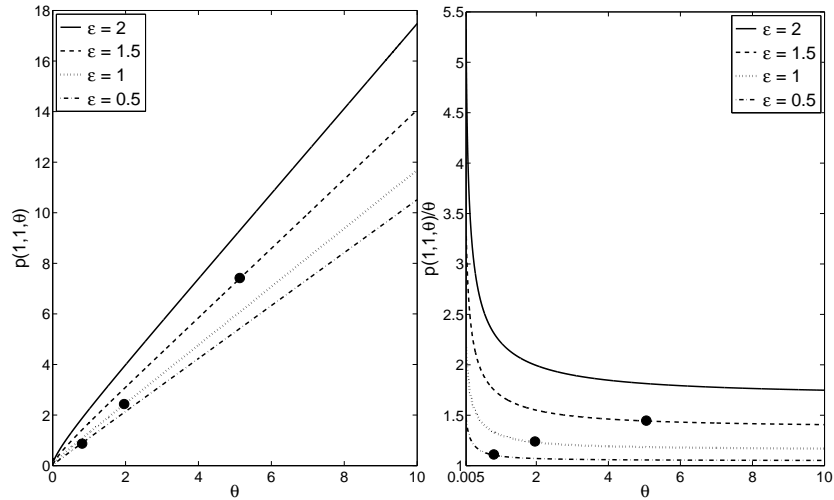}
\par\end{centering}

\caption{Indifference price $p(1,1,\theta)$ and unit price $p(1,1,\theta)/\theta$. $\epsilon$ varies from top to bottom as 2, 1.5, 1, 0.5 with fixed parameters $\delta = 2$, $\beta = 0.1$, $R = 0.5$, $x_0 = 1$ and $y_0 = 1$. The dots represent $\theta^* = z^*$ and the critical mean return is $\epsilon = \delta^2 R = 2$. The top line corresponds to the indifference price in the second non-degenerate case with infinite $z^*$.}
\label{fig:pthetaep}
\end{figure}

Figure~\ref{fig:pthetaep} also illustrates the monotonicity of $p$ in the drift parameter $\epsilon$ and we have $p(1,1,\theta)$ and $p(1,1,\theta)/\theta$ both increase in the drift.
Similarly, it can be shown that $p(1,1,\theta)$ and $p(1,1,\theta)/\theta$ are both decreasing in $\delta$, reflecting the increased riskiness of positions as volatility increases.

Figure~\ref{fig:pxR} plots the indifference price as a function of cash wealth for different risk aversions.
Naively we might expect the price to be monotone decreasing in risk aversion - a more risk averse agent will assign a lower value to a risky asset. However, the results show that this not the case, and for large wealths the utility indifference price is increasing in $R$. (If we fix wealth $x$ and consider the certainty equivalent value as a function of quantity $\theta$ then we find a similar reversal, and the certainty equivalent value is increasing in $R$ for small $\theta$.)

\begin{figure}
\begin{centering}
\includegraphics[scale=0.29]{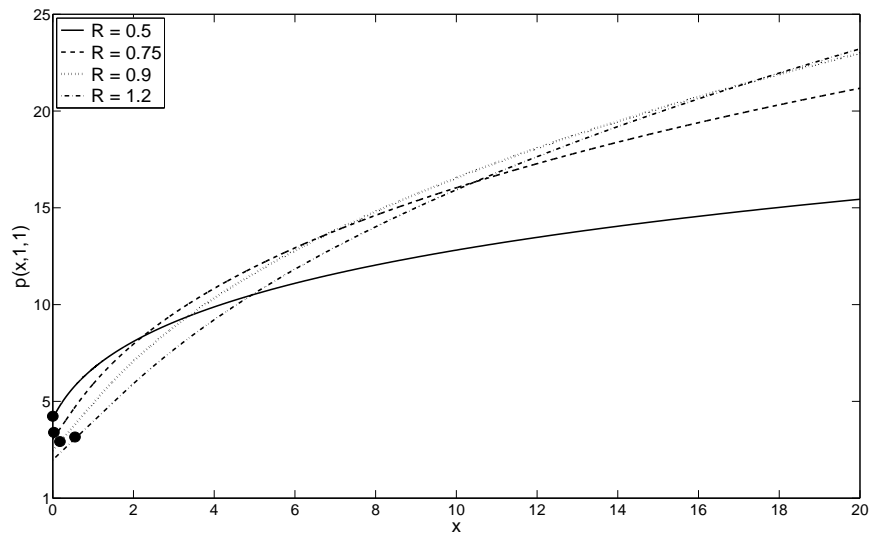}
\par\end{centering}

\caption{Indifference price $p(x,1,1)$. $R$ takes values in 0.5, 0.75, 0.9 and 1.2 with fixed parameters $\epsilon = 3$, $\delta = 2$, $\beta = 0.1$, $y_0 = 1$ and $\theta_0 = 1$. The dots represent $x^* = 1/z^*$ and the critical risk aversion is $R = \epsilon/ \delta^2 = 0.75$. The top two lines for $x\in[0,1]$ correspond to the indifference prices in the second non-degenerate case with $x^* = 0$. The bottom two lines are indifference prices in the first non-degenerate case with $x^*>0$.}
\label{fig:pxR}
\end{figure}

An explanation of this phenomena is as follows. Consider an agent with positive cash wealth and zero endowment of the risky asset. This agent consumes at rate $\beta x/R$; in particular, as the parameter $R$ increases, the agent consumes more slowly. The introduction of a small endowment will not change this result, and in general, an increase in the parameter $R$ postpones the time at which the critical ratio reaches $z^*$. (Although $z^*$ depends on $R$ also, this is a secondary effect.) Since the endowed asset is appreciating, on average, by the time the agent chooses to start selling the asset, it will be worth more. The total effect is to make the indifference price increasing in $R$. Similarly, the indifference price $p(1,1,\theta)$ and the unit indifference price $p(1,1,\theta)/\theta$ as functions of $\theta$ are not necessarily monotone in risk aversion.

Finally, we consider the impact of the illiquidity assumption. We do this by considering the value function of our agent who cannot buy the endowed asset and comparing it with the value function of an otherwise identical agent, but who can both buy and sell the endowed asset with zero transaction costs. Suppose parameters are such that we are in the second case of Theorem~\ref{thm:4cases}.
In the illiquid market, where $Y$ is only allowed for sale, Theorem~\ref{thm:maincase} proves the value function is
\begin{equation}
V_I(x, y, \theta, 0) = \frac{x^{1-R}}{1-R} g\left(\frac{y\theta}{x}\right) =
\displaystyle \sup_{(C,\Theta)} \mathbb{E} \left[\int_0 ^{\infty} e^{-\beta t} \frac{C_t ^{1-R}}{1-R}dt \right],
\label{eq:vi}
\end{equation}
where the newly introduced subscript $I$ stands for the value function in the illiquid market, in which the asset can only be sold.

In a liquid market such that $Y$ can be dynamically traded, wealth evolves as $dX_t = -C_t dt + \Pi_t dY_t/Y_t$. Here $(\Pi)_{t \geq 0}$ represents the portfolio process. We suppose the agent is endowed with $\Theta_0$ units of $Y$ initially and is constrained to keep $X$ positive. This is Merton's model and we know the optimal strategy is to keep a constant fraction of wealth in the risky asset. The initial endowment therefore only changes initial wealth and the value function is
\begin{equation}
V_L(x, y, \theta, 0) = \displaystyle \sup_{(C,\Pi)} \mathbb{E} \left[\int_0 ^{\infty} e^{-\beta t} \frac{C_t ^{1-R}}{1-R}dt \right] =
\frac{(x + y \theta)^{1-R}}{1-R}
\left[\frac{\beta}{R} - \frac{\alpha^2(1-R)}{2\sigma^2R^2}\right]^{-R},
\label{eq:vl}
\end{equation}
where the subscript $L$ stands for the value function in the liquid market.

Now we consider the cost of illiquidity.

\begin{defn}

The cost of illiquidity, denoted $p^* = p^*(x,y,\theta)$ is the solution to
\begin{equation}
V_{L} (x - p^*, y, \theta, t) = V_{I} (x,y,\theta, t).
\label{eq:defill}
\end{equation}
and represents the amount of cash wealth the agent who can only sell the risky asset would be prepared to forgo, in order to be able to trade the risky asset with zero transaction costs.
\label{def:cl}
\end{defn}

Equating (\ref{eq:vi}) and (\ref{eq:vl}), we can solve for $p^{*}$ to obtain
\begin{equation}
p^{*}(x,y,\theta) = x \left[1 + \frac{y\theta}{x} - g\left(\frac{y\theta}{x}\right)^{\frac{1}{1 - R}}
\left(\frac{\beta}{R} - \frac{\alpha^2(1-R)}{2\sigma^2R^2}\right)^{\frac{R}{1-R}} \right].
\label{eq:pstar}
\end{equation}
Consider (\ref{eq:pstar}) when $\theta = 0$, where the investor is not endowed any units of $Y$ initially, we have
\[
p^{*}(x,y,0) = x \left[1 -  \left(\frac{\beta}{R} - \frac{\alpha^2(1-R)}{2\sigma^2R^2}\right)^{\frac{R}
{1 - R}}
g(0)^{\frac{1}{1-R}} \right] = x \left[1 - \left(1 -  \frac{\epsilon^2 (1-R)}{2 \delta^2 R}  \right)^
{\frac{R}{1-R}} \right] > 0.
\]

Suppose $R<1$, $0 < \epsilon < \frac{\delta^2}{2} R + \frac{1}{1-R}$ and
$\epsilon < \delta^2 R$, so that $z^*$ is finite.
Figure~\ref{fig:costofilliquiditytheta} plots $p^*(1,1,\theta)$ for $\theta \in [0, 10]$. Notice that $p^*$ decreases initially, has a strictly positive minimum near 0.95 and
then increases, before becoming linear beyond $\theta = z^*$. Clearly, whatever the initial endowment of the agent, she has a smaller set of admissible strategies than an agent who can trade dynamically, and the cost of liquidity is strictly positive. For small initial endowments the agent would like to increase the size of her portfolio of the risky asset, and the smaller her initial endowment the more she would like to purchase at time zero. Hence the cost of illiquidity is decreasing in $\theta$ for small $\theta$. However, for large $\theta$, the agent would like to make an initial transaction (to reduce the ratio of wealth held in the risky asset to cash wealth to below $z^*$), and indeed since she is free to do so, her optimal strategy involves such a transaction at time zero. Hence for large wealth the cost of liquidity is proportional to $(x + y\theta)$, and hence is increasing in $\theta$. For this reason, the cost of illiquidity is a U-shaped function of $\theta$.

\begin{figure}
\begin{centering}
\includegraphics[scale=0.29]{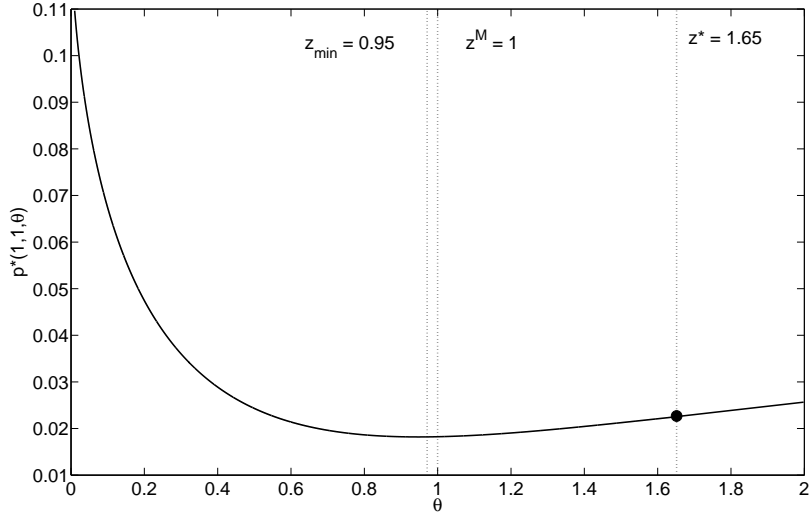}
\par\end{centering}

\caption{Cost of illiquidity $p^{*}(1,1,\theta)$ as $\theta$ varies. Parameters are $\epsilon = 1$, $\delta = 2$ and $R=0.5$. Here, we fix $x_0 = y_0 = 1$ and $\theta \in [0,1]$. For the corresponding Merton problem with dynamic trading in $Y$ we have that it is optimal to invest a constant fraction $z^M =  \frac{\epsilon}{\delta^2 R - \epsilon}$ in the risky asset. Recall Remark~\ref{rem:mertonline} and observe that $z^M \leq z^*$.}
\label{fig:costofilliquiditytheta}
\end{figure}

\appendix

\section{Properties of $n$}
\label{app:propertiesofna}

Recall the definitions of $m$ and $\ell$ and the differential equation
(\ref{eqn:node}) for $n$, and also the definitions of $\dummyq_\ell$,
$\dummyq_m$, $\dummyq_n$ and $\dummyq^*$. Define $\tilde{\dummyq} = \inf \{
\dummyq>0 : (1-R)n(\dummyq) \ge (1-R)\ell(\dummyq) \} \wedge 1$.
 Note that
$m\left(0\right)=1=\ell\left(0\right)$ and
$m\left(1\right)=1-\epsilon(1-R)+\delta^{2}R\left(1-R\right)/2
=\ell\left(1\right)$. The concave function $\ell$ is positive on
$\left(0,1\right)$ if $\ell(1) =
1-\epsilon(1-R)+\delta^{2}R\left(1-R\right)/2\geq0$.

\begin{lem}
\label{lem:n}
\begin{enumerate}

\item Define $\Phi$ via
\[ \Phi(\chi) = \chi^2 - (1-R) \left( \frac{\delta^2}{2} - \epsilon + \frac{1}{R}
\right) \chi - \epsilon \frac{(1-R)^2}{R}. \]
Then for $R \in (0,1)$,
$n'(0)$ is the smaller root of $\Phi(\chi)=0$
and for $R \in (1, \infty)$, $n'(0)$
is the larger root.

\item For $\dummyq \in (0,\dummyq_n \wedge \tilde{\dummyq})$, $n'(\dummyq)>0$
if and only if $n(\dummyq)<m(\dummyq)$, similarly $n'(\dummyq)=0$ if and only
if $n(\dummyq)=m(\dummyq)$.

\item If $\ell(1) \geq 0$ then $\tilde{\dummyq}=\dummyq_n = \dummyq_\ell = 1$.

\item If $\ell(1) < 0$ then $\tilde{\dummyq} = \dummyq_n = \dummyq_\ell <
\dummyq^*$.

\item If $0 \leq \dummyq^* < 1$ then $\dummyq^* > \epsilon/\delta^2 R$ and $(1
- R)m$ is increasing on $(\dummyq^*,1)$.

\end{enumerate}

\end{lem}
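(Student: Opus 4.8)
The plan is to establish the five parts in order, everything hinging on an algebraic recasting of the ODE (\ref{eqn:node}). For part (1) I would do a local analysis at $q=0$: inserting $n(q)=1+\chi q+o(q)$, using $\ell(0)=n(0)=1$ and the slope condition (which forces $\chi\neq\ell'(0)$, so that $q/(\ell(q)-n(q))\to(\ell'(0)-\chi)^{-1}$), then clearing the factor $\ell'(0)-\chi$ and simplifying, one sees $\chi$ must satisfy $\Phi(\chi)=0$; conversely each root gives a local solution, so the slope condition selects exactly one of the two. To identify which, I would compute $\Phi(\ell'(0))=-\frac{\delta^2(1-R)^2}{2R}<0$, so $\ell'(0)=(\frac{\delta^2}{2}-\epsilon)(1-R)$ lies strictly between the roots of the upward parabola $\Phi$; hence $\frac{n'(0)}{1-R}<\frac{\ell'(0)}{1-R}$ pins $n'(0)$ to the smaller root for $R<1$ and the larger root for $R>1$.

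Part (2) is the workhorse. Putting the bracket in (\ref{eqn:node}) over the common denominator $(1-q)(\ell(q)-n(q))$, its numerator simplifies, via the identity $\ell=m+q(1-q)\frac{\delta^2}{2}(1-R)$, to $m(q)-n(q)$, giving
\[
n'(q)=\frac{1-R}{R}\,\frac{n(q)\bigl(m(q)-n(q)\bigr)}{(1-q)\bigl(\ell(q)-n(q)\bigr)}.
\]
On $(0,q_n\wedge\tilde q)$ we have $n>0$, $1-q>0$, and $\ell-n$ of the same sign as $1-R$ (which is exactly what $q<\tilde q$ says), so $(1-R)/(\ell-n)>0$ and the sign of $n'(q)$ is that of $m(q)-n(q)$; part (2) follows at once.

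For parts (3) and (4) I would first note that $\ell$ is concave with $\ell(0)=1$, so $\ell>0$ on $[0,1)$ precisely when $\ell(1)\geq0$ --- giving $q_\ell=1$ in part (3) and a unique zero $q_\ell\in(0,1)$ (with $\ell(q_\ell)=0$) in part (4). The decisive step is a blow-up argument at $\tilde q$: if $\tilde q<1$ and $\ell(\tilde q)>0$, then $n(\tilde q)=\ell(\tilde q)>0$ and the formula above forces $(1-R)n'(q)\to-\infty$ as $q\uparrow\tilde q$ (because $m(\tilde q)-\ell(\tilde q)=-\tilde q(1-\tilde q)\frac{\delta^2}{2}(1-R)$ while $\ell-n$ shares the sign of $1-R$ and vanishes), which is incompatible with $(1-R)(n-\ell)$ being negative on $(0,\tilde q)$ and tending to $0$. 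In part (3) this excludes $\tilde q<1$, and together with the sign of $\ell-n$ and a Gr\"onwall bound on $n'/n$ over $[0,1-\delta]$ it gives $q_n=\tilde q=q_\ell=1$. In part (4) the same argument rules out $\tilde q<q_\ell$, while $n\geq0$ (the absorbing-at-zero convention) and $\ell<0$ on $(q_\ell,1]$ force $\tilde q\leq q_\ell$ and, by continuity, $n(q_\ell)=0$; a further $n'/n$-bound near any earlier zero of $n$ in $(0,q_\ell)$ upgrades this to $q_n=q_\ell=\tilde q$. Finally, for $q_\ell<q^*$: by part (2) any sign change of $m-n$ inside $(0,q_\ell)$ occurs where $n'=0$, hence where $(m-n)'=m'$; since $m'$ is strictly monotone there can be at most one sign change there, but $m-n$ has the same sign at $0+$ (by part (1)) and at $q_\ell$ (where $n=0$ and $m(q_\ell)$ has sign $-(1-R)$), so by parity there are none, whence $q^*>q_\ell$.

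For part (5) I would set $\psi=(1-R)(n-m)$, so that $q^*=\inf\{q>0:\psi(q)\leq0\}$ in all cases, with $\psi(0)=0$ and $\psi'(0)=(1-R)(n'(0)-m'(0))$ positive exactly when $\epsilon>0$, i.e. exactly when $q^*>0$ (the case $q^*=0$, where $\epsilon\leq0$, being immediate). If $0<q^*<1$ then $\psi>0$ on $(0,q^*)$ and $\psi(q^*)=0$, so $\psi'(q^*)\leq0$; and $n'(q^*)=0$ --- either because $q^*<q_n\wedge\tilde q$ and $n(q^*)=m(q^*)$ (part (2)), or because $q^*\geq q_n$, where $n\equiv0$ near $q^*$ --- so $\psi'(q^*)=-(1-R)m'(q^*)\leq0$. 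As $(1-R)m'(q)=\delta^2R(1-R)^2(q-\epsilon/\delta^2 R)$, this gives $q^*\geq\epsilon/\delta^2 R$ and the monotonicity of $(1-R)m$ on $(q^*,1)$; and strictness follows because if $q^*=\epsilon/\delta^2 R$ then $\psi'(q^*)=0$, while $n''(q^*)=0$ in both subcases and $(1-R)m''=\delta^2R(1-R)^2>0$, so $\psi''(q^*)<0$, which is inconsistent with $\psi>0$ on $(0,q^*)$. The hardest part, throughout, is the behaviour of $n$ near the singular set $\{\ell=n\}$ and the absorbing boundary $\{n=0\}$: one must verify that the blow-up of $n'$ there has precisely the sign that produces a contradiction, and carry the sign bookkeeping through both $R<1$ and $R>1$, where many of the inequalities flip. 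Granted that, the rest reduces to the elementary sign analysis afforded by the formula in part (2).
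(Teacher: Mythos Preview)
Your proposal is correct and follows essentially the same route as the paper. The explicit formula $n'=\frac{1-R}{R}\frac{n(m-n)}{(1-q)(\ell-n)}$ you derive for part (2) is exactly what the paper uses implicitly (it writes only ``follows immediately from the expression for $n'(q)$''), and your blow-up argument for parts (3)--(4) is the same mechanism as the paper's observation that $n'<\ell'$ whenever $\ell-n$ is small and positive. The one genuine difference is in part (4): to show $q^*>q_\ell$ you count sign changes of $m-n$ via a parity argument, whereas the paper simply splits into the two ranges $q^*\le q_m$ (impossible since there $m'<0=n'$, contradicting minimality) and $q_m<q^*\le q_\ell$ (impossible since there $m<0\le n$); the paper's split is shorter. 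Conversely, in part (5) you supply the second-order argument for strict inequality $q^*>\epsilon/\delta^2R$ that the paper omits.
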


\begin{proof}
(1) From the expression (\ref{eqn:node}) and l'H\^{o}pital's rule,
$n'(0)=\chi$ solves
\[ \chi = \frac{1-R}{R} - \frac{\delta^2}{2} \frac{(1-R)^2}{R}
\frac{1}{(1-R)(\frac{\delta^2}{2} - \epsilon)-\chi}, \]
or equivalently $\Phi(\chi)=0$. Further $\ell'(0) =  (1-R) \left(
\frac{\delta^2}{2} - \epsilon \right)$ and
\[ \Phi \left( (1-R) \left( \frac{\delta^2}{2} - \epsilon \right) \right) = -
\frac{\delta^2}{2} \frac{(1-R)^2}{R} < 0. \]
For $R<1$, we have $n'(0)< \ell'(0)$ by hypothesis, so that $n'(0)$ is the smaller root
of $\Phi$. For $R>1$, we have $n'(0)> \ell'(0)$ by hypothesis and $n'(0)$ is the larger
root of $\Phi$.

(2) This follows immediately from the expression for $n'(\dummyq)$.

(3)
Suppose $R<1$. Since $n'(0)< \ell'(0)$ we have
$\tilde{\dummyq}>0$. Notice that if $0<n(\dummyq)< \ell(\dummyq)$ and $\ell(\dummyq)-n(\dummyq)$ is
sufficiently small, then $n'(\dummyq) < \ell'(\dummyq)$. Hence $\tilde{\dummyq} \geq \dummyq_n$.
Further, if
$n\left(\dummyq\right)<\ell\left(\dummyq\right)-\phi$ for some $\phi>0$ on some
interval
$\left[\underline{\dummyq},\overline{\dummyq}\right]\subset\left(0,1\right)$, then
$n'\left(\dummyq\right)/n\left(\dummyq\right)$ is bounded below by a constant on
that interval and provided $n\left(\underline{\dummyq}\right)>0$ it follows
that $n\left(\overline{\dummyq}\right)>0$ also. Hence, if $\ell$ is
positive on $[0,1)$ then so is $n$ and
$\dummyq_n=1$. For $R>1$, we have $n'(0) > \ell'(0)$ and the result follows via
a similar argument.

(4)
Suppose $R<1$.
The same argument as above gives that $\tilde{\dummyq}=\dummyq_n = \dummyq_\ell$ and now
these quantities are less than one. Clearly $\dummyq_m< \dummyq_\ell$, and $m$ is
decreasing on $(0,\dummyq_m)$. We cannot have $\dummyq^* \leq \dummyq_m$ for then $n'(\dummyq^*)
- m'(\dummyq^*) > 0$ and $n(\dummyq^*)-m(\dummyq^*)=0$ contradicting the minimality of
$\dummyq^*$, nor can we have $\dummyq_m < \dummyq^* \leq \dummyq_\ell$ for on this region $m<0\leq
n$.

(5) We can only have $\dummyq^*<1$ if $m(1) > 0$ and $(1 - R) m'(1)>0$. For $R<1$
we must
have $n'(\dummyq^*) = 0 < m'(\dummyq^*)$. But $m$ has a minimum at
$\epsilon/\delta^2 R$, so $\dummyq^*> \epsilon/\delta^2 R$. For $R>1$, we must
have $n'(\dummyq^*) = 0 > m'(\dummyq^*)$. But $m$ has a maximum at
$\epsilon/\delta^2 R$, so $\dummyq^*> \epsilon/\delta^2 R$.

\end{proof}

\begin{proof}[Proof of Proposition~\ref{prop:crossings}.]
(1) Note that $\Phi(m'(0)) = (1-R)^2 \delta^2 \epsilon/2$. Then, if
$\epsilon<0$ we have $n'(0)<m'(0)$ for $R<1$ and $\dummyq^*=0$. Otherwise, for $R>1$,
we have $n'(0)>m'(0)$ and $\dummyq^*=0$. If $\epsilon = 0$ then
$n'(0)=m'(0)$ and
more care is needed.

Consider $R<1$. Since $\epsilon \leq 0$, $m$ is increasing. Suppose
$n\left(\hat{\dummyq}\right)>m\left(\hat{\dummyq}\right)$ for some $\hat{\dummyq}$
in $\left[0,1\right].$ Let $\underline{\dummyq}=\sup\left\{
\dummyq<\hat{\dummyq}:n\left(\dummyq\right)=m\left(\dummyq\right)\right\} $. Then on
$\left(\underline{\dummyq},\hat{\dummyq}\right)$ we have
$n'\left(\dummyq\right)<0<m'\left(\dummyq\right)$ and
$m\left(\hat{\dummyq}\right)-n\left(\hat{\dummyq}\right)
=m\left(\underline{\dummyq}\right)-n\left(\underline{\dummyq}\right)
+\int_{\underline{\dummyq}}^{\hat{\dummyq}}
[m'\left(y\right)-n'\left(y\right)]dy>0$,
a contradiction. 

For $R>1$, the only difference is that $m$ is decreasing
given  $\epsilon \leq 0$ and $n'(0)>m'(0)$.

(2) Consider first $R<1$ and suppose that
$0 < \epsilon< \min \{ \delta^{2}R , \frac{\delta^2}{2}R + \frac{1}{1-R} \}$.
Then $m'\left(1\right)>0$ and $m(1)>0$. Since $\epsilon>0$
we have $n'\left(0\right)>m'\left(0\right)$ and $n-m$ is positive
at least initially. Write
$n\left(\dummyq\right)=m\left(\dummyq\right)
+\delta^{2}\left(1-R\right)\dummyq b\left(\dummyq\right)/2$.
Then $n\left(\dummyq\right)\leq\ell\left(\dummyq\right)$ implies
$b\left(\dummyq\right)\leq1-\dummyq$.

Suppose $b\left(\dummyq\right)>0$ for all $\dummyq\in\left(0,1\right)$. Then
$n\left(\dummyq\right)\geq m\left(\dummyq\right)$ and $n'\left(\dummyq\right)<0$
so that $n\left(\dummyq\right)\geq n\left(1\right)=m\left(1\right)$ and
\begin{eqnarray*}
m\left(1\right)&=&m\left(\dummyq\right)-\left(1-\dummyq\right)\left(1-R\right)
\left(\epsilon-\delta^{2}R\right)
-\left(1-\dummyq\right)^{2}\delta^{2}R\left(1-R\right)/2 \\
& > & m\left(\dummyq\right)
+\phi\left(1-\dummyq\right)\delta^{2}\left(1-R\right)\dummyq/2,
\end{eqnarray*}
for $\dummyq>\epsilon/\delta^{2}R$ and $\phi<(\delta^{2}R-\epsilon) \min \{ \frac{2}{\delta^2}, \frac{R}{\epsilon} \}$. For
such $\dummyq$, $b\left(\dummyq\right)>\phi\left(1-\dummyq\right)$. Hence
\[
\frac{n'\left(\dummyq\right)}{n\left(\dummyq\right)}
=-\frac{1-R}{R}\frac{b\left(\dummyq\right)}
{\left(1-\dummyq\right)\left(1-\dummyq-b\left(\dummyq\right)\right)}\leq
-\frac{1-R}{R}\frac{\phi}{\left(1-\dummyq\right)\left(1-\phi\right)}
\]
and we must have $n'\left(1-\right)=-\infty$ contradicting the
fact that $n\left(\dummyq\right)\leq\ell\left(\dummyq\right)$. It follows
that we must have $b\left(\dummyq\right)=0$ for some $\dummyq\in\left(0,1\right)$.
At this point $n$ crosses $m$. Note that this crossing point is
unique: at any crossing point
$m'\left(\dummyq\right)>0=n'\left(\dummyq\right)$,
so that all crossings of $0$ in $\left(0,1\right)$ by $n-m$ are
from above to below.

For $R>1$, we have $m'(1)<0$ and $m(1)>0$. Since
$\epsilon > 0$, we have $n'(0)<m'(0)$ and $n-m$ is negative initially.
Let $n(\dummyq) = m(\dummyq) + \delta^2 (1-R)\dummyq b(\dummyq)/2$. Then
$n(\dummyq) \geq \ell(\dummyq)$ implies
$b(\dummyq) \leq 1 - \dummyq$. Suppose $b(\dummyq)>0$ for all $\dummyq \in (0,1)$, then it leads to the
same contradiction for $R<1$. It follows that $b(\dummyq) = 0$ for some $\dummyq \in (0,1)$,
where $n$ crosses $m$. At any crossing point $m'(\dummyq) < 0 = n'(\dummyq)$, so that $n$
crosses $m$ from below.

(3)
$\epsilon \geq \delta^{2}R$ and if $R<1$, $\epsilon < \frac{\delta^2}{2}R + \frac{1}{1-R}$.

Consider first $R<1$. Since $\epsilon>0$ we have that
$n'\left(0\right)>m'\left(0\right)$
and $n>m$ in a neighbourhood to the right of zero. Further, $m$
is decreasing and there are no solutions of $n=m$ since at any solution
we must have that $0=n'<m'<0$.

For $R>1$, we have $m$ is increasing and $n'(0)<m'(0)$. There are no solutions
of $n = m$ in that at any solution we should have $0 = n' > m' >0$.

(4)
$R<1$ and $\epsilon \geq \frac{\delta^2}{2}R + \frac{1}{1-R}$

Then $m\left(1\right)\leq0$. Since $m$ is decreasing at least until
it hits zero, and since $n'=0$ at a crossing point we cannot have
that $n$ crosses $m$ before it hits zero.
\end{proof}

\begin{proof}[Proof of Proposition ~\ref{prop:NWw}]
(1) $N$ solves
\[
N'\left(\dummyq\right)=\frac{\frac{1}{2}\delta^{2}\left(1-R\right)^{2}\dummyq N
\left(\dummyq\right)}{\ell\left(\dummyq\right)
-N\left(\dummyq\right)^{-1/R}\left(1-\dummyq\right)^{1-1/R}}
\]
and $N$ is strictly increasing for $R<1$. Otherwise, it is decreasing for $R>1$.
$W$ solves
\begin{equation}
W'\left(v\right)=\frac{\ell\left(W\left(v\right)\right)
-v^{-1/R}\left(1-W\left(v\right)\right)^{1-1/R}}{\frac{1}{2}\delta^{2}
\left(1-R\right)^{2}vW(v)}
\label{eqn:Wode}
\end{equation}

(2) Follows from (\ref{eq:885}) and  (\ref{eqn:Wode}).

(3) Consider first $R<1$. On $(0,\dummyq^*)$ we have $n(\dummyq) > m(\dummyq)$ and then $\ell(\dummyq)-n(\dummyq) <
\ell(\dummyq) - m(\dummyq) = \dummyq(1-\dummyq) \delta^2 (1-R)/2$.
Then $v^{-1/R}(1-W(v))^{1-1/R} = n(W(v))$ and
\[ v(1-R)W'(v) = \frac{\ell(W(v)) -
n(W(v))}{\frac{\delta^2}{2}(1-R)W(v)} < 1-W(v) \]
It follows that $w'(v) = (1-R)W(v) + v(1-R)W'(v) < 1 - RW(v)$. At $\dummyq^*$,
$n(\dummyq^*)=m(\dummyq^*)$ and the inequality becomes an equality throughout.

For $R>1$, we have $n(\dummyq)<m(\dummyq)$ on $(0,\dummyq^*)$ and
$\ell(\dummyq) - n(\dummyq) > \ell(\dummyq) - m(\dummyq)= \dummyq(1-\dummyq) \delta^2 (1-R)/2$.
Then again $v(1-R)W'(v) < 1-W(v)$ and $w'(v)  < 1 - RW(v)$ with equality at $h^*$.

Note that since $W$ is non-negative, $1-RW(h) \leq 1$.
\end{proof}

\section{The martingale property of the value function}
\label{app:mg}

\begin{proof}[Proof of Lemma~\ref{lem:mg}.]
First we want to show
the the local martingale
\[
N_{t}^{3}=\int_{0}^{t}\eta Y_s G_{y}(X^*_s, Y_s, \Theta^*_s,s)
dB_{s}
\]
is a martingale. This will follow if, for example,
\begin{equation}
\mathbb{E}\int_{0}^{t}\left(Y_s G_{y}(X^*_s, Y_s, \Theta^*_s,s)
\right)^{2}ds<\infty\label{eq:4}
\end{equation}
for each $t>0$.
From the form of the value function (\ref{eqn:Gdef}), we have
\begin{equation}
\label{eqn:Gy}
y G_y(x,y,\theta,s)
= e^{-\beta t}\frac{x^{1-R}}{1-R}zg'\left(z\right)
= G\left(x,y,\theta,t\right)
\frac{zg'\left(z\right)}{g\left(z\right)}
\leq (1-R)G\left(x,y,\theta,t\right)
\end{equation}
where we use that
$\frac{zg'\left(z\right)}{g\left(z\right)}
=\frac{w\left(h\right)}{h} = (1-R)W(h)$ and $0 \leq W(h) \leq 1$.

Define a process $\left(D_{t}\right)_{t\geq0}$ by
$ 
D_{t}=\ln G\left(X^*_{t},Y_{t},\Theta^*_{t},t\right)
$. 
Then $D$ solves
\begin{eqnarray*}
D_{t}-D_{0} &=&  \int_{0}^{t}\frac{1}{G}
\left(G_t - C^*_sG_{x}+\alpha
Y_sG_{y}+\frac{1}{2}\eta^{2}Y_s^{2}G_{yy}\right)ds\\
&& \hspace{5mm} +
\int_{0}^{t}\frac{1}{G}\left(G_{\theta}-Y_sG_{x}\right)d\Theta_s
+ \int_{0}^{t}\frac{1}{G}\eta Y_s G_{y}dB_{s}
- \int_{0}^{t}\frac{1}{2G^{2}}\eta^{2}Y_s^{2}G_{y}^{2}ds\\
&= &
-\int_{0}^{t}\frac{e^{-\frac{\beta}{R}s}}{1-R}\frac{1}{G}G_{x}^{\frac{R-1}{R}}ds
+\int_{0}^{t}\frac{1}{G}\eta
Y_s G_{y}dB_{s}
- \int_{0}^{t}\frac{1}{2G^{2}}\eta^{2}Y_s^{2}G_{y}^{2}ds.
\end{eqnarray*}
It follows that the candidate value function along the optimal trajectory has the
representation
\begin{equation}
\label{eqn:G0}
G\left(X^*_t,Y_t,\Theta^*_t,t\right)=
G\left(X^*_{0},y_{0},\Theta^*_{0},0\right) \exp\left\{
- \int_{0}^{t}\frac{e^{-\frac{1}{R}\beta
s}}{1-R}\frac{1}{G}G_{x}^{\frac{R-1}{R}}ds\right\} H_t
\end{equation}
where $H=\left(H_{t}\right)_{t\geq0}$ is the exponential martingale
\[ 
H_{t}= \sE \left( \frac{\eta Y_s G_y}{G} \circ B \right)_t :=
\exp\left\{
\int_{0}^{t}\frac{1}{G}\eta
Y_{s}G_{y}dB_{s}-\int_{0}^{t}\frac{1}{2G^{2}}
\eta^{2}Y_{s}^{2}G_{y}^{2}ds\right\}
.  
\]
Note that (\ref{eqn:Gy}) implies
$\frac{1}{G}\eta yG_{y} \leq\eta (1-R)$,
so that $H$ is indeed a martingale, and not merely a local martingale.

From (\ref{eqn:Gy}) and (\ref{eqn:G0}), we have
\begin{eqnarray*}
\left(yG_{y}\right)^{2}
& = &
G\left(X_{0},y_{0},\Theta_{0},0\right)^{2} \left(\frac{zg'
\left(z\right)}{g\left(z\right)}\right)^{2}\times\exp\left\{
-2\int_{0}^{t}\frac{e^{-\frac{1}{R}\beta
s}}{(1-R)}\frac{1}{G}G_{x}^{\frac{R-1}{R}}ds\right\} H_t^2
\\
& \leq  &
G\left(X_{0},y_{0},\Theta_{0},0\right)^{2} (1-R)^2
H_t^2.
\end{eqnarray*}
But
\[ H_t^2 =
\sE \left( \frac{2}{G}\eta
Y_{s}G_{y} \circ B \right)_{t}
\exp\left\{ \int_{0}^{t}\frac{1}{G^{2}}\eta^{2}
Y_{s}^{2}G_{y}^{2}ds\right\}
\leq
\sE \left( \frac{2}{G}\eta
Y_{s}G_{y} \circ B \right)_{t}
e^{(1-R)^2 \eta^2 t}. \]
Hence $\E[H_t^2] \leq e^{(1-R)^2 \eta^2 t}$ and it follows that
(\ref{eq:4}) holds
for every $t$, and hence that the local martingale
$N_{t}^{3}=\int_{0}^{t}\eta yG_{y}dB_{s}$ is a martingale
under the optimal strategy.

(ii) Consider
$ \int_{0}^{t}\frac{e^{-\frac{1}{R}\beta
s}}{1-R}\frac{1}{G}G_{x}^{\frac{R-1}{R}}ds $.
To date we have merely argued that this function is increasing in
$t$. Now we want to argue that it grows to infinity at least linearly.
By (\ref{eqn:Gdef}), we have
\begin{eqnarray*}
\frac{e^{-\frac{1}{R}\beta
t}}{1-R}\frac{1}{G}G_{x}^{\frac{R-1}{R}} & = &
\frac{\left[g\left(z\right)-\frac{1}{1-R}zg'\left(z\right)\right]^{\frac{R-1}{R}}}
{g\left(z\right)}
=  \frac{\left[h-\frac{1}{1-R}w\left(h\right)\right]^{\frac{R-1}{R}}}{h} \\
& = & (1-W(h))^{1-1/R} h^{-1/R} = n(W(h)) \geq \min \{ 1,n(W(h^*)) \} > 0.
\end{eqnarray*}
Hence from (\ref{eqn:G0}) there exists a constant $k>0$ such that
\[ 0 \leq (1-R) G(X^*_t,Y_t, \Theta^*_t, t) \leq (1-R) G(x_0, y_0, \theta_0,0)
e^{- k t} H_t \rightarrow 0 \]
and then $G \rightarrow 0$ in $L^1$, as required.
\end{proof}

\begin{proof}[Proof of Lemma~\ref{lem:mgc3}.]
This follows exactly as in the proof of Lemma~\ref{lem:mg}.

\end{proof}

\section{Extension to $R>1$}
\label{app:R>1}

\begin{proof}[Verification Lemmas for the case $R>1$.]
It remains to extend the proofs of the verification lemmas to the case $R>1$.
In particular we need to show that the candidate value function is an upper bound
on the value function. The main idea is taken from Davis and
Norman~\cite{Davis}.

Suppose $G\left(x,y,\theta,t\right)$ is the candidate value function.
Consider for $\varepsilon>0$,
\begin{equation} \widetilde{V}_{\varepsilon} (x,y,\theta,t) =
\widetilde{V}\left(x,y,\theta,t\right)=G\left(x+\varepsilon,y,\theta,t\right)
\label{eq:degvr}
\end{equation}
and $\widetilde{M}_t = \widetilde{M}_t(C, \Theta)$ given by
\[
\widetilde{M}_t = \int_0 ^t e^{-\beta s} \frac{C_s ^{1-R}}{1-R}ds +
\widetilde{V}\left(X_t, Y_t, \Theta_ t, t\right),
\]
Then,
\begin{eqnarray*}
\widetilde{M}_{t}-\widetilde{M}_{0} & = &
\int_{0}^{t}\left[e^{-\beta s}\frac{C_s^{1-R}}{1-R}
- C_s\widetilde{V}_{x}+\alpha Y_s\widetilde{V}_{y}+\frac{1}{2}\eta^{2}Y_s^{2}
\widetilde{V}_{yy}+\widetilde{V}_{t}\right]ds \\
& & + \int_{0}^{t}\left(\widetilde{V}_{\theta} -
Y_s\widetilde{V}_{x}\right)d\Theta_s \\
& & + \sum_{\substack{0 \leq s \leq t}}
\left[\widetilde{V}(X_s, Y_s, \Theta_s, s) - \widetilde{V}(X_{s-}, Y_{s-},
\Theta_{s-}, s-) - \widetilde{V}_x (\bigtriangleup X)_s
- \widetilde{V}_\theta (\bigtriangleup \Theta)_s \right] \\
& & +  \int_{0}^{t}\eta Y_s\widetilde{V}_{y}dB_{s}  \\
& = & \widetilde{N}_{t}^{1}+\widetilde{N}_{t}^{2}+\widetilde{N}_{t}^{3}
+\widetilde{N}_{t}^{4}.
\end{eqnarray*}
Lemma~\ref{deg1:inequal} (in the case $\epsilon \leq 0$ and otherwise
Lemma~\ref{lem:operator} or
Lemma~\ref{lem:2ndoperator})
implies
$\widetilde{N}_{t}^{1} \leq 0$ and
$\widetilde{N}_{t}^{2} \leq 0$. The concavity of $\widetilde{V}(x+y \chi,y,\theta -
\chi,s)$ in $\chi$ (either directly if $\epsilon \leq 0$, or using
Lemma~\ref{lem:concave} or Lemma~\ref{lem:2ndconcave})
implies $(\Delta
\widetilde{N}^3) \leq 0$.

Now define
stopping times $\tau_{n}=\inf\left\{
t\geq0:\int_{0}^{t}\eta^{2}Y_s^{2}\widetilde{V}_{y}^{2}ds\geq n\right\}$.
It follows from (\ref{eqn:Gy}) that $y\widetilde{V}_{y}$ is bounded and hence
$\tau_n \uparrow \infty$.
Then
the local martingale $(\widetilde{N}_{t \wedge \tau_n}^{4})_{t \geq 0}$ is a
martingale and
taking expectations we
have $\mathbb{E}\left(\widetilde{M}_{t\wedge\tau_{n}}\right)\leq\widetilde{M}_{0}$,
and hence
\[
\mathbb{E}\left(\int_{0}^{t\wedge\tau_{n}}e^{-\beta s}
\frac{C_{s}^{1-R}}{1-R}ds+
\widetilde{V}\left(X_{t\wedge\tau_{n}},
Y_{t\wedge\tau_{n}},\Theta_{t\wedge\tau_{n}},t\wedge\tau_{n}\right)\right)
\leq\widetilde{V}\left(x_{0},y_{0},\theta_{0},0\right). \]

In the case $\epsilon \leq 0$,
(\ref{eq:vdeg1}) and (\ref{eq:degvr}) imply
\[
\widetilde{V}\left(x,y,\theta,t\right)
= e^{-\beta t}\frac{\left(x
+\varepsilon\right)^{1-R}}{1-R} \left(1 +
\frac{y \theta }{x
+\varepsilon}\right)^{1-R} \left(\frac{R}{\beta}\right)^R
\geq
e^{-\beta t}\frac{\left(x
+\varepsilon\right)^{1-R}}{1-R} \left(\frac{R}{\beta}\right)^{R} \geq
\frac{\varepsilon^{1-R}}{1-R} \left(\frac{R}{\beta}\right)^R. \]
Thus $\widetilde{V}$ is bounded,
\(
\underset{n\rightarrow\infty}{\lim}\mathbb{E}\widetilde{V}
\left(X_{t\wedge\tau_{n}},Y_{t\wedge\tau_{n}},
\Theta_{t\wedge\tau_{n}},t\wedge\tau_{n}\right)
= \mathbb{E}\left[\widetilde{V}\left(X_{t},Y_{t},\theta_{t},t\right)\right],
\)
and
\[
\widetilde{V}\left(x_{0},y_{0},\theta_{0},0\right)\geq
\mathbb{E}\left(\int_{0}^{t}e^{-\beta s}\frac{C_{s}^{1-R}}{1-R}ds\right)
+\mathbb{E}\left[\widetilde{V}\left(X_t, Y_t, \Theta_t, t\right)\right].
\]
Similarly,
\[ \widetilde{V}(x,y,\theta,t) \geq e^{-\beta t} \frac{\varepsilon^{1-R}}{1-R}
\left(\frac{R}{\beta}\right)^R \]
and hence $\mathbb{E}\left[\widetilde{V}\left(X_t, Y_t, \Theta_t,
t\right)\right] \rightarrow 0$.
Then letting $t \to \infty$ and applying the monotone convergence theorem,
we have
\[ 
\widetilde{V}_{\varepsilon}\left(x_{0},y_{0},\theta_{0},0\right) =
\widetilde{V}\left(x_{0},y_{0},\theta_{0},0\right)\geq
\mathbb{E}\left(\int_{0}^{\infty}e^{-\beta s}\frac{C_{s}^{1-R}}{1-R}ds\right)
\] 
Finally let $\varepsilon \to 0$. Then $V \leq
\lim_{\varepsilon \downarrow 0} \widetilde{V} = G$. Hence, we have $V \leq G$.

The two non-degenerate cases are very similar, except that now
from (\ref{eqn:Gdef}) and (\ref{eq:degvr}),
\[
\widetilde{V}\left(x,
y,\theta,t\right)
= e^{-\beta t}
\frac{\left(x
+\varepsilon\right)^{1-R}}{1-R}
g\left(\frac{y \theta}{x +\varepsilon}\right) \geq
e^{-\beta t} \frac{\varepsilon^{1-R}}{1-R} \left(\frac{R}{\beta}\right)^R.
\]
where we use that for $R>1$, $g$ is decreasing with
$g\left(0\right) = (\frac{R}{\beta})^R > 0$. Hence
$\widetilde{V}$
is bounded, and the argument proceeds as before.

\end{proof}

\end{document}